\numberwithin{equation}{section}	
\theoremstyle{plain}
\newtheorem{example}{Example}[section]
\theoremstyle{plain}
\newtheorem{remark}{Remark}[section]
\theoremstyle{plain}
\newtheorem{prop}{Proposition}[section]
\theoremstyle{plain}
\newtheorem{lemma}{Lemma}[section]
\theoremstyle{plain}
\newtheorem{theorem}{Theorem}[section]
\theoremstyle{plain}
\newtheorem{corollary}{Corollary}[section]
\theoremstyle{remark}
\newtheorem{definition}{Definition}[section]
\theoremstyle{assumption}
\newtheorem{assumption}{Assumption}[section]
\DeclarePairedDelimiter{\abs}{\lvert}{\rvert}
\DeclareMathOperator*{\var}{\text{var}}
\DeclareMathOperator*{\esssup}{ess\,sup}
\titleformat{\section}[block]{\large\bfseries\centering}{\thesection.}{1em}{} 
\titleformat{\subsection}[block]{\large\bfseries}{\thesubsection.}{1em}{} 
\title{Optimal proportional reinsurance and investment for stochastic factor models} 
\author{%
\textsc{Brachetta M.}\thanks{Department of Economics, University of Chieti-Pescara, Viale Pindaro, 42 - 65127 Pescara, Italy.} \\[1ex]
\normalsize \href{mailto:matteo.brachetta@unich.it}{matteo.brachetta@unich.it}
\and
\textsc{Ceci, C.}\footnotemark[1] \thanks{Corresponding author.}\\[1ex]
\normalsize \href{mailto:c.ceci@unich.it}{c.ceci@unich.it}
}
\date{} 
\providecommand{\keywords}[1]{\textbf{\textit{Keywords:}} #1}
\providecommand{\jelcodes}[1]{\textbf{\textit{JEL Classification codes:}} #1}
\providecommand{\msccodes}[1]{\textbf{\textit{MSC Classification codes:}} #1}
\begin{document}
\maketitle

\begin{abstract}
\noindent In this work we investigate the optimal proportional reinsurance-investment strategy of an insurance company which wishes to maximize the expected exponential utility of its terminal wealth in a finite time horizon. Our goal is to extend the classical Cram\'er-Lundberg model introducing a stochastic factor which affects the intensity of the claims arrival process, described by a Cox process, as well as the insurance and reinsurance premia. Using the classical stochastic control approach based on the Hamilton-Jacobi-Bellman equation we characterize the optimal strategy and provide a verification result for the value function via classical solutions of two backward partial differential equations. Existence and uniqueness of these solutions are discussed.
Results under various premium calculation principles are illustrated and a new premium calculation rule is proposed in order to get more realistic strategies and to better fit our stochastic factor model. Finally, numerical simulations are performed to obtain sensitivity analyses.
\end{abstract}

\noindent\keywords{Optimal proportional reinsurance, optimal investment, Cox model, stochastic control.}\\
\noindent\jelcodes{G220, C610, G110.}\\
\noindent\msccodes{93E20, 91B30, 60G57, 60J75.}\\
\noindent \textit{Declarations of interest: none.}

\section{Introduction}

In this paper we investigate the optimal reinsurance-investment problem of an insurance company which wishes to maximize the expected exponential utility of its terminal wealth in a finite time horizon.
In the actuarial literature there is an increasing interest in both optimal reinsurance and optimal investment strategies, because they allow insurance firms to increase financial results and to manage risks. In particular, reinsurance contracts help the reinsured to increase the business capacity, to stabilize operating results, to enter in new markets, and so on. Among the traditional reinsurance arrangements the excess-of-loss and the proportional treaties are of great importance. The former was studied in~\cite{shengrongzhao:optreins},~\cite{lietal:robustXL} and references therein. The latter was intensively studied by many authors under the criterion of maximizing the expected utility of the terminal wealth. Beyond the references contained therein, let us recall some noteworthy papers: in~\cite{liuma:optreins} the authors considered a very general model, also including consumption, focusing on well posedness of the optimization problem and on existence of admissible strategies; in~\cite{liangetal:optreins} a stock price with instantaneous rate of investment return described by an Ornstein-Uhlenbeck process has been considered ; in~\cite{liangbayraktar:optreins} the problem has been studied in a partially observable framework by introducing an unobservable Markov-modulated risk process; in~\cite{zhuetal:reins_defaultable} the surplus is invested in a defaultable financial market; in~\cite{liangetal:commonshock} and~\cite{yuen_liang_zhou:commonshock} multiple dependent classes of insurance business are considered. All these works may be considered as attempts to extend both the insurance risk and the financial market models.
In all these articles we can recognize two different approaches to dealing with the surplus process of the insurance company: some authors considered it as a diffusion process approximating the pure-jump term of the Cram\'er-Lundberg model (see for example~\cite{baiguo:optreins,caowan:optreins,zhangyu:optreins,gu_yang:CEVoptreins,lietal:robustXL} and references therein).
This approach is validated by means of the famous Cram\'er-Lundberg approximation (see~\cite{grandell:risk}). Other authors (see~\cite{liuma:optreins,zhuetal:reins_defaultable,liangetal:optreins,shengrongzhao:optreins,yuen_liang_zhou:commonshock} and references therein) took into account the jump term using a compound Poisson risk model with constant intensity, that is the classical Cram\'er-Lundberg model. On the one hand this is the standard model for nonlife insurance and it is simple enough to perform calculations, on the other it is too simple to be realistic (as noticed by~\cite{hipp:stoch.control_applications}).\\
As observed by Grandell, J. in~\cite{grandell:risk}, more reasonable risk models should allow the insurance firm to consider the so called \textit{size fluctuations} as well as the \textit{risk fluctuations}, which refer respectively to variations of the number of policyholders and to modifications of the underlying risks.

This paper aims at extending the classical risk model by modelling the claims arrival process as a doubly stochastic Poisson process with intensity affected by an exogenous stochastic process $\{Y_t\}_{t\in[0,T]}$. This environmental factor lead us to a reasonably realistic description of any risk movement (see~\cite{grandell:risk},~\cite{schmidli:2018risk}). For example, in automobile insurance $Y$ may describe road conditions, weather conditions (foggy days, rainy days, \dots), traffic volume, and so on.
While in~\cite{liangbayraktar:optreins} the authors considered a Markov-modulated compound Poisson process with the (unobservable) stochastic factor described by a finite state Markov chain,  we consider a stochastic factor model where the  exogenous process follows a general diffusion. An additional feature is that the insurance and the reinsurance premia are not evaluated using premium calculation principles, contrary to the majority of the literature; moreover, they turn out to be stochastic processes depending on $Y$.
Furthermore, we highlight that under the most frequently used premium calculation principles (expected value and variance premium principles) some problems arise: firstly, the optimal reinsurance strategy turns out to be deterministic (this is a limiting factor because the main goal of our paper is to consider a stochastic factor model); secondly, the optimal reinsurance strategy does not explicitly depend on the claims intensity. In order to fix these problems, we will introduce a new premium calculation principle, which is called \textit{intensity-adjusted variance premium principle}.


Finally, the financial market is more general than those usually considered in the literature, since it is composed by a risk-free bond and a risky asset with Markovian rate of return and volatility. For instance, in~\cite{baiguo:optreins},~\cite{caowan:optreins},~\cite{zhangyu:optreins} and~\cite{liangbayraktar:optreins} the authors used a geometric Brownian model, in~\cite{gu_yang:CEVoptreins} and~\cite{shengrongzhao:optreins} a CEV model. Nevertheless, some authors considered other general models: in~\cite{irgens_paulsen:optcontrol} and~\cite{lietal:robustXL} the risky asset follows a jump-diffusion process with constant parameters, in~\cite{liangetal:optreins} the instantaneous rate of investment return follows an Ornstein-Uhlenbeck process, in~\cite{zhuetal:reins_defaultable} the authors used the Heston model, in~\cite{xu_zang_yao:markov_mod_fin} the authors introduced a Markov-modulated model for the financial market. However, in these papers the authors considered the classical risk model with constant intensity for the claims arrival process.

Using the classical stochastic control approach based on the Hamilton-Jacobi-Bellman equation we characterize the optimal strategy and provide a verification result for the value function via classical solutions of two backward partial differential equations (see Theorem~\ref{theorem:verification}). Moreover we provide a class of sufficient conditions for existence and uniqueness of classical  solutions to the PDEs involved (see Theorems~\ref{theorem:g_pde} and~\ref{theorem:f_pde}). Results under various premium calculation principles are discussed, including the \textit{intensity-adjusted variance premium principle}. Finally, numerical simulations are performed to obtain sensitivity analyses of the optimal strategies.

The paper is organized as follows: in Section~\ref{section:formulation} we formulate the main assumptions and describe the optimization problem; Section~\ref{section:HJB} contains the derivation of the Hamilton-Jacobi-Bellman equation. In Section~\ref{section:optimal_reinsurance_strategy} we characterize the optimal reinsurance strategy, discussing in Subsections~\ref{section:premium_calc_principles} and~\ref{section:modified_variance} how the general results apply to special premium calculation principles (expected value, variance premium and intensity-adjusted variance principles). In Section~\ref{section:optimal_investment} we provide the optimal investment strategy. Section~\ref{section:verification_theorem} contains the Verification Theorem. In Section~\ref{section:simulation} we illustrate some numerical results and sensitivity analyses. In Section~\ref{section:PDE} existence and uniqueness theorems are discussed for the PDEs involved in the problem. Finally, in Appendix~\ref{section:appendix} the reader can find some proofs of secondary results.

\section{Problem formulation}
\label{section:formulation}

Assume that $(\Omega,\mathcal{F},\mathbb{P},\{\mathcal{F}_t\})$ is a complete probability space endowed with a filtration $\{\mathcal{F}_t\}_{t\in [0,T]}$, shortly denoted with $\{\mathcal{F}_t\}$, satisfying the usual conditions. We introduce the stochastic factor $Y=\{Y_t\}_{t\in [0,T]}$ as the solution of the following SDE:
\begin{equation}
\label{eqn:stochasticfactor}
dY_t = b(t,Y_t)\,dt + \gamma(t,Y_t)\,dW^{(Y)}_t \qquad \qquad Y_0\in\mathbb{R}
\end{equation}
where $\{W^{(Y)}_t\}_{t\in [0,T]}$ is a standard Brownian motion on $(\Omega,\mathcal{F},\mathbb{P},\{\mathcal{F}_t\})$. 
This stochastic factor represents any environmental alteration reflecting on risk fluctuations. For instance, as suggested by Grandell, J. (see~\cite{grandell:risk}, Chapter 2), in automobile insurance $Y$ may describe road conditions, weather conditions (foggy days, rainy days, \dots), traffic volume, and so on.

We suppose that there exists a unique strong solution to~\eqref{eqn:stochasticfactor} such that
\begin{gather}
\label{eqn:solutionY}
\mathbb{E}\biggl[\int_0^T\abs{b(t,Y_t)}\,dt+\int_0^T \gamma(t,Y_t)^2\,dt\biggr]<\infty \\
\label{eqn:solutionY2}
\sup_{t\in[0,T]}{\mathbb{E}[\abs{Y_t}^2]}<\infty
\end{gather}
(for instance, it is true if the coefficients of the SDE~\eqref{eqn:stochasticfactor} satisfy the classical Lipschitz and sub-linear growth conditions, see~\cite{gihmanskorohod:sde}) and denote by $\mathcal{L}^Y$ its infinitesimal generator:
\[
\mathcal{L}^Yf(t,y) = b(t,y) \frac{\partial{f}}{\partial{y}}(t,y)+\frac{1}{2}\gamma(t,y)^2\frac{\partial^2{f}}{\partial{y^2}}(t,y)
\qquad f\in\mathcal{C}^{1,2}((0,T)\times\mathbb{R}).
\]
Let us introduce a strictly positive measurable function $\lambda(t,y):[0,T]\times\mathbb{R}\to(0,+\infty)$ and define the process $\{\lambda_t\doteq\lambda(t,Y_t)\}_{t\in [0,T]}$ for all $t\in[0,T]$. Under the hypothesis that
\begin{equation}
\label{eqn:intensity_integr}
\mathbb{E}\biggl[\int_0^T \lambda_u\,du \biggr] <\infty
\end{equation}
we denote by $\{N_t\}_{t\in [0,T]}$ the claims arrival process, which is a conditional Poisson process having $\{\lambda_t\}_{t\in [0,T]}$ as intensity.
More precisely, we have that for all $0\le s\le t\le T$ and $k=0,1,\dots$
\[
\mathbb{P}[N_t-N_s=k\mid\mathcal{F}^Y_T\lor\mathcal{F}_s] = \frac{\bigl(\int_s^t \lambda_u\,du\bigr)^k}{k!}e^{-\int_s^t \lambda_u\,du},
\]
where $\{\mathcal{F}^Y_t\}_{t\in [0,T]}$ denotes the filtration generated by $Y$.
Then it is easy to show that
\[
N_t - \int_0^t \lambda_s\,ds
\]
is an $\{\mathcal{F}_t\}$-martingale%
\footnote{See e.g. \cite[II]{bremaud:pointproc}}.\\
Now we define the cumulative claims up to time $t$ as follows:
\[
C_t =\sum_{i=1}^{N_t}{Z_i} \qquad t\in[0,T],
\]
where the sequence of i.i.d. strictly positive $\mathcal{F}_0$-random variables $\{Z_i\}_{i=1,\dots}$ represents the amount of the claims. In the sequel we will assume that all the $\{Z_i\}_{i=1,\dots}$ are distributed like a r.v. $Z$, independent on $\{N_t\}_{t\in [0,T]}$ and $\{Y_t\}_{t\in [0,T]}$, with distribution function $F_Z(dz)$ such that $F_Z(z)=1\quad\forall z\ge D$, with $D\in\mathbb{R}^+$ (eventually $D=+\infty$). Moreover, $Z$ satisfies some suitable integrability conditions (see~\eqref{eqn:eZ_finite} below).\\
Consider the random measure associated with the marked point process $\{C_t\}_{t\in [0,T]}$ defined as follows

\begin{align}
m(dt,dz) &= \sum_{\substack{t\in[0,T]:\\ \Delta C_t\neq0}}{\delta_{(t,\Delta C_t)}(dt,dz)} \notag \\
\label{eqn:random_measure}
&= \sum_{n\ge1}{\delta_{(T_n,Z_n)}(dt,dz)}\mathbbm{1}_{\{T_n\le T\}},
\end{align}
where $\{T_n\}_{n=1,\dots}$ denotes the sequence of jump times of $\{N_t\}_{t\in [0,T]}$,
then the process $\{C_t\}_{t\in [0,T]}$  satisfies
\begin{equation}
\label{eqn:cumulative_claims}
C_t = \int_0^t \int_0^D z m(ds,dz).
\end{equation}

The following Lemma will be useful in the sequel.
\begin{lemma}
\label{prop:random_measure}
The random measure $m(dt,dz)$ given in~\eqref{eqn:random_measure} has dual predictable projection $\nu$ given by the following:
\begin{equation}
\label{eqn:dual_projection}
\nu(dt,dz) = dF_Z(z)\lambda_t\,dt
\end{equation}
i.e. for every nonnegative, $\{\mathcal{F}_t\}$-predictable and $[0,D]$-indexed process $\{H(t,z)\}_{t\in[0,T]}$
\[
\mathbb{E}\biggl[\int_0^T\int_0^DH(t,z)\,m(dt,dz)\biggr]=\mathbb{E}\biggl[\int_0^T\int_0^D H(t,z)\,dF_Z(z)\lambda_t\,dt\biggr].
\]
\end{lemma}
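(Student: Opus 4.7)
The plan is to verify the dual predictable projection identity on a generating class and extend by a standard monotone class argument. By the $\pi$-$\lambda$ theorem (together with monotone convergence and linearity) it suffices to check the identity for $H(t,z)=K(t)\,g(z)$, where $K$ is a bounded nonnegative $\{\mathcal{F}_t\}$-predictable process on $[0,T]$ and $g\colon[0,D]\to\mathbb{R}_+$ is bounded Borel measurable. For such an $H$, by~\eqref{eqn:random_measure},
\[
\int_0^T\!\!\int_0^D H(t,z)\,m(dt,dz)=\sum_{n\ge 1}K(T_n)\,g(Z_n)\,\mathbbm{1}_{\{T_n\le T\}}.
\]

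The computation then splits into two independent pieces. For the marks: since $\{Z_n\}_{n\ge 1}$ is i.i.d.\ with law $F_Z$ and independent of $(Y,N)$, the variable $Z_n$ is independent of $\mathcal{F}_{T_n^-}$ with conditional law $F_Z$; because $K(T_n)\mathbbm{1}_{\{T_n\le T\}}$ is $\mathcal{F}_{T_n^-}$-measurable by predictability, conditioning on $\mathcal{F}_{T_n^-}$ and summing yields
\[
\mathbb{E}\!\left[\sum_{n\ge 1}K(T_n)\,g(Z_n)\,\mathbbm{1}_{\{T_n\le T\}}\right]=\left(\int_0^D g(z)\,dF_Z(z)\right)\mathbb{E}\!\left[\sum_{n\ge 1}K(T_n)\,\mathbbm{1}_{\{T_n\le T\}}\right].
\]
For the jump times, rewriting the residual sum as a stochastic integral $\int_0^T K(t)\,dN_t$ and invoking the martingale property of $N_t-\int_0^t\lambda_s\,ds$ stated just after~\eqref{eqn:intensity_integr} (boundedness of $K$ ensures the integral against the compensated measure is a true martingale with zero expectation), one obtains
\[
\mathbb{E}\!\left[\int_0^T K(t)\,dN_t\right]=\mathbb{E}\!\left[\int_0^T K(t)\,\lambda_t\,dt\right].
\]
Chaining these two identities produces the desired equality for product $H$'s, and the monotone class extension concludes the proof.

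The main technical obstacle is justifying the factorization in the marks step, namely that $Z_n$ is independent of $\mathcal{F}_{T_n^-}$. This follows from the standing independence of $\{Z_n\}$ from $(Y,N)$ together with the natural structure of the filtration (up to completion, $\{\mathcal{F}_t\}$ is generated by $Y$, the jump times, and the marks attached to jumps that have already occurred). Equivalently, one may invoke directly the marked point process projection formula of~\cite{bremaud:pointproc}, Chapter II, which is already referenced in the excerpt and delivers the compensator $\lambda_t\,dt\,dF_Z(z)$ at one stroke under precisely these independence hypotheses.
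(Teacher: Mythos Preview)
Your proposal is correct and follows essentially the same route as the paper's proof: verify the identity on separable integrands $H(t,z)=K(t)g(z)$ (the paper takes $g=\mathbbm{1}_A$), factor out the mark using the independence of $\{Z_n\}$ from the counting process, reduce the time integral to the compensator of $N$ via its $\{\mathcal{F}_t\}$-intensity $\lambda_t$, and then extend to all nonnegative predictable $H$ by a monotone class argument (the paper invokes~\cite[App.~A1, T4]{bremaud:pointproc} for this last step). Your extra remark on the filtration structure needed for the factorization is a point the paper leaves implicit.
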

\begin{proof}
See Appendix A.
\end{proof}

\begin{remark}
\label{remark:random_measure}
Let us observe that for any $\{\mathcal{F}_t\}$-predictable and $[0,D]$-indexed process $\{H(t,z)\}_{t\in[0,T]}$ such that
\[
\mathbb{E}\biggl[\int_0^T\int_0^D \abs{H(t,z)}\,dF_Z(z)\lambda_t\,dt\biggr]<\infty
\]
the process
\[
M_t = \int_0^t\int_0^DH(s,z)\,\bigl(m(ds,dz)-dF_Z(z)\lambda_s\,ds\bigr) \qquad t\in[0,T]
\]
turns out to be an $\{\mathcal{F}_t\}$-martingale. If in addition
\[
\mathbb{E}\biggl[\int_0^T\int_0^D \abs{H(t,z)}^2\,dF_Z(z)\lambda_t\,dt\biggr]<\infty,
\]
then $\{M_t\}_{t\in [0,T]}$ is a square integrable $\{\mathcal{F}_t\}$-martingale and
\[
\mathbb{E}[M_t^2] = \mathbb{E}\biggl[\int_0^t\int_0^D \abs{H(t,z)}^2\,dF_Z(z)\lambda_t\,dt\biggr] \qquad\forall t\in[0,T].
\]
Moreover, the predictable covariation process of $\{M_t\}_{t\in[0,T]}$ is given by
\[
\langle M\rangle_t = \int_0^T\int_0^D \abs{H(t,z)}^2\,dF_Z(z)\lambda_t\,dt
\]
that is $\{M_t^2 - \langle M\rangle_t\}_{t\in[0,T]}$ is an $\{\mathcal{F}_t\}$-martingale%
\footnote{For these results and other related topics see e.g.~\cite{bass:2004}.}.
\end{remark}

\begin{remark}
\label{remark:filtration_G}
Let $\{\mathcal{G}_t\}_{t\in[0,T]}$ be the filtration defined by $\mathcal{G}_t=\mathcal{F}_t\lor\mathcal{F}^Y_T$. Then $m(dt,dz)$ defined in~\eqref{eqn:random_measure} has $\{\mathcal{G}_t\}$-dual predictable projection $\nu$ given in~\eqref{eqn:dual_projection}. In fact, first observe that $\{\lambda_t\}_{t\in [0,T]}$ is $\{\mathcal{F}_t\}$-adapted by definition, hence it is $\{\mathcal{G}_t\}$-adapted. Now notice that $\{\lambda_t\}$ is the $\{\mathcal{G}_t\}$-intensity of $\{N_t\}_{t\in[0,T]}$ because for any $0\le s\le t\le T$
\begin{align*}
\mathbb{E}[N_t\mid\mathcal{G}_s] &= N_s + \mathbb{E}[N_t-N_s\mid\mathcal{G}_s]\\
&= N_s + \sum_{k\ge1}{k\frac{\bigl(\int_s^t \lambda_u\,du\bigr)^k}{k!}e^{-\int_s^t \lambda_u\,du}}\\
&= N_s + \int_s^t \lambda_u\,du
\end{align*}
and this implies that
\[
\mathbb{E}[N_t-\int_0^t \lambda_u\,du\mid\mathcal{G}_s] = N_s-\int_0^s \lambda_u\,du.
\]
Then our statement follows by the proof of Lemma~\ref{prop:random_measure} (see Appendix A) by replacing $\{\mathcal{F}_t\}$-predictable and $[0,D]$-indexed processes with $\{\mathcal{G}_t\}$-predictable and $[0,D]$-indexed processes.
\end{remark}

In this framework we suppose that the gross risk premium rate is affected by the stochastic factor, i.e. we describe the insurance premium as a stochastic process $\{c_t\doteq c(t,Y_t)\}_{t\in [0,T]}$, where $c:[0,T]\times\mathbb{R}\to(0,+\infty)$ is a nonnegative measurable function such that
\begin{equation}
\label{eqn:cpremium_int}
\mathbb{E}\biggl[\int_0^T c(t,Y_t)\,dt\biggr]<\infty.
\end{equation}
The insurance company can continuously purchase a proportional reinsurance contract, transferring at each time $t\in[0,T]$ a percentage $u_t$ of its own risks to the reinsurer, who receives a reinsurance premium $q_t$ given by the definition below.
\begin{definition}(Proportional reinsurance premium)
\label{def:reinsurance_premium}
Let us define a function $q(t,y,u):[0,T]\times\mathbb{R}\times[0,1]\to[0,+\infty)$, continuous w.r.t. the triple $(t,y,u)$, having continuous partial derivatives $\frac{\partial q(t,y,u)}{\partial u},\frac{\partial^2 q(t,y,u)}{\partial u^2}$ in $u\in[0,1]$ and such that
\begin{enumerate}
\item $q(t,y,0)=0$ for all $(t,y)\in[0,T]\times\mathbb{R}$, because a null protection is not expensive;
\item $\frac{\partial q(t,y,u)}{\partial u}\ge0$ for all $(t,y,u)\in[0,T]\times\mathbb{R}\times[0,1]$, since the premium is increasing with respect to the protection;
\item $q(t,y,1)>c(t,y)$ for all $(t,y)\in[0,T]\times\mathbb{R}$, because the cedant is not allowed to gain a profit without risk.
\end{enumerate}
In the rest of the paper $\frac{\partial q(t,y,0)}{\partial u}$ and $\frac{\partial q(t,y,1)}{\partial u}$ should be intended as right and left derivatives, respectively.
Moreover, we assume the following integrability condition:
\begin{equation}
\label{eqn:qpremium_int}
\mathbb{E}\biggl[\int_0^T q(t,Y_t,u)\,dt\biggr]<\infty \quad \forall u\in[0,1].
\end{equation}
Then the reinsurance premium associated with a reinsurance strategy $\{u_t\}_{t\in [0,T]}$ (which is the protection level chosen by the insurer) is defined as $\{q_t\doteq q(t,Y_t,u_t)\}_{t\in [0,T]}$.
\end{definition}
In addition, we will use the hypothesis that the insurance gross premium and the reinsurance premium will never diverge too much (being approximately influenced by the stochastic factor in the same way), that is there exists a positive constant $K$ such that
\begin{equation}
\label{eqn:premia_bounded_prop}
\abs{q(t,Y_t,u)-c(t,Y_t)}\le K \quad \mathbb{P}\text{-a.s.} \quad \forall t\in[0,T],u\in[0,1].
\end{equation}
Under these hypotheses the surplus (or reserve) process associated with a given reinsurance strategy $\{u_t\}_{t\in [0,T]}$ is described by the following SDE:
\begin{align}
\label{eqn:surplus_process}
dR^u_t &= \biggl[c(t,Y_t)-q(t,Y_t,u_t)\biggl]dt - (1-u_t)dC_t \notag\\
&= \biggl[c(t,Y_t)-q(t,Y_t,u_t)\biggl]dt - \int_0^D(1-u_t)z\,m(dt,dz) \qquad R^u_0= R_0\in\mathbb{R}^+
\end{align}
Let us observe that by Remark  \ref{remark:random_measure}, since
\[
\mathbb{E}\biggl[\int_0^T\int_0^D{u_rz\lambda_r\,dF_Z(z)\,dr}\biggr] \le \mathbb{E}[Z]\mathbb{E}\biggl[\int_0^T{\lambda_r\,dr}\biggr]<\infty,
\]
the process $\int_0^t\int_0^D{(1-u_s)z(m(ds,dz)-\lambda_s\,dF_Z(z)\,ds)}$ turns out to be an $\{\mathcal{F}_t\}$-martingale.\\

Furthermore, we allow the insurer to invest its surplus in a financial market consisting of a risk-free bond $\{B_t\}_{t\in [0,T]}$ and a risky asset $\{P_t\}_{t\in [0,T]}$, whose dynamics on $(\Omega,\mathcal{F},\mathbb{P},\{\mathcal{F}_t\})$ are, respectively,
\begin{equation}
dB_t = RB_t\,dt \qquad B_0=1
\end{equation}
with a fixed $R>0$, and
\begin{equation}
\label{eqn:priceprocess}
dP_t = P_t\biggl[\mu(t,P_t)\,dt + \sigma(t,P_t)\,dW^{(P)}_t\biggr] \qquad P_0>0
\end{equation}
where $\{W^{(P)}_t\}_{t\in[0,T]}$ is a standard Brownian motion independent of $\{W^{(Y)}\}_{t\in[0,T]}$ and the random measure $m(dt,dz)$%
\footnote{This is a classical assumption which implies that the financial market is independent on the insurance market.}.
Let us assume that there exists a unique strong solution to~\eqref{eqn:priceprocess} such that
\begin{gather}
\label{eqn:solutionP}
\mathbb{E}\biggl[\int_0^T\abs{P_t\mu(t,P_t)}\,dt+\int_0^T P_t^2\sigma(t,P_t)^2\,dt\biggr]<\infty \\
\label{eqn:solutionP2}
\sup_{t\in[0,T]}{\mathbb{E}[P_t^2]}<\infty
\end{gather}
(for example, it is true if the coefficients of the SDE~\eqref{eqn:priceprocess} satisfy the classical Lipschitz and sub-linear growth conditions, see~\cite{gihmanskorohod:sde}).
Furthermore, we assume the Novikov condition:
\begin{equation}
\label{eqn:novikov}
\mathbb{E}\biggl[e^{\frac{1}{2}\int_0^T \abs{\frac{\mu(t,P_t)-R}{\sigma(t,P_t)}}^2\,dt} \biggr]<\infty,
\end{equation}
which implies the existence of a risk-neutral measure for $\{P_t\}_{t\in [0,T]}$ and ensures that the financial market does not admit arbitrage.

We will denote with $w_t$ the total amount invested in the risky asset at time $t\in[0,T]$, so that $X_t-w_t$ will be the capital invested in the risk-free asset (now $X_t$ indicates the total wealth, but it will be defined more accurately below, see equation~\eqref{eqn:wealth_sol}). We also allow the insurer to short-sell and to borrow/lend any infinitesimal amount, so that $w_t\in\mathbb{R}$.\\
Finally, we only consider self-financing strategies: the insurer company only invests the surplus obtained with the core business, neither subtracting anything from the gains, nor adding something from another business.\\

The insurer's wealth $\{X^\alpha_t\}_{t\in [0,T]}$ associated with a given strategy $\alpha_t=(u_t,w_t)$ is described by the following SDE:
\begin{align}
dX^\alpha_t &= dR^u_t+ w_t \frac{dP_t}{P_t} + \bigl(X^\alpha_t - w_t\bigr)\frac{dB_t}{B_t} \notag\\
&= \biggl[c(t,Y_t)-q(t,Y_t,u_t)\biggl]dt  + w_t\biggl[\mu(t,P_t)\,dt + \sigma(t,P_t)\,dW^{(P)}_t\biggr] \notag\\
\label{eqn:wealth_proc}
&+ \bigl(X^\alpha_t - w_t\bigr)R\,dt - \int_0^D(1-u_t)z\,m(dt,dz)
\end{align}
with $X^\alpha_0=R_0\in\mathbb{R}^+$. Remember that $\{u_t\}_{t\in [0,T]}$ and $\{w_t\}_{t\in [0,T]}$ are, respectively, the proportion of reinsured claims and the total amount invested in the risky asset $\{P_t\}_{t\in [0,T]}$.\\
\begin{remark}
It can be verified that the solution of the SDE~\eqref{eqn:wealth_proc} is given by the following:
\begin{multline}
\label{eqn:wealth_sol}
X^\alpha_t = X^\alpha_0e^{Rt} + \int_0^t e^{R(t-r)}\bigl[c(r,Y_r)-q(r,Y_r,u_r)\bigr]\,dr
+\int_0^t{e^{R(t-r)}w_r[\mu(r,P_r)-R]\,dr} \\
+ \int_0^t{e^{R(t-r)}w_r\sigma(r,P_r)\,dW^{(P)}_r}
-\int_0^t\int_0^D e^{R(t-r)}(1-u_r)z\,m(dr,dz).
\end{multline}
\end{remark}

Now we are ready to formulate the optimization problem of an insurance company which subscribes a proportional reinsurance contract and invests its surplus in a financial market according with a strategy $\{\alpha_t=(u_t,w_t)\}_{t\in[0,T]}$ in order to maximize the expected utility of its terminal wealth:
\[
\sup_{\alpha\in\mathcal{U}}{\mathbb{E}\bigl[U(X_T^\alpha)\bigr]}
\]
where $\mathcal{U}$ denotes a suitable class of admissible controls defined below (see Definition~\ref{def_U}) and $U:\mathbb{R}\to[0,+\infty)$ is the utility function representing the insurer preferences. We focus on CARA (\textit{Constant Absolute Risk Aversion}) utility functions, whose general expression is given by
\[
U(x) = 1-e^{-\eta x} \qquad x\in\mathbb{R}
\]
where $\eta>0$ is the risk-aversion parameter. This utility function is highly relevant in economic science and in particular in insurance theory, in fact it is commonly used for reinsurance problems (e.g. see ~\cite{baiguo:optreins},~\cite{caowan:optreins},~\cite{shengrongzhao:optreins}, and many others).\\
Using the dynamic programming principle we will consider a dynamic problem which consists in finding the optimal strategy $\alpha_s$, for $s\in[t,T]$, for the following optimization problem given the information available at the time $t\in[0,T]$:
\[
\sup_{\alpha\in\mathcal{U}_t}{\mathbb{E}\biggl[U(X_{t,x}^\alpha(T))\mid \mathcal{F}_t\biggr]} \qquad t\in[0,T]
\]
where $\mathcal{U}_t$ denotes the class of admissible controls in the time interval $[t,T]$ (see Definition~\ref{def_U} below). Here $\{X_{t,x}^\alpha(s)\}_{s\in[t,T]}$ denotes the solution to equation~\eqref{eqn:wealth_proc} with initial condition $X_t^\alpha=x$.\\
For the sake of simplicity, we will reduce ourselves studying the function $-e^{-\eta x}$. Another possible choice is to study the corresponding minimizing problem for the function $e^{-\eta x}$, but the first choice is usually preferred in the literature.

\begin{definition}
\label{def_U}
We will denote with $\mathcal{U}$ the set of all admissible strategies, which are all the $\{\mathcal{F}_t\}$-predictable processes $\alpha_t=(u_t,w_t)$, $t\in[0,T]$, with values in $[0,1]\times\mathbb{R}$, such that
\begin{gather*}
\mathbb{E}\biggl[\int_0^T{\abs{w_r}\abs{\mu(r,P_r)-R}\,dr}\biggr] <\infty , \quad
\mathbb{E}\biggl[\int_0^T{w_r^2\sigma(r,P_r)^2\,dr}\biggr] <\infty.
\end{gather*}

When we want to restrict the controls to the time interval $[t,T]$, we will use the notation $\mathcal{U}_t$.
\end{definition}

From now on we assume the following assumptions fulfilled.
\begin{assumption}
\label{assumption:EZ_finite}
\begin{gather}
\label{eqn:eZ_finite}
\mathbb{E}[e^{\eta Ze^{RT}}]<\infty, \qquad \mathbb{E}[Ze^{\eta Ze^{RT}}]<\infty
\qquad \mathbb{E}[Z^2e^{\eta Ze^{RT}}]<\infty \\
\label{lambda_int}
\mathbb{E}\bigl[e^{(\mathbb{E}[e^{\eta e^{RT}Z}]-1)\int_t^T \lambda_s\,ds}\mid\mathcal{F}_t\bigr]<\infty
\qquad\Braket{\mathbb{P}=1} \quad \forall t\in[0,T].
\end{gather}
\end{assumption}

\begin{prop}
\label{nullcontrol_admissible}
Under the Assumption~\ref{assumption:EZ_finite} the control  $(0,0)$ is admissible and such that
\begin{equation*}
\mathbb{E}[e^{-\eta X_{t,x}^{(0,0)} (T)}\mid\mathcal{F}_t]<\infty \qquad \Braket{\mathbb{P}=1} \qquad \forall (t,x)\in[0,T]\times\mathbb{R}.
\end{equation*}
\end{prop}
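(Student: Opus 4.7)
The plan is to substitute $u\equiv0, w\equiv0$ into the explicit solution~\eqref{eqn:wealth_sol}, bound the resulting exponential above by a pure-jump term, and then evaluate the conditional expectation by conditioning first on $\mathcal{G}_t=\mathcal{F}_t\lor\mathcal{F}^Y_T$, exploiting the fact that given $\mathcal{F}^Y_T$ the cumulative claims process is compound Poisson. Assumption~\ref{assumption:EZ_finite} provides precisely the two bounds needed for the two conditionings.

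\textbf{Step 1: Admissibility.} The process $(0,0)$ is deterministic, hence $\{\mathcal{F}_t\}$-predictable, with values in $[0,1]\times\mathbb{R}$. The two integrability conditions in Definition~\ref{def_U} are trivial because $w_r\equiv0$.

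\textbf{Step 2: Upper bound for the exponential.} Using $q(r,Y_r,0)=0$ (property~1 of Definition~\ref{def:reinsurance_premium}) and $c\ge0$, the explicit formula~\eqref{eqn:wealth_sol} gives
\[
X^{(0,0)}_{t,x}(T)=xe^{R(T-t)}+\int_t^Te^{R(T-r)}c(r,Y_r)\,dr-\int_t^T\!\!\int_0^D e^{R(T-r)}z\,m(dr,dz).
\]
Dropping the nonnegative premium integral and bounding $e^{R(T-r)}\le e^{RT}$ for $r\in[t,T]$ yields
\[
e^{-\eta X^{(0,0)}_{t,x}(T)}\le e^{-\eta xe^{R(T-t)}}\,\exp\!\Bigl(\eta e^{RT}\!\!\int_t^T\!\!\int_0^D z\,m(dr,dz)\Bigr)=e^{-\eta xe^{R(T-t)}}\,e^{\eta e^{RT}(C_T-C_t)}.
\]

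\textbf{Step 3: Iterated conditioning.} I would use the tower property with respect to $\mathcal{G}_t=\mathcal{F}_t\lor\mathcal{F}^Y_T$. Given $\mathcal{G}_t$, the intensity process $\{\lambda_s\}_{s\in[t,T]}$ is known, so $C_T-C_t=\sum_{i=N_t+1}^{N_T}Z_i$ is a compound Poisson random variable with Poisson parameter $\int_t^T\lambda_s\,ds$ and jump law $F_Z$ (independent of $Y$ and of $\mathcal{F}_t$). The classical MGF identity for a compound Poisson sum, valid under the finiteness of $\mathbb{E}[e^{\eta e^{RT}Z}]$ from~\eqref{eqn:eZ_finite}, gives
\[
\mathbb{E}\bigl[e^{\eta e^{RT}(C_T-C_t)}\mid\mathcal{G}_t\bigr]=\exp\!\Bigl((\mathbb{E}[e^{\eta e^{RT}Z}]-1)\int_t^T\lambda_s\,ds\Bigr).
\]
Taking the $\mathcal{F}_t$-conditional expectation of this identity and invoking condition~\eqref{lambda_int} yields a finite $\mathbb{P}$-a.s. bound, which combined with Step~2 delivers the claim.

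\textbf{Main obstacle.} The only non-routine point is justifying the compound-Poisson MGF identity under $\mathbb{P}(\,\cdot\mid\mathcal{G}_t)$; this follows either from a direct power-series computation using the conditional Poisson distribution of $N_T-N_t$ given $\mathcal{G}_t$ recalled in Section~\ref{section:formulation}, or equivalently from the exponential-martingale representation based on the $\{\mathcal{G}_t\}$-dual predictable projection~\eqref{eqn:dual_projection} of $m(dt,dz)$ stated in Remark~\ref{remark:filtration_G}. The remaining steps are bookkeeping that turn the two integrability hypotheses~\eqref{eqn:eZ_finite}--\eqref{lambda_int} into the desired finite conditional expectation.
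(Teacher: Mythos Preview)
Your proposal is correct and follows essentially the same route as the paper's proof: bound the premium term, reduce to the jump part $e^{\eta e^{RT}(C_T-C_t)}$, and identify its conditional expectation via the compound-Poisson structure to land on~\eqref{lambda_int}. The only cosmetic differences are that the paper bounds the premium term via~\eqref{eqn:premia_bounded_prop} (picking up a harmless factor $e^{\eta K(e^{R(T-t)}-1)/R}$) rather than using $c\ge0$ and $q(\cdot,\cdot,0)=0$ as you do, and it writes out the power-series computation for the MGF directly instead of packaging it as conditioning on $\mathcal{G}_t$.
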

\begin{proof}
See Appendix~\ref{section:appendix}.
\end{proof}

\begin{remark}
Let us observe that Proposition \ref{nullcontrol_admissible} implies that 
\[
\esssup_{\alpha\in\mathcal{U}_t}{\mathbb{E}\biggl[U(X_{t,x}^\alpha(T))\mid \mathcal{F}_t\biggr]} > - \infty
\qquad \Braket{\mathbb{P}=1} \qquad t\in[0,T]
\]
and as a consequence that
\[
\sup_{\alpha\in\mathcal{U}}{\mathbb{E}\bigl[U(X_T^\alpha)\bigr]} > - \infty.
\]
\end{remark}

In order to solve this dynamic problem we introduce the \textit{value function} associated with it
\begin{equation}
v(t,x,y,p)= \sup_{\alpha\in\mathcal{U}_t}{\mathbb{E}\biggl[-e^{-\eta X_{t,x}^\alpha(T)}\mid Y_t=y,P_t=p\biggr]}
\end{equation}
where the function $v:V\to\mathbb{R}$ is defined in the domain
\[
V\doteq[0,T]\times\mathbb{R}^2\times(0,+\infty).
\]

The following Lemma  gives sufficient conditions to extend Proposition \ref{nullcontrol_admissible} to all constant strategies.

\begin{lemma}
\label{lemma:constant_strategies_prop}
Under the Assumption~\ref{assumption:EZ_finite}, let us suppose $\sigma(t,p)$ and $\mu(t,p)$ are bounded for all $(t,p)\in[0,T]\times(0,+\infty)$. Then we have that all constant strategies $\alpha_t=(u,w)$ with $u\in[0,1]$ and $w\in\mathbb{R}$ are admissible and such that
\begin{equation*}
\mathbb{E}[e^{-\eta X_{t,x}^{\alpha} (T)}\mid\mathcal{F}_t]<\infty \qquad \Braket{\mathbb{P}=1} \qquad \forall (t,x)\in[0,T]\times\mathbb{R}.
\end{equation*}
\end{lemma}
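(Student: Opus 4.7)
The plan is to use the closed form \eqref{eqn:wealth_sol} of the wealth process under the constant strategy $\alpha=(u,w)$, separate the exponent $-\eta X_{t,x}^\alpha(T)$ into a deterministically bounded piece, a diffusion piece depending only on the financial market, and a pure-jump piece depending only on the insurance market, and then handle the two non-trivial factors separately, exploiting that $W^{(P)}$ (hence $P$) is independent of $(Y,m)$.

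Admissibility in the sense of Definition \ref{def_U} is immediate since $w$ is constant and $\mu,\sigma$ are bounded. For the exponential moment, starting from \eqref{eqn:wealth_sol} and using \eqref{eqn:premia_bounded_prop}, the crude bound $(1-u)e^{R(T-r)}\le e^{RT}$ and the boundedness of $\mu$, I would produce a deterministic constant $\tilde K=\tilde K(x,u,w,K,T,R,\|\mu\|_\infty)$ with
\[
-\eta X_{t,x}^\alpha(T) \le \tilde K + \eta e^{RT}\int_t^T\!\int_0^D z\,m(dr,dz) - \eta w\int_t^T e^{R(T-r)}\sigma(r,P_r)\,dW^{(P)}_r.
\]
Taking conditional expectation given $\mathcal{F}_t$ and using the independence of the two markets, it then suffices to control the $\mathcal{F}_t$-conditional exponential moments of the last two terms separately. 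For the financial factor, the boundedness of $\sigma$ gives a deterministic upper bound on the quadratic variation of the stochastic integral, so Novikov's criterion applies: the ordinary exponential equals a true martingale times a bounded Itô correction, and its conditional expectation is dominated by a deterministic constant.

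The insurance factor is where the real work lies. I would enlarge the filtration to $\mathcal{G}_t=\mathcal{F}_t\vee\mathcal{F}^Y_T$ as in Remark \ref{remark:filtration_G}: conditionally on $\mathcal{F}^Y_T\vee\mathcal{F}_t$, the increment $\int_t^T\!\int_0^D z\,m(dr,dz)$ is the terminal value of a compound Poisson process with deterministic intensity $\int_t^\cdot \lambda_s\,ds$ and jump law $F_Z$, so its conditional Laplace transform is explicit:
\[
\mathbb{E}\Big[\exp\Big(\eta e^{RT}\!\int_t^T\!\int_0^D z\,m(dr,dz)\Big)\,\Big|\,\mathcal{G}_t\Big] = \exp\Big(\big(\mathbb{E}[e^{\eta e^{RT}Z}]-1\big)\!\int_t^T\!\lambda_s\,ds\Big),
\]
where $\mathbb{E}[e^{\eta e^{RT}Z}]$ is finite by \eqref{eqn:eZ_finite}. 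A further conditioning down to $\mathcal{F}_t$, combined with \eqref{lambda_int}, then delivers the $\mathbb{P}$-a.s. finiteness, and $\mathbb{E}[e^{-\eta X_{t,x}^\alpha(T)}\mid\mathcal{F}_t]<\infty$ follows.

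The step I expect to require most care is the rigorous identification of the above conditional Laplace transform. The cleanest route is to write the jump sum as $\sum_{i=N_t+1}^{N_T}Z_i$ and use independence of the $\{Z_i\}$ from $(N,Y)$ together with the conditional Poisson distribution of $N_T-N_t$ given $\mathcal{F}^Y_T\vee\mathcal{F}_t$; an alternative is to apply Itô's formula to the Doléans--Dade exponential of the compensated jump integral and verify that it is a true martingale, which again uses the exponential integrability in \eqref{eqn:eZ_finite}.
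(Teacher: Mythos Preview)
Your proposal is correct and follows essentially the same route as the paper: the paper also splits the conditional expectation into an insurance factor and a financial factor via the independence of the two markets, handles the financial factor by Novikov's condition applied to the bounded integrand $\eta w e^{R(T-s)}\sigma(s,P_s)$ (phrased there as a change of measure and the Kallianpur--Striebel formula, which amounts to your ``martingale times bounded It\^o correction'' observation), and controls the insurance factor exactly via the compound-Poisson computation you describe, arriving at the same expression $\mathbb{E}\bigl[e^{(\mathbb{E}[e^{\eta e^{RT}Z}]-1)\int_t^T\lambda_s\,ds}\mid\mathcal{F}_t\bigr]$ and invoking \eqref{lambda_int}. Your explicit passage through the enlarged filtration $\mathcal{G}_t$ is a clean way to justify the conditional Laplace transform, while the paper performs the equivalent series computation directly over $\{N_T-N_t=n\}$; both lead to the same bound.
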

\begin{proof}
See Appendix~\ref{section:appendix}.
\end{proof}


\section{Hamilton-Jacobi-Bellman equation}
\label{section:HJB}

Let us consider the Hamilton-Jacobi-Bellman equation that the value function is expected to solve if sufficiently regular 
\begin{equation}
\label{eqn:HJBformulation_prop}
\left\{
\begin{aligned}
&\sup_{(u,w)\in [0,1]\times\mathbb{R}}{\mathcal{L}^\alpha v(t,x,y,p)} = 0\\
& v(T,x,y,p)= -e^{-\eta x} \qquad \forall(y,p)\in\mathbb{R}\times(0,+\infty)
\end{aligned}
\right.
\end{equation}
where $\mathcal{L}^\alpha$ denotes the Markov generator of the triple $(X^\alpha_t,Y_t,P_t)$ associated with a constant control $\alpha=(u,w)$. 
In what follows, we denote by $\mathcal{C}^{1,2}_b$ all bounded functions $f(t,x_1,\dots,x_n)$, with $n\ge1$, with bounded first order derivatives $\frac{\partial{f}}{\partial{t}},\frac{\partial{f}}{\partial{x_1}},\dots,\frac{\partial{f}}{\partial{x_n}}$ and bounded second order derivatives w.r.t. the spatial variables $\frac{\partial^2{f}}{\partial{x_1^2}},\dots,\frac{\partial{f}}{\partial{x_n^2}}$.

\begin{lemma}
\label{lemmagenerator}
Let $f:V\to\mathbb{R}$ be a function in $\mathcal{C}^{1,2}_b$. Then the Markov generator of the stochastic process $(X^\alpha_t,Y_t,P_t)$ for all constant strategies $\alpha=(u,w)\in[0,1]\times\mathbb{R}$ is given by the following expression:
\begin{multline}
\label{eqn:generator}
\mathcal{L}^\alpha f(t,x,y,p) = \frac{\partial{f}}{\partial{t}}(t,x,y,p)
+ \frac{\partial{f}}{\partial{x}}(t,x,y,p) \bigl[Rx+c(t,y)-q(t,y,u)+w(\mu(t,p)-R)\bigr] \\
+ \frac{1}{2}w^2\sigma(t,p)^2\frac{\partial^2{f}}{\partial{x^2}}(t,x,y,p)
+ b(t,y)\frac{\partial{f}}{\partial{y}}(t,x,y,p)+ \frac{1}{2}\gamma(t,y)^2\frac{\partial^2{f}}{\partial{y^2}}(t,x,y,p) \\
+ p\mu(t,p)\frac{\partial{f}}{\partial{p}}(t,x,y,p)
+ \frac{1}{2}p^2\sigma(t,p)^2\frac{\partial^2{f}}{\partial{p^2}}(t,x,y,p)
+ w\sigma(t,p)^2p \frac{\partial^2{f}}{\partial{x}\partial{p}}(t,x,y,p) \\
+ \int_0^D{\biggl[f(t,x-(1-u)z,y,p)-f(t,x,y,p)\biggr]\lambda(t,y)\,dF_Z(z)}.
\end{multline}
\end{lemma}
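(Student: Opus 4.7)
The plan is to apply the It\^o formula for jump-diffusion semimartingales to $f(t,X^\alpha_t,Y_t,P_t)$ under a constant control $\alpha=(u,w)$, take expectations, and identify the resulting drift with the right-hand side of~\eqref{eqn:generator}.

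First I would write down the semimartingale dynamics of the triple $(X^\alpha_t,Y_t,P_t)$. The process $Y$ satisfies~\eqref{eqn:stochasticfactor}, $P$ satisfies~\eqref{eqn:priceprocess}, and $X^\alpha$ is given by specialising~\eqref{eqn:wealth_proc} to constants $u$ and $w$. The key structural features are: (i) the continuous martingale parts of $X^\alpha$ and $P$ are both driven by $W^{(P)}$, producing a nonzero predictable covariation $d\langle X^\alpha,P\rangle_t=w\sigma(t,P_t)^2 P_t\,dt$; (ii) $Y$ is driven by $W^{(Y)}$, which is independent of $W^{(P)}$ and of $m$, so no cross quadratic-covariation terms arise between $Y$ and either $X^\alpha$ or $P$; (iii) only $X^\alpha$ jumps, via the random measure $m(dt,dz)$ whose $\{\mathcal{F}_t\}$-dual predictable projection is $dF_Z(z)\lambda(t,Y_t)\,dt$ by Lemma~\ref{prop:random_measure}.

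Next, applying It\^o's formula to $f\in\mathcal{C}^{1,2}_b$ evaluated at $(t,X^\alpha_t,Y_t,P_t)$ decomposes $f(t,X^\alpha_t,Y_t,P_t)-f(0,x,y,p)$ into a drift, continuous-martingale terms driven by $W^{(P)}$ and $W^{(Y)}$, and a jump term which I would rewrite as the sum of the integral against the compensator $\nu(ds,dz)=dF_Z(z)\lambda(s,Y_s)\,ds$ plus a compensated jump martingale, the jump increments being $f(s,X^\alpha_{s^-}-(1-u)z,Y_s,P_s)-f(s,X^\alpha_{s^-},Y_s,P_s)$. Summing the drift contributions reproduces exactly the RHS of~\eqref{eqn:generator}: the pure first- and second-order operators in $x$, $y$, $p$, the cross term $w\sigma(t,p)^2 p\,\partial^2_{xp}f$ from (i), and the integral term from (iii), with no $\partial^2_{xy}$ or $\partial^2_{yp}$ terms thanks to (ii).

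To conclude that this drift is the Markov generator, the martingale pieces must have zero expectation, i.e.\ be genuine martingales. The $W^{(P)}$-stochastic integral has integrand $\sigma(t,P_t)\bigl(w\,\partial_x f+P_t\,\partial_p f\bigr)$, which is square-integrable on $[0,T]$ by the boundedness of $\partial_x f$ and $\partial_p f$ combined with~\eqref{eqn:solutionP}--\eqref{eqn:solutionP2}; the $W^{(Y)}$-stochastic integral is handled analogously via the boundedness of $\partial_y f$ and~\eqref{eqn:solutionY}--\eqref{eqn:solutionY2}; finally, the compensated jump integral is controlled through
\[
\abs{f(s,X^\alpha_{s^-}-(1-u)z,Y_s,P_s)-f(s,X^\alpha_{s^-},Y_s,P_s)}\le 2\,\sup\abs{f}
\]
together with~\eqref{eqn:intensity_integr}, so that Remark~\ref{remark:random_measure} yields a true $\{\mathcal{F}_t\}$-martingale. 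Taking conditional expectation given $\mathcal{F}_t$, dividing by $h$ and letting $h\to 0^+$ then identifies $\mathcal{L}^\alpha f$ with the stated expression.

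The main obstacle is essentially bookkeeping: keeping track of the nine second-order terms, verifying which quadratic covariations vanish under the independence of $W^{(Y)}$, $W^{(P)}$ and $m$, and isolating $X^\alpha$ as the only jumping component. The substantive analytical point---that the local martingales obtained from It\^o's formula are true martingales---is immediate from the $\mathcal{C}^{1,2}_b$ regularity combined with the integrability conditions already postulated on $Y$, $P$ and $\lambda$.
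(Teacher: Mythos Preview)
Your proposal is correct and follows essentially the same route as the paper: apply It\^o's formula to $f(t,X^\alpha_t,Y_t,P_t)$, identify the drift with~\eqref{eqn:generator}, and verify that the stochastic-integral and compensated-jump remainders are true martingales using the $\mathcal{C}^{1,2}_b$ bounds together with~\eqref{eqn:solutionY},~\eqref{eqn:solutionP} and~\eqref{eqn:intensity_integr}. The only cosmetic difference is that for the $w\sigma\,\partial_x f$ integrand the paper appeals to the admissibility condition in Definition~\ref{def_U} rather than directly to~\eqref{eqn:solutionP}--\eqref{eqn:solutionP2}.
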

\begin{proof}
See Appendix A.
\end{proof}

Now let us introduce the following ansatz:
\[
v(t,x,y,p) =  -e^{-\eta xe^{R(T-t)}}\varphi(t,y,p)
\]
where $\varphi$ does not depend on $x$ and it is a positive function%
\footnote{Intuitively, we note that $X_{t,x}^\alpha(T)=X_{t,0}^\alpha(T)+xe^{R(T-t)}$ and we use the exponential form of the function $v$.}.
Then the original HJB problem given in~\eqref{eqn:HJBformulation_prop} reduces to the simpler one given by
\begin{multline}
\label{eqn:hjb_prop}
-\frac{\partial{\varphi}}{\partial{t}}(t,y,p) - b(t,y)\frac{\partial{\varphi}}{\partial{y}}(t,y,p) - \frac{1}{2}\gamma(t,y)^2 \frac{\partial^2{\varphi}}{\partial{y^2}}(t,y,p)
+ \eta e^{R(T-t)}c(t,y)\varphi(t,y,p)\\
- p\mu(t,p)\frac{\partial{\varphi}}{\partial{p}}(t,y,p)
- \frac{1}{2}\sigma(t,p)^2p^2\frac{\partial^2{\varphi}}{\partial{p^2}}(t,y,p)\\
+\sup_{u\in[0,1]}{\Psi^u(t,y)}\varphi(t,y,p)+ \sup_{w\in\mathbb{R}}{\Psi^w(t,y,p)}= 0
\end{multline}
with final condition $\varphi(T,y,p) = 1$ for all $(y,p)\in\mathbb{R}\times(0,+\infty)$, defining
\begin{equation}
\label{eqn:Psi_u}
\Psi^u(t,y)\doteq - \eta e^{R(T-t)}q(t,y,u) + \lambda(t,y)\int_0^D{\biggl[1-e^{\eta (1-u)ze^{R(T-t)}}\biggr]\,dF_Z(z)}
\end{equation}
and
\begin{align}
\label{eqn:Psi_w}
\Psi^w(t,y,p)&\doteq \eta e^{R(T-t)}\biggl((\mu(t,p)-R)\varphi(t,y,p)+p\sigma(t,p)^2 \frac{\partial{\varphi}}{\partial{p}}(t,y,p)\biggr)w \notag\\
&- \frac{1}{2}\sigma(t,p)^2 \eta^2 e^{2R(T-t)}\varphi(t,y,p)w^2.
\end{align}

It should make it clear that we can split the optimal control research in two distinct problems: the optimization of $\Psi^u$ will give us the optimal level of reinsurance (see Section~\ref{section:optimal_reinsurance_strategy}), while working with $\Psi^w$ we will find the optimal investment policy (see Section~\ref{section:optimal_investment}).


\section{Optimal reinsurance strategy}
\label{section:optimal_reinsurance_strategy}

In this section we discuss the problem
\begin{equation}
\label{eqn:opt_reins_problem}
\sup_{u\in[0,1]}{\Psi^u(t,y)}, \quad (t,y) \in [0,T] \times \ \mathbb{R}
\end{equation}
with $\Psi^u(t,y)$ given in \eqref{eqn:Psi_u}.

First, let us observe that  $\Psi^u(t,y)$  is continuous w.r.t. $u\in[0,1]$,  for any $(t,y) \in [0,T] \times \ \mathbb{R}$ and admits continuous first and the second order derivatives  w.r.t. $u\in[0,1]$

\begin{align*}
\frac{\partial \Psi^u(t,y)}{\partial u} & = -\eta e^{R(T-t)}\biggl[\frac{\partial q(t,y,u)}{\partial u} - \lambda(t,y)\int_0^D{ze^{\eta (1-u)ze^{R(T-t)}}\,dF_Z(z)} \biggr] \\
\frac{\partial^2 \Psi^u(t,y)}{\partial u^2}& = -\eta e^{R(T-t)}\biggl[\frac{\partial^2 q(t,y,u)}{\partial u^2} + \eta e^{R(T-t)}\lambda(t,y)\int_0^D{z^2e^{\eta (1-u)ze^{R(T-t)}}\,dF_Z(z)}\biggr].
\end{align*}
Notice that these derivatives are well defined thanks to~\eqref{eqn:eZ_finite}.\\

Now we are ready for the main result of this section.
\begin{prop}
\label{prop:optimalreins}
Given $\Psi^u(t,y)$ in~\eqref{eqn:Psi_u}, suppose that 
\begin{equation}
\label{eqn:PsiUconcave}
-\frac{\partial^2 q(t,y,u)}{\partial u^2}<\eta e^{R(T-t)}\lambda(t,y)\mathbb{E}\biggl[Z^2e^{\eta(1-u)Ze^{R(T-t)}}\biggr]
\qquad \forall(t,y,u)\in[0,T]\times\mathbb{R}\times(0,1).
\end{equation}
Then there exists a unique measurable function $u^*(t,y)$ for all $(t,y)\in[0,T]\times\mathbb{R}$ solution to~\eqref{eqn:opt_reins_problem}. Moreover, it is given by 
\begin{equation}
\label{eqn:reins_optimal_strategy}
u^*(t,y)=
\begin{cases}
	0 & \text{$(t,y)\in A_0$}
	\\
	\hat{u}(t,y) & \text{$(t,y)\in \hat{A}$}
	\\
	1 & \text{$(t,y)\in A_1$}
\end{cases}
\end{equation}
where
\begin{align*}
A_0&\doteq\Set{(t,y)\in[0,T]\times\mathbb{R}\mid\lambda(t,y)\mathbb{E}[Ze^{\eta Ze^{R(T-t)}}]\le\frac{\partial q(t,y,0)}{\partial u}}\\
\hat{A}&\doteq\Set{(t,y)\in[0,T]\times\mathbb{R}\mid\lambda(t,y)\mathbb{E}[Ze^{\eta Ze^{R(T-t)}}]>\frac{\partial q(t,y,0)}{\partial u}, \frac{\partial q(t,y,1)}{\partial u} > \mathbb{E}[Z]\lambda(t,y)}\\
A_1&\doteq\Set{(t,y)\in[0,T]\times\mathbb{R}\mid\frac{\partial q(t,y,1)}{\partial u} \le \mathbb{E}[Z]\lambda(t,y)}
\end{align*}
and $\hat{u}(t,y)$ is the unique solution of the following equation:
\begin{equation}
\label{eqn:first_order_cond}
\frac{\partial q(t,y,u)}{\partial u} = \lambda(t,y)\int_0^D{ze^{\eta (1-u)ze^{R(T-t)}}\,dF_Z(z)}.
\end{equation}
\end{prop}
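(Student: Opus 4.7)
The approach would be the classical one for concave optimization on a compact interval: establish strict concavity of $u\mapsto\Psi^u(t,y)$ on $[0,1]$ from hypothesis~\eqref{eqn:PsiUconcave}, then distinguish the three cases according to where the unique maximizer lies.

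First I would observe that the formula for $\tfrac{\partial^2\Psi^u}{\partial u^2}$ given just before the proposition, namely
\[
\frac{\partial^2\Psi^u(t,y)}{\partial u^2}=-\eta e^{R(T-t)}\Bigl[\frac{\partial^2 q(t,y,u)}{\partial u^2}+\eta e^{R(T-t)}\lambda(t,y)\mathbb{E}\bigl[Z^2 e^{\eta(1-u)Ze^{R(T-t)}}\bigr]\Bigr],
\]
is strictly negative for $u\in(0,1)$ precisely under assumption~\eqref{eqn:PsiUconcave}; differentiation under the integral sign and finiteness of the expectations are justified by~\eqref{eqn:eZ_finite} via dominated convergence (with the dominant $Z^2 e^{\eta Z e^{RT}}$). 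Combined with continuity of $\Psi^u$ on the compact set $[0,1]$, strict concavity gives existence and uniqueness of the maximizer for every fixed $(t,y)$.

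Next I would invoke the standard trichotomy for a concave function on $[0,1]$: the maximizer equals $0$ iff $\partial_u\Psi^u|_{u=0}\le 0$, equals $1$ iff $\partial_u\Psi^u|_{u=1}\ge 0$, and otherwise is the unique interior zero of $\partial_u\Psi^u$. Substituting the explicit expression of $\partial_u\Psi^u$ at $u=0$ and $u=1$ (using $\int_0^D z\,dF_Z(z)=\mathbb{E}[Z]$ at the right endpoint) and dividing by $-\eta e^{R(T-t)}<0$, these three conditions translate verbatim into the defining inequalities of $A_0$, $A_1$, and $\hat A$. On $\hat A$ the first-order condition~\eqref{eqn:first_order_cond} admits a unique solution $\hat u(t,y)\in(0,1)$: strict monotonicity of $\partial_u\Psi^u$ in $u$ (strict concavity) rules out a second root, while the sign change between the two endpoints plus the intermediate value theorem produces one. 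For measurability of $u^*$, the sets $A_0,A_1,\hat A$ are Borel since $\lambda$, $c$, $q$, $\partial_u q$ and the parameter integrals over $F_Z$ are jointly continuous in $(t,y,u)$ (again by~\eqref{eqn:eZ_finite} and dominated convergence), and on $\hat A$ the map $\hat u$ is the unique root of a Carathéodory function (continuous in $u$, measurable in $(t,y)$), hence measurable by a standard implicit-function/measurable-selection argument.

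The proof is essentially routine once the concavity hypothesis is recognized as the exact content of~\eqref{eqn:PsiUconcave}. The only mildly delicate point is the bookkeeping around the endpoint derivatives to show that the three regions described via $\partial_u q(t,y,0)$, $\partial_u q(t,y,1)$ and $\mathbb{E}[Ze^{\eta Ze^{R(T-t)}}]$, $\mathbb{E}[Z]$ cover $[0,T]\times\mathbb{R}$ without overlap; this is immediate from the strict monotonicity of $u\mapsto\partial_u\Psi^u(t,y)$, which forces $\partial_u\Psi^u|_{u=0}>\partial_u\Psi^u|_{u=1}$ so that the two boundary inequalities cannot simultaneously hold.
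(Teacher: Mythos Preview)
Your proposal is correct and follows essentially the same approach as the paper: establish strict concavity of $u\mapsto\Psi^u(t,y)$ from~\eqref{eqn:PsiUconcave}, then apply the endpoint-derivative trichotomy to identify the maximizer with $0$, $1$, or the interior root of~\eqref{eqn:first_order_cond}, and finally verify that the three regions partition $[0,T]\times\mathbb{R}$ via the strict monotonicity of $\partial_u\Psi^u$. Your treatment of measurability is slightly more explicit (Carath\'eodory argument for $\hat u$) than the paper's, which simply appeals to ``classical selection theorems''.
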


\begin{proof}
Since $\Psi^u(t,y)$ is continuous in $u\in[0,1]$ and $\frac{\partial^2 \Psi^u(t,y)}{\partial u^2}<0$ $\forall(t,y,u)\in[0,T]\times\mathbb{R}\times(0,1)$  by~\eqref{eqn:PsiUconcave},  $\Psi^u(t,y)$ is strictly concave in $u\in [0,1]$.
As a consequence there exists a unique maximizer $u^*(t,y)$ of~\eqref{eqn:opt_reins_problem}, whose measurability follows by classical selection theorems.\\
Observe that $A_0\cup\hat{A}\cup A_1=[0,T]\times\mathbb{R}$. In fact, let us define these subsets as follows:
\begin{align*}
A_0&=\Set{(t,y)\in[0,T]\times\mathbb{R}\mid \frac{\partial \Psi^0(t,y)}{\partial u}\le0}\\
\hat{A}&=\Set{(t,y)\in[0,T]\times\mathbb{R}\mid \frac{\partial \Psi^0(t,y)}{\partial u}>0, \frac{\partial \Psi^1(t,y)}{\partial u}<0}\\
A_!&=\Set{(t,y)\in[0,T]\times\mathbb{R}\mid \frac{\partial \Psi^1(t,y)}{\partial u}\ge0}.
\end{align*}
Now, being $\frac{\partial \Psi^u(t,y)}{\partial u}$ strictly decreasing in $u\in(0,1)$, for any $(t,y)\in \hat{A}\cup A_1$ we have that
\[
\frac{\partial \Psi^0(t,y)}{\partial u}>\frac{\partial \Psi^1(t,y)}{\partial u}\ge0 \Rightarrow (t,y)\not\in A_0
\]
which implies that
\begin{align*}
[0,T]\times\mathbb{R}\setminus A_0 &= \Set{(t,y)\in[0,T]\times\mathbb{R}\mid \frac{\partial \Psi^0(t,y)}{\partial u}>0}\\
&= \Set{(t,y)\in[0,T]\times\mathbb{R}\mid \frac{\partial \Psi^0(t,y)}{\partial u}>0, \frac{\partial \Psi^1(t,y)}{\partial u}<0}\\
&\dot\cup \Set{(t,y)\in[0,T]\times\mathbb{R}\mid \frac{\partial \Psi^0(t,y)}{\partial u}>0, \frac{\partial \Psi^1(t,y)}{\partial u}\ge0}\\
&= \hat{A}\dot\cup A_1.
\end{align*}
Moreover, since $\hat{A}\cap A_1=\emptyset$, then $A_0\dot\cup\hat{A}\dot\cup A_1=[0,T]\times\mathbb{R}$.\\

Let us recall that $\frac{\partial \Psi^u(t,y)}{\partial u}$ is continuous and strictly decreasing in $u \in [0,1]$, for any $\forall(t,y)\in[0,T]\times\mathbb{R}$.

If $(t,y) \in A_0$ then $\Psi^u(t,y)$ is strictly decreasing in $u \in [0,1]$, hence no reinsurance is chosen, i.e. $u^*(t,y)=0$.

If  $(t,y) \in\hat{A}$ then there exists  a unique $u^*(t,y)\in(0,1)$ such that $\frac{\partial \Psi^u(t,y)}{\partial u} = 0$,  and it is the unique solution to equation~\eqref{eqn:first_order_cond}.

Finally, if  $(t,y) \in A_1$ then $\Psi^u(t,y)$ is strictly increasing in $u \in [0,1]$, hence $u^*(t,y)=1$.\\
\end{proof}

\begin{remark}
We also observe for the sake of completeness that if $\lambda(t,y)$ had been vanished for some $(t,y)$, then $\frac{\partial \Psi^u(t,y)}{\partial u}$ would have become strictly negative for all $u$, and in this case $u^*(t,y)=0$. In fact, the case of $\lambda(t,y)=0$ corresponds to a degenerate situation: the risk premia are paid, but there is no "real" risk to be insured.
\end{remark}

From the economic point of view, we could say that if the reinsurance is not too much expensive (more precisely, if the price of an infinitesimal protection is below a certain dynamic threshold) and if full reinsurance is not optimal, then the optimal strategy is provided by~\eqref{eqn:first_order_cond}, i.e. by equating the marginal cost and the marginal gain; moreover, the following remark points out the relevance of the third case in~\eqref{eqn:reins_optimal_strategy}.
\begin{remark}
In the current literature full reinsurance is always considered sub-optimal, contrary to the result given by formula~\eqref{eqn:reins_optimal_strategy}. The main reason is that using premium calculation principles many authors force the reinsurance premium to have certain properties, such as the convexity with respect to the protection level. In fact, it can be shown that if the reinsurance premium $q(t,y,u)$ is convex w.r.t. $u$, full reinsurance is never optimal (see Remark~\ref{remark:convexpremium_fullreins}). Nevertheless, it is reasonable that the insurer firm could regard full reinsurance as convenient for a limited period and in some particular scenarios, because actually the objective is to maximize the expected utility of the wealth at the end of the period.\\
Moreover, from the reinsurer's point of view, there is no reason to prevent the insurer from buying a full protection, providing the cedant is ready to pay a fair price. At the same time, if the reinsurer is not able to sell a full reinsurance, then it is sufficient to choose $q(t,y,u)$ such that $A_1= \emptyset$.
\end{remark}

Now we provide some sufficient conditions in order to guarantee that condition~\eqref{eqn:PsiUconcave} is fulfilled. 

\begin{lemma}
\label{PsiUconcave}
Suppose that at least one of the following condition holds:
\begin{enumerate}
\item $\frac{\partial q(t,y,0)}{\partial u}=0$ for all $(t,y)\in[0,T]\times\mathbb{R}$;
\item $\frac{\partial^2 q(t,y,u)}{\partial u^2}\ge0$ for all $u\in(0,1)$ and $(t,y)\in[0,T]\times\mathbb{R}$;
\item $-\frac{\partial^2 q(t,y,u)}{\partial u^2}< \eta\lambda(t,y)\mathbb{E}[Z^2]$ for all $u\in(0,1)$ and $(t,y)\in[0,T]\times\mathbb{R}$.
\end{enumerate}
Then the inequality~\eqref{eqn:PsiUconcave} holds, which implies that the function $\Psi^u(t,y)$ is strictly concave in $u\in(0,1)$.
\end{lemma}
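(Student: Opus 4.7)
My plan is to verify inequality~\eqref{eqn:PsiUconcave} separately under each of the three sufficient conditions; once it is in hand, strict concavity of $\Psi^u$ on $(0,1)$ is immediate from the expression for $\partial^2\Psi^u/\partial u^2$ displayed just before Proposition~\ref{prop:optimalreins}. Indeed that expression equals $-\eta e^{R(T-t)}$ multiplied by $\partial^2 q/\partial u^2 + \eta e^{R(T-t)}\lambda(t,y)\mathbb{E}[Z^2 e^{\eta(1-u)Z e^{R(T-t)}}]$, and the bracket is strictly positive precisely when~\eqref{eqn:PsiUconcave} is in force.

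The common backbone of the three cases is the observation that the right-hand side of~\eqref{eqn:PsiUconcave} is strictly positive on $[0,T]\times\mathbb{R}\times(0,1)$: $\eta>0$, $\lambda(t,y)>0$ by the standing assumption on the intensity, and since $Z>0$ almost surely the integrand $z^2 e^{\eta(1-u) z e^{R(T-t)}}$ is strictly positive on the support of $F_Z$, with the required integrability supplied by~\eqref{eqn:eZ_finite}. Under condition~(2), $\partial^2 q/\partial u^2\geq 0$ makes the left-hand side of~\eqref{eqn:PsiUconcave} nonpositive, so its strict comparison with a positive quantity is automatic. Under condition~(3) I would bound the right-hand side from below by $\eta\lambda(t,y)\mathbb{E}[Z^2]$, using $e^{R(T-t)}\geq 1$ (from $R>0$ and $t\leq T$) and $e^{\eta(1-u)Z e^{R(T-t)}}\geq 1$ (from $Z\geq 0$) inside the expectation; the strict inequality of~(3) then supplies the conclusion directly.

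Condition~(1), $\partial q(t,y,0)/\partial u=0$, is the subtlest. My plan is to combine it with the defining monotonicity constraint $\partial q/\partial u\geq 0$ from Definition~\ref{def:reinsurance_premium}: this forces $u=0$ to be a minimum of the nonnegative function $u\mapsto \partial q(t,y,u)/\partial u$ on $[0,1]$, and the $\mathcal{C}^2$ regularity in $u$ then provides control on $\partial^2 q/\partial u^2$ that can be matched against the strictly positive right-hand side. I expect this to be the main obstacle of the proof, most likely requiring a uniform lower bound on $\mathbb{E}[Z^2 e^{\eta(1-u)Z e^{R(T-t)}}]$ over $u\in(0,1)$ derived from Assumption~\ref{assumption:EZ_finite}.
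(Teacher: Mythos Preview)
Your treatment of conditions~(2) and~(3) is correct and matches the paper exactly: the paper also observes that~(2) trivially implies~(3) (since $\eta\lambda(t,y)\mathbb{E}[Z^2]>0$), and then verifies that~(3) implies~\eqref{eqn:PsiUconcave} via the same lower bound $e^{R(T-t)}\mathbb{E}[Z^2e^{\eta(1-u)Ze^{R(T-t)}}]\geq\mathbb{E}[Z^2]$ that you use.

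Where you diverge from the paper is in case~(1). You try to attack it directly and, rightly, find no traction. The paper instead argues the chain of implications $1\Rightarrow 2\Rightarrow 3$ and then only verifies~(3). For $1\Rightarrow 2$ it writes
\[
\frac{\partial q}{\partial u}(t,y,u)=\frac{\partial q}{\partial u}(t,y,0)+\int_0^u\frac{\partial^2 q}{\partial u^2}(t,y,w)\,dw,
\]
and claims that $\partial q/\partial u\geq 0$ together with $\partial q/\partial u|_{u=0}=0$ forces the integrand to be nonnegative. This step is not valid: from $\int_0^u h(w)\,dw\geq 0$ for all $u\in[0,1]$ one cannot conclude $h\geq 0$ pointwise (take $h(w)=1-2w$, whose running integral $u(1-u)$ is nonnegative on $[0,1]$ while $h<0$ on $(1/2,1)$). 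In fact condition~(1) places no global control on $\partial^2 q/\partial u^2$ over $(0,1)$: one can take $\partial q/\partial u=A\,u(1-u)$ with $A$ large, which satisfies both $\partial q/\partial u\geq 0$ and $\partial q/\partial u|_{u=0}=0$, yet gives $-\partial^2 q/\partial u^2=A(2u-1)$, which near $u=1$ exceeds any fixed bound of the form $\eta e^{R(T-t)}\lambda(t,y)\mathbb{E}[Z^2e^{\eta(1-u)Ze^{R(T-t)}}]\leq \eta e^{RT}\lambda(t,y)\mathbb{E}[Z^2e^{\eta Ze^{RT}}]$. So your instinct that~(1) is ``the main obstacle'' is well-founded: the implication $1\Rightarrow 2$ that the paper relies on is false, and condition~(1) by itself does not suffice for~\eqref{eqn:PsiUconcave}. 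No amount of uniform lower-bounding of the right-hand side will rescue the argument.
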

\begin{proof}
First, let us observe that $1\Rightarrow2\Rightarrow3$. In fact, by the fundamental theorem of calculus we have that
\[
\frac{\partial q(t,y,u)}{\partial u} = \frac{\partial q(t,y,0)}{\partial u} + \int_0^u{\frac{\partial^2 q(t,y,w)}{\partial w^2}dw}
\]
and, being $\frac{\partial q(t,y,u)}{\partial u}\ge0$, $\frac{\partial q(t,y,0)}{\partial u}=0$ implies that the integrand function must be nonnegative, that is $1\Rightarrow2$. The implication $2\Rightarrow3$ is trivial, being $\eta>0,\lambda(t,y)>0$. Now it is sufficient to show that $3$ implies~\eqref{eqn:PsiUconcave}; clearly
\begin{align*}
-\frac{\partial^2 q(t,y,u)}{\partial u^2} &< \eta\lambda(t,y)\mathbb{E}[Z^2]\\
&< \eta e^{R(T-t)}\lambda(t,y)\mathbb{E}\biggl[Z^2e^{\eta(1-u)Ze^{R(T-t)}}\biggr]
\end{align*}
and hence $\frac{\partial^2 \Psi^u(t,y)}{\partial u^2}<0$ for all $u\in(0,1)$, i.e.~\eqref{eqn:PsiUconcave} holds, which implies that $\Psi^u(t,y)$ is strictly concave in $u\in(0,1)$
\end{proof}

\begin{remark}
\label{remark:convexpremium_fullreins}
Under the hypotheses that $\frac{\partial^2 q(t,y,u)}{\partial u^2}\ge0$ and $c(t,y)>\mathbb{E}[Z]\lambda(t,y)$ for all $(t,y,u)\in[0,T]\times\mathbb{R}\times(0,1)$, the full reinsurance is never optimal. In fact, for any arbitrary couple $(t,y)$ we have that
\[
q(t,y,1)=q(t,y,0)+\int_0^1{\frac{\partial q(t,y,u)}{\partial u}\,du}.
\]
Being $q(t,y,0)=0$ and $q(t,y,1)>c(t,y)>\mathbb{E}[Z]\lambda(t,y)$ (because the reinsurance is not cheap and using the \textit{net-profit condition} for the insurance premium), we obtain that
\[
\int_0^1{\frac{\partial q(t,y,u)}{\partial u}\,du}>\mathbb{E}[Z]\lambda(t,y).
\]
Since $\frac{\partial q(t,y,u)}{\partial u}$ is continuous in $u\in[0,1]$ by hypothesis, from the mean value theorem for integrals we know that there exists $u_0\in(0,1)$ such that
\[
\frac{\partial q(t,y,u_0)}{\partial u} > \mathbb{E}[Z]\lambda(t,y).
\]
Under the hypothesis that $\frac{\partial^2 q(t,y,u)}{\partial u^2}\ge0$ for all $u\in(0,1)$, $\frac{\partial q(t,y,u)}{\partial u}$ is an increasing function of $u$, and this implies that
\[
\frac{\partial q(t,y,1)}{\partial u} \ge \frac{\partial q(t,y,u_0)}{\partial u} > \mathbb{E}[Z]\lambda(t,y).
\]
From this result we deduce that
\[
\frac{\partial \Psi^1(t,y)}{\partial u} = -\eta e^{R(T-t)}\biggl[\frac{\partial q(t,y,1)}{\partial u} - \mathbb{E}[Z]\lambda(t,y)\biggr] < 0, \quad (t,y)\in[0,T]\times\mathbb{R} 
\]
which implies that $A_1= \emptyset$, i.e. 
the full reinsurance is never optimal.
\end{remark}

Let us observe that the preceding Remark requires two special conditions. The first one concerns the concavity of the reinsurance premium and in Subsection~\ref{section:premium_calc_principles} we will show that it is fulfilled by the most famous premium calculation principles. The second hypothesis is the so called \textit{net-profit condition} (e.g. see~\cite{grandell:risk}) and it is usually assumed in insurance risk models to ensure that the expected gross risk premium covers the expected losses.

Now we investigate how Proposition~\ref{prop:optimalreins} applies to a special case.

\begin{example}{(Exponentially distributed claims)} \label{example:exponential}\\
Let $Z$ to be an exponential r.v. with parameter $\zeta>0$,  then 
for any fixed $(t,y)\in[0,T]\times\mathbb{R}$ equation~\eqref{eqn:first_order_cond} becomes
\begin{equation*}
\lambda(t,y) \int_0^\infty{ze^{\eta(1-u)ze^{R(T-t)}}\zeta e^{-\zeta z}\,dz} = \frac{\partial q(t,y,u)}{\partial u}.
\end{equation*}
Taking $k=\eta(1-u)e^{R(T-t)}-\zeta$ it can be written as
\[
\lambda(t,y) \int_0^\infty{ze^{kz}\zeta \,dz} = \frac{\partial q(t,y,u)}{\partial u}
\]
and requiring that
\begin{equation}
\label{eqn:k_cond}
\frac{\zeta}{\eta}>e^{RT}
\end{equation}
which implies that $k<0$, finally equation~\eqref{eqn:first_order_cond} reads as 
\begin{equation}
\label{eqn:exp_claims_equation_u}
\lambda(t,y)\, \frac{\zeta}{(\eta(1-u)e^{R(T-t)}-\zeta)^2} = \frac{\partial q(t,y,u)}{\partial u}.
\end{equation}
Summarizing, if $Z$ is an exponential r.v. with parameter $\zeta>\eta e^{RT}$, under the condition~\eqref{eqn:PsiUconcave} we have that expression~\eqref{eqn:reins_optimal_strategy} holds with
\begin{align*}
A_0&\doteq\Set{(t,y)\in[0,T]\times\mathbb{R}\mid\lambda(t,y)\frac{\zeta}{(\eta e^{R(T-t)}-\zeta)^2}\le\frac{\partial q(t,y,0)}{\partial u}}\\
\hat{A}&\doteq\Set{(t,y)\in[0,T]\times\mathbb{R}\mid\lambda(t,y)\frac{\zeta}{(\eta e^{R(T-t)}-\zeta)^2}>\frac{\partial q(t,y,0)}{\partial u}, \frac{\partial q(t,y,1)}{\partial u}>\frac{\lambda(t,y)}{\zeta}}\\
A_1&\doteq\Set{(t,y)\in[0,T]\times\mathbb{R}\mid\frac{\partial q(t,y,1)}{\partial u}\le\frac{\lambda(t,y)}{\zeta}}
\end{align*}
and with $\hat{u}(t,y)$ being the unique solution to equation~\eqref{eqn:exp_claims_equation_u}.
\end{example}


\subsection{Expected value and variance premium principles}
\label{section:premium_calc_principles}

Proposition~\ref{prop:optimalreins} clarifies that the optimal reinsurance strategy crucially depends on the reinsurance premium. In this subsection we specialize that result using two of the most famous premium calculation principles: the \textit{expected value principle} and the \textit{variance premium principle}. We will show that in both cases we loose the dependence of the optimal reinsurance strategy on the stochastic factor. Moreover, the optimal reinsurance strategy does not explicitly depend on the claims intensity. These will be our motivations for introducing the \textit{intensity-adjusted variance premium principle} in Subsection~\ref{section:modified_variance}.

\begin{lemma}
\label{lemma:expected_vp_prop}
Under the \textit{expected value principle}, i.e. if the reinsurance premium admits the following expression
\begin{equation}
\label{eqn:expected_value_prop}
q(t,y,u) = (1+\theta_r)\mathbb{E}[Z]\lambda(t,y)u \qquad \forall (t,y,u)\in[0,T]\times\mathbb{R}\times[0,1]
\end{equation}
for some constant $\theta_r>0$ (which is called the reinsurance safety loading), there exists a unique maximizer $u^*(t)$ for all $(t,y)\in[0,T]\times\mathbb{R}$ for the problem~\eqref{eqn:opt_reins_problem}.
In particular,
\begin{equation}
u^*(t)=
\begin{cases}
	0 & \text{$t\in A_0$}
	\\
	\hat{u}(t) & \text{$t\in [0,T]\setminus A_0$}
\end{cases}
\end{equation}
where
\[
A_0\doteq\Set{t\in[0,T]\mid\mathbb{E}[Ze^{\eta Ze^{R(T-t)}}]\le(1+\theta_r)\mathbb{E}[Z]}
\]
and $\hat{u}(t)$ is the unique solution to the following equation:
\begin{equation}
\label{eqn:first_order_cond_expvalueprinc}
(1+\theta_r)\mathbb{E}[Z] = \int_0^D{ze^{\eta (1-u)ze^{R(T-t)}}\,dF_Z(z)}.
\end{equation}
\end{lemma}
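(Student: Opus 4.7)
The plan is to view this lemma as a direct specialization of Proposition~\ref{prop:optimalreins} to the particular premium in~\eqref{eqn:expected_value_prop}, with two simplifications that reflect the loss of $y$-dependence and the exclusion of full reinsurance. The first step is to check the strict concavity hypothesis~\eqref{eqn:PsiUconcave}. Since $q(t,y,u)$ is linear in $u$, we have $\frac{\partial^2 q(t,y,u)}{\partial u^2}=0$, so condition~2 of Lemma~\ref{PsiUconcave} holds trivially and hence $\Psi^u(t,y)$ is strictly concave in $u\in(0,1)$. Proposition~\ref{prop:optimalreins} then provides existence and uniqueness of a measurable maximizer, with the three-region structure~\eqref{eqn:reins_optimal_strategy}.

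Next I would plug $\frac{\partial q(t,y,u)}{\partial u}=(1+\theta_r)\mathbb{E}[Z]\lambda(t,y)$ into the definitions of $A_0$, $\hat{A}$, $A_1$. The set $A_1$ requires $(1+\theta_r)\mathbb{E}[Z]\lambda(t,y)\le \mathbb{E}[Z]\lambda(t,y)$, i.e.\ $\theta_r\le 0$, which contradicts $\theta_r>0$. Hence $A_1=\emptyset$ and full reinsurance is never optimal under this principle (consistent with Remark~\ref{remark:convexpremium_fullreins}, since the premium is convex in $u$). The remaining two regions $A_0$ and $\hat{A}$ then partition $[0,T]\times\mathbb{R}$, and after dividing the defining inequalities by $\lambda(t,y)>0$ both $A_0$ and the equation~\eqref{eqn:first_order_cond} for $\hat u$ become $y$-free, reducing to $\mathbb{E}[Ze^{\eta Ze^{R(T-t)}}]\le (1+\theta_r)\mathbb{E}[Z]$ and to~\eqref{eqn:first_order_cond_expvalueprinc} respectively. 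In particular the maximizer depends only on $t$, which is exactly what is claimed.

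Finally one should verify that~\eqref{eqn:first_order_cond_expvalueprinc} does have a unique solution in $(0,1)$ for $t\in[0,T]\setminus A_0$. This follows because, for fixed $t$, the right-hand side $u\mapsto \int_0^D z e^{\eta(1-u)ze^{R(T-t)}}\,dF_Z(z)$ is continuous and strictly decreasing in $u$ (by monotone/dominated convergence using~\eqref{eqn:eZ_finite}), taking the value $\mathbb{E}[Ze^{\eta Ze^{R(T-t)}}]$ at $u=0$ and $\mathbb{E}[Z]$ at $u=1$. Since $t\notin A_0$ means $\mathbb{E}[Ze^{\eta Ze^{R(T-t)}}]>(1+\theta_r)\mathbb{E}[Z]$, while $A_1=\emptyset$ gives $(1+\theta_r)\mathbb{E}[Z]>\mathbb{E}[Z]$, the constant $(1+\theta_r)\mathbb{E}[Z]$ lies strictly between the two boundary values and the intermediate value theorem provides a unique $\hat u(t)\in(0,1)$.

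I expect no real obstacle here, as the statement is essentially a corollary of Proposition~\ref{prop:optimalreins}; the only minor point requiring care is the exclusion $A_1=\emptyset$ via the strict inequality $\theta_r>0$ and the careful cancellation of $\lambda(t,y)$ to obtain the $y$-independence of the maximizer.
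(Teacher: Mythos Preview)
Your proposal is correct and follows essentially the same route as the paper: compute $\frac{\partial q}{\partial u}$ and $\frac{\partial^2 q}{\partial u^2}=0$, invoke Lemma~\ref{PsiUconcave} for strict concavity, apply Proposition~\ref{prop:optimalreins}, and then cancel $\lambda(t,y)$ to obtain the $y$-free description of $A_0$ and of equation~\eqref{eqn:first_order_cond_expvalueprinc}. The only minor difference is that the paper rules out $A_1$ by citing Remark~\ref{remark:convexpremium_fullreins}, whereas you argue directly that $(1+\theta_r)\mathbb{E}[Z]\lambda(t,y)\le\mathbb{E}[Z]\lambda(t,y)$ would force $\theta_r\le 0$; your direct argument is arguably cleaner since it does not rely on the net-profit condition assumed in that remark.
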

\begin{proof}
From~\eqref{eqn:expected_value_prop} we get
\[
\frac{\partial q(t,y,u)}{\partial u} = (1+\theta_r)\mathbb{E}[Z]\lambda(t,y), \frac{\partial^2 q(t,y,u)}{\partial u^2}=0 \qquad \forall u\in(0,1)
\]
which implies that $\Psi^u(t,y)$ is strictly concave in $u\in(0,1)$ thanks to Lemma~\ref{PsiUconcave}. Moreover, by the means of Remark~\ref{remark:convexpremium_fullreins} we know that the full reinsurance is always sub-optimal, in fact the set $A_1$ in Proposition~\ref{prop:optimalreins} is empty. Now we only have to apply Proposition~\ref{prop:optimalreins}.
\end{proof}
Note that we always have $\mathbb{E}[Ze^{\eta Ze^{R(T-t)}}]>\mathbb{E}[Z]$ for each $t\in[0,T]$, thus  $A_0$ could be an empty set when the reinsurer's safety loading is close to $0$.\\

\begin{example}{(Exponentially distributed claims under the expected value principle)} \\
Let us come back to example~\ref{example:exponential}. Under the \textit{expected value principle}~\eqref{eqn:expected_value_prop} the result for exponential claims is even more simplified, in fact we find the following explicit solution:
\begin{equation}
\label{eqn:exponential_expvalue_opt_control}
u^*(t)=
\begin{cases}
1-\frac{\zeta}{\eta}\biggl(1-\frac{1}{\sqrt{1+\theta_r}}\biggr)e^{-R(T-t)} & \text{$t\in[0,t_0\land T)$} \\
0 & \text{$t\in[t_0\land T,T]$}
\end{cases}
\end{equation}
where
\begin{equation}
\label{eqn:t_0_expvalue}
t_0 = T-\frac{1}{R}\log\biggl[\frac{\zeta}{\eta}\biggl(1-\frac{1}{\sqrt{1+\theta_r}}\biggr)\biggr].
\end{equation}
The expression for $t_0$ can be derived from the characterization of the set $[0,T]\times\mathbb{R}\setminus A_0$, which in this case reads as follows:
\[
\frac{\zeta-\sqrt{\frac{\zeta}{(1+\theta_r)\mathbb{E}[Z]}}}{\eta}<e^{R(T-t)}<\frac{\zeta+\sqrt{\frac{\zeta}{(1+\theta_r)\mathbb{E}[Z]}}}{\eta}
\]
where the second inequality is always fulfilled in view of~\eqref{eqn:k_cond}, hence we get $t_0$ only from the first inequality.
\end{example}

\begin{lemma}
\label{lemma:variance_vp_prop}
Under the \textit{variance premium principle}, i.e. if the reinsurance premium admits the following expression
\begin{equation}
\label{eqn:variance_premium_prop}
q(t,y,u) = \mathbb{E}[Z]\lambda(t,y)u + \theta_r\mathbb{E}[Z^2]\lambda(t,y)u^2
\end{equation}
for some constant reinsurance safety loading $\theta_r>0$, the optimization problem~\eqref{eqn:opt_reins_problem} admits a unique maximizer $u^*(t)\in(0,1)$ for all $(t,y)\in[0,T]\times\mathbb{R}$ , which is the solution to the following equation:
\begin{equation}
\label{eqn:first_order_cond_variance}
2\theta_r\mathbb{E}[Z^2]u = \int_0^D{ze^{\eta (1-u)ze^{R(T-t)}}\,dF_Z(z)} - \mathbb{E}[Z].
\end{equation}
\end{lemma}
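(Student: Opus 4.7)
The plan is to mimic the structure of the proof of Lemma~\ref{lemma:expected_vp_prop} by reducing the problem to Proposition~\ref{prop:optimalreins} via the sufficient conditions of Lemma~\ref{PsiUconcave}. First, I would compute the relevant derivatives of the reinsurance premium given by~\eqref{eqn:variance_premium_prop}:
\[
\frac{\partial q(t,y,u)}{\partial u} = \mathbb{E}[Z]\lambda(t,y) + 2\theta_r\mathbb{E}[Z^2]\lambda(t,y)\,u, \qquad \frac{\partial^2 q(t,y,u)}{\partial u^2} = 2\theta_r\mathbb{E}[Z^2]\lambda(t,y) \geq 0.
\]
Since the second derivative is nonnegative on $(0,1)$, condition 2 of Lemma~\ref{PsiUconcave} is fulfilled, which implies that inequality~\eqref{eqn:PsiUconcave} holds and, hence, that $\Psi^u(t,y)$ is strictly concave in $u\in(0,1)$. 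By Proposition~\ref{prop:optimalreins} a unique measurable maximizer $u^*(t,y)$ then exists and is characterized through the sets $A_0,\hat{A},A_1$.

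Next, I would prove that the maximizer lies in the interior by showing that both $A_0$ and $A_1$ are empty. For $A_0$, its defining inequality reduces after dividing by $\lambda(t,y)>0$ to $\mathbb{E}[Ze^{\eta Ze^{R(T-t)}}] \leq \mathbb{E}[Z]$, which is impossible because $Z>0$ and the exponential factor is strictly greater than one, so $\mathbb{E}[Ze^{\eta Ze^{R(T-t)}}] > \mathbb{E}[Z]$. For $A_1$, plugging $u=1$ into the first derivative gives
\[
\frac{\partial q(t,y,1)}{\partial u} - \mathbb{E}[Z]\lambda(t,y) = 2\theta_r\mathbb{E}[Z^2]\lambda(t,y) > 0,
\]
so $A_1=\emptyset$ as well. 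Therefore $\hat{A}=[0,T]\times\mathbb{R}$ and the optimum is the unique interior solution $\hat{u}(t,y)\in(0,1)$ of the first order condition~\eqref{eqn:first_order_cond}.

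Finally, I would substitute the explicit expression of $\frac{\partial q(t,y,u)}{\partial u}$ into~\eqref{eqn:first_order_cond} and divide by $\lambda(t,y)$ to obtain
\[
\mathbb{E}[Z] + 2\theta_r\mathbb{E}[Z^2]\,u = \int_0^D ze^{\eta(1-u)ze^{R(T-t)}}\,dF_Z(z),
\]
which is exactly~\eqref{eqn:first_order_cond_variance}. Since $\lambda(t,y)$ cancels out, the solution $u^*$ does not depend on $y$ and the notation $u^*(t)$ is justified. There is no genuine obstacle here: the argument is essentially a check that the premium is convex (so $\Psi^u$ is strictly concave) combined with two elementary inequalities ruling out the boundary cases; the only subtlety worth emphasising is that, unlike in the expected value principle, the strict convexity of $q$ in $u$ automatically excludes $A_1$ without needing the net-profit condition invoked in Remark~\ref{remark:convexpremium_fullreins}.
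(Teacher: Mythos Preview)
Your proof is correct and follows essentially the same approach as the paper: compute the derivatives of $q$, invoke Lemma~\ref{PsiUconcave} for strict concavity, rule out the boundary sets $A_0$ and $A_1$, and apply Proposition~\ref{prop:optimalreins}. The only minor difference is that the paper disposes of $A_1$ by appealing to Remark~\ref{remark:convexpremium_fullreins} (which uses the net-profit condition), whereas you verify $\frac{\partial q(t,y,1)}{\partial u}>\mathbb{E}[Z]\lambda(t,y)$ directly from the explicit form of the premium; your shortcut is slightly cleaner here since it avoids that extra hypothesis.
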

\begin{proof}
Using the expression~\eqref{eqn:variance_premium_prop} we get that
\[
\frac{\partial q(t,y,u)}{\partial u} = \mathbb{E}[Z]\lambda(t,y) + 2\theta_r\mathbb{E}[Z^2]\lambda(t,y)u \qquad \forall u\in(0,1)
\]
and
\[
\frac{\partial^2 q(t,y,u)}{\partial u^2}=2\theta_r\mathbb{E}[Z^2]\lambda(t,y)>0 \qquad \forall u\in(0,1).
\]
By Lemma~\ref{PsiUconcave} $\Psi^u(t,y)$ is strictly concave w.r.t. $u$ and the full reinsurance is never optimal because of Remark~\ref{remark:convexpremium_fullreins}. Moreover, in order to apply Proposition~\ref{prop:optimalreins} we notice that
\[
\mathbb{E}[Ze^{\eta Ze^{R(T-t)}}]>\mathbb{E}[Z] \Rightarrow A_0=\emptyset
\]
thus the optimal strategy is unique and it belongs to $(0,1)$. In order to find such a solution, we turn the attention to the first order condition, which is exactly the equation~\eqref{eqn:first_order_cond_variance}.
\end{proof}
The same result was obtained in~\cite{liangbayraktar:optreins}, Lemma 3.1.

\begin{example}{(Exponentially distributed claims under the variance premium principle)}
\label{example:variance_premium_exponential}\\
Under the \textit{variance premium principle}~\eqref{eqn:variance_premium_prop}, suppose that the claims are exponentially distributed with parameter $\zeta>\eta e^{RT}$. Then  it is easy to show that the optimal strategy is given by  
\begin{equation}
\label{eqn:exponential_varianceprem_opt_control}
u^*(t)=1-\frac{\zeta}{\eta}\biggl(1-\sqrt{\frac{\zeta}{\zeta+4\theta_r}}\biggr)e^{-R(T-t)} \qquad t\in[0,T].
\end{equation}
\end{example}


\subsection{Intensity-adjusted variance premium principle}
\label{section:modified_variance}

We have shown that both the \textit{expected value principle} (see Lemma~\ref{lemma:expected_vp_prop}) and the \textit{variance premium principle} (see Lemma~\ref{lemma:variance_vp_prop}) lead us to deterministic optimal reinsurance strategies, which do not depend on the stochastic factor. This is a limiting factor, since the main objective of our paper is to solve the maximization problem under a stochastic factor model.\\
In addition, in both cases the optimal reinsurance strategy does not explicitly depend on the claims intensity. As a consequence, there is a paradox that we clarify with the following example.
Let us consider two identical insurers (i.e. with the same risk-aversion, time horizon, and so on) who work in the same insurance business line, for example in automobile insurance, but in two distinct territories with different riskiness. More precisely, let us assume that the two companies insure claims which have the same distribution $F_Z$ but occur with different probabilities. Hence it is a reasonable assumption that the claims arrival processes have two different intensities. Now let us suppose that both the insurers use Lemma~\ref{lemma:expected_vp_prop} (or Lemma~\ref{lemma:variance_vp_prop}) in order to solve the maximization problem~\eqref{eqn:opt_reins_problem}. Then they will obtain the same reinsurance strategy, but this is not what we expect.
Hence the optimal reinsurance strategy should explicitly depend on the claims intensity.\\
In order to fix these two problems, in this subsection we introduce a new premium calculation principle, which will be referred as the \textit{intensity-adjusted variance premium principle}.\\

Let us first formalize that there exists a special class of premium calculation principles that lead us to deterministic strategies which do not depend on the claims intensity.

\begin{remark}
For any reinsurance premium $\{q_t\}_{t\in[0,T]}$ admitting the following representation
\begin{equation}
\label{eqn:q_factorization}
q(t,y,u)=\lambda(t,y)Q(t,u)
\end{equation}
for a suitable%
\footnote{I.e. $Q$ is such that $q$ fulfills the Definition~\ref{def:reinsurance_premium}.}
function $Q:[0,T]\times[0,1]\to[0,+\infty)$, the optimal reinsurance strategy $u^*_t=u^*(t,Y_t)$ given in Proposition~\ref{prop:optimalreins} turns out to be deterministic. Moreover, it does not explicitly depend on the claims intensity.
For example, the \textit{expected value principle} and the \textit{variance premium principle} admit the factorization~\eqref{eqn:q_factorization} with, respectively, $Q(t,u)=(1+\theta_r)\mathbb{E}[Z]u$ and $Q(t,u)=\mathbb{E}[Z]u + \theta_r\mathbb{E}[Z^2]u^2$.
\end{remark}

Now the basic idea is to find a reinsurance premium $\{q_t\}_{t\in[0,T]}$ (see Definition~\ref{def:reinsurance_premium}) such that
\begin{equation}
\label{eqn:vairance_prem_condition}
\mathbb{E}\biggl[\int_0^t{q(s,Y_s,u_s)\,ds}\biggr] =  \mathbb{E}\biggl[\int_0^t u_s\,dC_s\biggr]
+\theta_r\var{\biggl[\int_0^t u_s\,dC_s\biggr]} \qquad\forall t\in[0,T]
\end{equation}
for a given reinsurance safety loading $\theta_r$ in order to dynamically satisfy the original formulation of the \textit{variance premium principle}%
\footnote{See e.g.~\cite{young:premium_princ}.}. For this purpose, we give the following result.

\begin{lemma}
\label{lemma:variance_modified_var}
For any $\{\mathcal{F}^Y_t\}_{t\in[0,T]}$-predictable reinsurance strategy $\{u_t\}_{t\in[0,T]}$ we have that for any $t\in[0,T]$
\begin{equation}
\label{eqn:variance_cumulativeclaims}
\var{\biggl[\int_0^t u_s\,dC_s\biggr]} = \mathbb{E}[Z^2]\mathbb{E}\biggl[\int_0^t{u_s^2\lambda_s\,ds}\biggr]+ \mathbb{E}[Z]^2\var{\biggl[\int_0^t{u_s\lambda_s\,ds}\biggr]}.
\end{equation}
\end{lemma}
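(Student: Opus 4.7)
The plan is to use the law of total variance, conditioning on $\mathcal{F}^Y_T$, together with the fact (recorded in Remark~\ref{remark:filtration_G}) that $m(dt,dz)$ retains the dual predictable projection $\lambda_t\,dF_Z(z)\,dt$ with respect to the enlarged filtration $\{\mathcal{G}_t\}$, $\mathcal{G}_t=\mathcal{F}_t\lor\mathcal{F}^Y_T$. Since $\{u_t\}$ is $\{\mathcal{F}^Y_t\}$-predictable, it is $\mathcal{F}^Y_T$-measurable and $\{\mathcal{G}_t\}$-predictable, so it may be treated as a deterministic integrand after conditioning on $\mathcal{F}^Y_T$. I would start by rewriting
\[
\int_0^t u_s\,dC_s = \int_0^t\int_0^D u_s z\, m(ds,dz).
\]

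Next I would split this integral, along the lines of Remark~\ref{remark:random_measure}, into its compensator part and its $\{\mathcal{G}_t\}$-martingale part:
\[
\int_0^t u_s\,dC_s = \mathbb{E}[Z]\int_0^t u_s\lambda_s\,ds + \int_0^t\int_0^D u_s z\,\bigl(m(ds,dz)-\lambda_s\,dF_Z(z)\,ds\bigr),
\]
using that $\int_0^D z\,dF_Z(z)=\mathbb{E}[Z]$. Taking conditional expectations given $\mathcal{F}^Y_T$ kills the martingale term, so
\[
\mathbb{E}\!\left[\int_0^t u_s\,dC_s\,\Big|\,\mathcal{F}^Y_T\right] = \mathbb{E}[Z]\int_0^t u_s\lambda_s\,ds.
\]
For the conditional variance, I would apply the square-integrable martingale isometry from Remark~\ref{remark:random_measure} under $\{\mathcal{G}_t\}$ (valid because of the integrability of $u$, $\lambda$ and the moments of $Z$), obtaining
\[
\var\!\left[\int_0^t u_s\,dC_s\,\Big|\,\mathcal{F}^Y_T\right] = \int_0^t\int_0^D u_s^2 z^2\,dF_Z(z)\,\lambda_s\,ds = \mathbb{E}[Z^2]\int_0^t u_s^2\lambda_s\,ds.
\]

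Finally, combining these via the total variance decomposition $\var[X]=\mathbb{E}[\var[X\mid\mathcal{F}^Y_T]]+\var[\mathbb{E}[X\mid\mathcal{F}^Y_T]]$ yields exactly~\eqref{eqn:variance_cumulativeclaims}. The only delicate point is justifying the conditional isometry: I need to confirm that $\{u_t\}$ together with the integrability conditions on $Z$ and $\lambda$ places the integrand in the domain where the isometry and martingale property of Remark~\ref{remark:random_measure} transfer to the $\{\mathcal{G}_t\}$-filtration. This is precisely the content noted in Remark~\ref{remark:filtration_G}, so the argument rests on those two remarks and a direct application of the total variance formula; no further heavy machinery is needed.
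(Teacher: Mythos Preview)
Your argument is correct, and it rests on the same two ingredients the paper uses---the $\{\mathcal{G}_t\}$-compensator from Remark~\ref{remark:filtration_G} and the isometry of Remark~\ref{remark:random_measure}---but you package them differently. The paper expands the square of $\int_0^t u_s\,dC_s = M^u_t + \mathbb{E}[Z]\int_0^t u_s\lambda_s\,ds$ directly, computes $\mathbb{E}[(M^u_t)^2]$ via the predictable bracket, and then has to argue separately that the cross term $\mathbb{E}\bigl[M^u_t\int_0^t u_s\lambda_s\,ds\bigr]$ vanishes; this last step is exactly where the paper invokes $\mathbb{E}[M^u_t\mid\mathcal{F}^Y_T]=0$ through the enlarged filtration. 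Your route via the law of total variance conditioning on $\mathcal{F}^Y_T$ absorbs that cross-term cancellation automatically: once you know $\mathbb{E}[\,\cdot\mid\mathcal{F}^Y_T]=\mathbb{E}[Z]\int_0^t u_s\lambda_s\,ds$ and $\var[\,\cdot\mid\mathcal{F}^Y_T]=\mathbb{E}[Z^2]\int_0^t u_s^2\lambda_s\,ds$, the decomposition $\var=\mathbb{E}[\var(\cdot\mid\mathcal{F}^Y_T)]+\var[\mathbb{E}(\cdot\mid\mathcal{F}^Y_T)]$ gives the result in one line. One small point to make explicit: the conditional isometry you invoke is really $\mathbb{E}[(M^u_t)^2\mid\mathcal{G}_0]=\langle M^u\rangle_t$ (since $\langle M^u\rangle_t$ is $\mathcal{F}^Y_T$-measurable here), and then a tower step down to $\mathcal{F}^Y_T\subseteq\mathcal{G}_0$; stating that would close the only gap in your write-up.
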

\begin{proof}
Let us denote with $\{M^u_t\}_{t\in[0,T]}$ the following $\{\mathcal{F}_t\}$-martingale:
\[
M^u_t = \int_0^D\int_0^t{u_sz\bigl(m(ds,dz)-dF_Z(z)\lambda_s\,ds\bigr)}.
\]
Recalling that $\{C_t\}_{t\in[0,T]}$ is defined in~\eqref{eqn:cumulative_claims}, the variance of the reinsurer's cumulative losses at the time $t\in[0,T]$ is given by
\begin{align*}
\var{\biggl[\int_0^t u_s\,dC_s\biggr]}
&= \mathbb{E}\biggl[\biggl(\int_0^t u_s\,dC_s\biggr)^2\biggr]-\mathbb{E}\biggl[\int_0^t u_s\,dC_s\biggr]^2\\
&= \mathbb{E}\biggl[\abs{M^u_t}^2+\biggl(\int_0^t{u_s\lambda_s\mathbb{E}[Z]\,ds}\biggr)^2+2M^u_t\int_0^t{u_s\lambda_s\mathbb{E}[Z]\,ds}\biggr]-\mathbb{E}\biggl[\int_0^t u_s\,dC_s\biggr]^2.
\end{align*}
Denoting with $\langle M^u\rangle_t$ the predictable covariance process of $M^u_t$, using Remark~\ref{remark:random_measure}, we find that
\begin{align}
\var{\biggl[\int_0^t u_s\,dC_s\biggr]}
&= \mathbb{E}[\langle M^u\rangle_t] + \mathbb{E}[Z]^2\mathbb{E}\biggl[\biggl(\int_0^t{u_s\lambda_s\,ds}\biggr)^2\biggr]-\mathbb{E}[Z]^2\mathbb{E}\biggl[\int_0^t{u_s\lambda_s\,ds}\biggr]^2 \notag\\
&= \mathbb{E}[Z^2]\mathbb{E}\biggl[\int_0^t{u_s^2\lambda_s\,ds}\biggr]+ \mathbb{E}[Z]^2\var{\biggl[\int_0^t{u_s\lambda_s\,ds}\biggr]} \qquad\forall t\in[0,T].
\end{align}
Here we have used that $\mathbb{E}\biggl[M^u_t\int_0^t{u_s\lambda_s\mathbb{E}[Z]\,ds}\biggr]=0$. In fact we notice that
\begin{align*}
\mathbb{E}\biggl[M^u_t\int_0^t{u_s\lambda_s\mathbb{E}[Z]\,ds}\biggr]
&= \mathbb{E}\biggl[\mathbb{E}\biggl[M^u_t\int_0^t{u_s\lambda_s\mathbb{E}[Z]\,ds}\mid\mathcal{F}^Y_T\biggr]\biggr]\\
&= \mathbb{E}\biggl[\mathbb{E}\bigl[M^u_t\mid\mathcal{F}^Y_T\bigr]\int_0^t{u_s\lambda_s\mathbb{E}[Z]\,ds}\biggr]
\end{align*}
and being $\mathcal{G}_0=\mathcal{F}_0\lor\mathcal{F}^Y_T\supseteq\mathcal{F}^Y_T$ (see Remark~\ref{remark:filtration_G}) we have that
\[
\mathbb{E}\bigl[M^u_t\mid\mathcal{F}^Y_T\bigr]=\mathbb{E}\bigl[\mathbb{E}\bigl[M^u_t\mid\mathcal{G}_0\bigr]\mathcal{F}^Y_T\bigr]=\mathbb{E}\bigl[M^u_0\mid\mathcal{F}^Y_T\bigr]=0
\]
and the proof is complete.
\end{proof}

\begin{remark}
We highlight that Lemma~\ref{lemma:variance_modified_var} applies to $\{\mathcal{F}^Y_t\}_{t\in[0,T]}$-predictable reinsurance strategies, but this is not restrictive. In fact, from Lemma~\ref{prop:optimalreins} we know that the optimal strategy belongs to the class of $\{\mathcal{F}^Y_t\}_{t\in[0,T]}$-predictable processes.
\end{remark}

\begin{remark}
In the classical Cram\'er-Lundberg model, i.e. $\lambda(t,y)=\lambda$, for any deterministic strategy $u_t=u(t)$
\[
\var{\biggl[\int_0^t{u_s\lambda\,ds}\biggr]}=0,
\]
thus in this case we choose expression~\eqref{eqn:variance_premium_prop} and the equation~\eqref{eqn:vairance_prem_condition} is satisfied.\\
\end{remark}

Under any risk model with stochastic intensity the formula~\eqref{eqn:variance_premium_prop} neglects the term
\[
\mathbb{E}[Z]^2\var{\biggl[\int_0^t{u_s\lambda_s\,ds}\biggr]}
\]
in the equation~\eqref{eqn:variance_cumulativeclaims}.
In order to capture the effect of this term, we can find the following estimate:
\begin{align*}
\var{\biggl[\int_0^t{u_s\lambda_s\,ds}\biggr]} &\le \mathbb{E}\biggl[\biggl(\int_0^t{u_s\lambda_s\,ds}\biggr)^2\biggr]\\
&\le \mathbb{E}\biggl[T\int_0^t{u_s^2\lambda_s^2\,ds}\biggr].
\end{align*}
As a consequence, we can choose as premium calculation rule
\begin{equation}
\label{eqn:variance_modified_prop}
q(t,y,u) = \mathbb{E}[Z]\lambda(t,y)u + \theta_r\mathbb{E}[Z^2]\bigl[\lambda(t,y)+T\lambda(t,y)^2\bigr]u^2
\end{equation}
which will be called \textit{intensity-adjusted variance principle} in this work; using this formula, we ensure that
\[
\mathbb{E}\biggl[\int_0^t{q(s,Y_s,u_s)\,ds}\biggr] \ge  \mathbb{E}\biggl[\int_0^t u_s\,dC_s\biggr]
+\theta_r\var{\biggl[\int_0^t u_s\,dC_s\biggr]} \qquad \forall t\in[0,T]
\]
for all $\{\mathcal{F}^Y_t\}_{t\in[0,T]}$-predictable reinsurance strategies and for any arbitrary level of reinsurance safety loading $\theta_r>0$.\\

\begin{lemma}
\label{lemma:variance_modified_prop}
Under the \textit{intensity-adjusted variance premium principle}~\eqref{eqn:variance_modified_prop}, the optimization problem~\eqref{eqn:opt_reins_problem} admits a unique maximizer $u^*(t,y)\in(0,1)$ for all $(t,y)\in[0,T]\times\mathbb{R}$, which is the solution to the following equation:
\begin{equation}
\label{eqn:first_order_cond_modvariance}
2\theta_r\mathbb{E}[Z^2]\bigl[1+T\lambda(t,y)\bigr]u = \int_0^D{ze^{\eta (1-u)ze^{R(T-t)}}\,dF_Z(z)} - \mathbb{E}[Z].
\end{equation}
\end{lemma}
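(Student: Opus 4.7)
The approach is to mirror the proof of Lemma~\ref{lemma:variance_vp_prop}, since the intensity-adjusted premium~\eqref{eqn:variance_modified_prop} is still quadratic in $u$ with strictly positive leading coefficient. First I would compute
\[
\frac{\partial q(t,y,u)}{\partial u} = \mathbb{E}[Z]\lambda(t,y) + 2\theta_r\mathbb{E}[Z^2]\bigl[\lambda(t,y)+T\lambda(t,y)^2\bigr]u,
\]
\[
\frac{\partial^2 q(t,y,u)}{\partial u^2} = 2\theta_r\mathbb{E}[Z^2]\bigl[\lambda(t,y)+T\lambda(t,y)^2\bigr]>0,
\]
which is strictly positive by virtue of $\theta_r>0$, $\lambda(t,y)>0$ and $\mathbb{E}[Z^2]>0$. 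By item~2 of Lemma~\ref{PsiUconcave} this yields strict concavity of $\Psi^u(t,y)$ in $u\in(0,1)$, so Proposition~\ref{prop:optimalreins} applies and delivers the existence of a unique maximizer $u^*(t,y)$.

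Next I would rule out the boundary cases, i.e.\ show that both $A_0$ and $A_1$ in Proposition~\ref{prop:optimalreins} are empty. For $A_0$ we note that $\partial q(t,y,0)/\partial u = \mathbb{E}[Z]\lambda(t,y)$, and since $Z>0$ almost surely one has $e^{\eta Ze^{R(T-t)}}>1$ a.s., so
\[
\lambda(t,y)\mathbb{E}\bigl[Ze^{\eta Ze^{R(T-t)}}\bigr] > \mathbb{E}[Z]\lambda(t,y) = \frac{\partial q(t,y,0)}{\partial u},
\]
giving $A_0=\emptyset$. For $A_1$, evaluating the first derivative at $u=1$ yields
\[
\frac{\partial q(t,y,1)}{\partial u} = \mathbb{E}[Z]\lambda(t,y) + 2\theta_r\mathbb{E}[Z^2]\lambda(t,y)\bigl[1+T\lambda(t,y)\bigr] > \mathbb{E}[Z]\lambda(t,y),
\]
so $A_1=\emptyset$; alternatively one could invoke Remark~\ref{remark:convexpremium_fullreins} once the net-profit condition is in force, since $\partial^2 q/\partial u^2\ge 0$ has already been established.

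Finally, having shown $(t,y)\in\hat{A}$ for every $(t,y)\in[0,T]\times\mathbb{R}$, Proposition~\ref{prop:optimalreins} implies that $u^*(t,y)\in(0,1)$ and is the unique solution of the first order condition~\eqref{eqn:first_order_cond}. Substituting the expression for $\partial q/\partial u$ derived above and dividing both sides by $\lambda(t,y)>0$ produces exactly~\eqref{eqn:first_order_cond_modvariance}. There is no genuine obstacle in this argument since it is essentially a verification exercise building on Proposition~\ref{prop:optimalreins}; the only points worth flagging are the strict positivity of $\lambda$ and $\theta_r$, which are what ensure both the strict convexity of $q$ in $u$ and the legitimacy of dividing through by $\lambda(t,y)$ in the last step.
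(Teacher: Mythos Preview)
Your proposal is correct and follows essentially the same route as the paper's proof: compute the derivatives of $q$, invoke Lemma~\ref{PsiUconcave} for concavity, show $A_0=A_1=\emptyset$, and read off the first order condition from Proposition~\ref{prop:optimalreins}. Your direct verification that $A_1=\emptyset$ via $\partial q(t,y,1)/\partial u>\mathbb{E}[Z]\lambda(t,y)$ is in fact slightly cleaner than the paper's appeal to Remark~\ref{remark:convexpremium_fullreins}, since it avoids the need for the net-profit condition.
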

\begin{proof}
From the expression~\eqref{eqn:variance_modified_prop} we get
\[
\frac{\partial q(t,y,u)}{\partial u} = \mathbb{E}[Z]\lambda(t,y) + 2\theta_r\mathbb{E}[Z^2]\bigl[\lambda(t,y)+T\lambda(t,y)^2\bigr]u \qquad \forall u\in(0,1)
\]
and
\[
\frac{\partial^2 q(t,y,u)}{\partial u^2}=2\theta_r\mathbb{E}[Z^2]\bigl[\lambda(t,y)+T\lambda(t,y)^2\bigr]>0 \qquad \forall u\in(0,1).
\]
By Lemma~\ref{PsiUconcave} $\Psi^u(t,y)$ is strictly concave w.r.t. $u$ and full reinsurance is never optimal because of Remark~\ref{remark:convexpremium_fullreins}. Moreover, we notice that $A_0=\emptyset$ as in Lemma~\ref{lemma:variance_vp_prop}, thus the optimal strategy is unique and it belongs to $(0,1)$. In order to find such a solution, we turn the attention to the first order condition, which is exactly equation~\eqref{eqn:first_order_cond_modvariance}.
\end{proof}

Through the numerical simulations in Section~\ref{section:simulation} we will show that the \textit{intensity-adjusted variance premium principle} leads to optimal strategies which are consistent with the desired properties obtained under the other premium calculation principles. Moreover, the reinsurance strategies under the \textit{intensity-adjusted variance premium principle} are not deterministic and explicitly depend on the (stochastic) intensity. Hence the problems described in the beginning of this subsection are fixed. \\

Using the result given in Example~\ref{example:variance_premium_exponential}, it is easy to specialize Lemma~\ref{lemma:variance_modified_prop} to the case of exponentially distributed claims.

\begin{example}{(Exponentially distributed claims under the intensity-adjusted variance premium principle)} \\
Under the \textit{intensity-adjusted variance premium principle}~\eqref{eqn:variance_modified_prop}, suppose that the claims are exponentially distributed with parameter $\zeta>\eta e^{RT}$. Then  the optimal strategy $u^*(t,y)\in(0,1)$ is given by  
\begin{equation}
\label{eqn:exponential_modvarianceprem_opt_control}
u^*(t,y)=1-\frac{\zeta}{\eta}\biggl(1-\sqrt{\frac{\zeta}{\zeta+4\theta_r\bigl[1+T\lambda(t,y)\bigr]}}\biggr)e^{-R(T-t)} \qquad (t,y)\in[0,T]\times\mathbb{R}.
\end{equation}
\end{example}

\begin{remark}
In~\cite{liangetal:commonshock} and~\cite{yuen_liang_zhou:commonshock} the authors used, respectively, the \textit{variance premium} and the \textit{expected value} principles to obtain optimal reinsurance strategies in a risk model with multiple dependent classes of insurance business. In those papers the optimal strategies explicitly depend on the claims intensities, but it is due to the presence of more than one business line, hence our arguments are not valid there.  Nevertheless, in~\cite{yuen_liang_zhou:commonshock} the authors realized that in the diffusion approximation of the classical risk model the \textit{variance premium principle} lead to optimal strategies which do not depend on the claims intensities. In fact, this was the main motivation of their work. Their observation confirms our perplexities of strategies independent on the claims intensity.
\end{remark}


\section{Optimal investment policy}
\label{section:optimal_investment}

\begin{lemma}
The problem
\[
\sup_{w(t,y,p)\in\mathbb{R}}{\Psi^w(t,y,p)}
\]
where $\Psi^w(t,y,p)$ is defined in~\eqref{eqn:Psi_w}, admits a unique solution $w^*(t,y,p)$ for all $(t,y,p)\in[0,T]\times\mathbb{R}\times(0,+\infty)$ given by
\begin{equation}
\label{eqn:opt_inv}
w^*(t,y,p) =  \frac{\mu(t,p)-R}{\eta\sigma(t,p)^2 e^{R(T-t)}} +\frac{p}{\eta e^{R(T-t)}}
\frac{\frac{\partial{\varphi}}{\partial{p}}(t,y,p)}{\varphi(t,y,p)}.
\end{equation}
\end{lemma}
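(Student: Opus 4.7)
The key observation is that $\Psi^w(t,y,p)$ is a quadratic polynomial in the scalar variable $w$, of the form $\Psi^w = Aw - \tfrac{1}{2}Bw^2$ with
\[
A = \eta e^{R(T-t)}\Bigl[(\mu(t,p)-R)\varphi(t,y,p) + p\sigma(t,p)^2\tfrac{\partial \varphi}{\partial p}(t,y,p)\Bigr], \qquad B = \sigma(t,p)^2\,\eta^2 e^{2R(T-t)}\varphi(t,y,p).
\]
The plan is thus straightforward: first verify that $\Psi^w$ is strictly concave in $w$, then solve the first-order condition explicitly. Strict concavity follows because $B>0$: indeed $\sigma(t,p)^2>0$ on the relevant domain, $\eta>0$, $e^{2R(T-t)}>0$, and $\varphi(t,y,p)>0$ by the positivity assumption on $\varphi$ imposed in the ansatz. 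Hence $\partial^2 \Psi^w/\partial w^2 = -B < 0$ for every $(t,y,p)\in[0,T]\times\mathbb{R}\times(0,+\infty)$, which guarantees the existence and uniqueness of the maximizer.

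Next I would compute the first-order condition
\[
0 = \frac{\partial \Psi^w}{\partial w}(t,y,p) = A - Bw
\]
and solve for $w$ to obtain $w^*(t,y,p) = A/B$. Substituting the expressions for $A$ and $B$ and simplifying the common factor $\eta e^{R(T-t)}\varphi(t,y,p)$ (which is nonzero thanks to $\varphi>0$) gives
\[
w^*(t,y,p) = \frac{\mu(t,p)-R}{\eta\sigma(t,p)^2 e^{R(T-t)}} + \frac{p}{\eta e^{R(T-t)}}\,\frac{\frac{\partial\varphi}{\partial p}(t,y,p)}{\varphi(t,y,p)},
\]
which is the claimed formula.

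There is essentially no obstacle in this argument; the only subtlety worth pointing out is the reliance on $\varphi>0$, which is why the ansatz was introduced with this property built in. Without strict positivity of $\varphi$, both the concavity of $\Psi^w$ and the division in the solution of the first-order condition could fail. Under that hypothesis, the proof is just a one-variable unconstrained quadratic optimization.
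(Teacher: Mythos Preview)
Your proof is correct and follows the same approach as the paper: establish strict concavity of $\Psi^w$ in $w$ using $\varphi>0$, then solve the first-order condition. The paper's version is terser, but the argument is identical.
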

\begin{proof}
Since $\varphi(t,y,p)>0$,  $\Psi^w(t,y,p)$ is strictly concave w.r.t. $w$ and the result follows from the first order condition.
\end{proof}

We emphasize that the optimal $w^*$ is the sum of the classical solution%
\footnote{See e.g. \cite{merton:inv}.}
plus an adjustment term due to the dependence of the risky asset price coefficients on the stochastic process $\{P_t\}$.

\begin{remark}
If $\mu,\sigma$ are continuous function and $\varphi\in\mathcal{C}^{1,2}$, then $w^*$ is a continuous function w.r.t. $(t,y,p)$.
\end{remark}

\begin{corollary}
\label{lemma:optimal_inv_strategy}
Suppose that there exist two functions $f(t,y):[0,T]\times\mathbb{R}\to(0,+\infty)$ and $g(t,p):[0,T]\times(0,+\infty)\to\mathbb{R}$ such that $\varphi(t,y,p) = f(t,y)e^{g(t,p)}$ for all $(t,y,p)\in[0,T]\times\mathbb{R}\times(0,+\infty)$, with $f(t,y)>0$. Then the optimal investment strategy~\eqref{eqn:opt_inv} reads as follows:
\begin{equation}
\label{eqn:opt_inv2}
w^*(t,p) =  \frac{\mu(t,p)-R}{\eta\sigma(t,p)^2 e^{R(T-t)}} + \frac{p}{\eta e^{R(T-t)}}\frac{\partial{g}}{\partial{p}}(t,p).
\end{equation}
\end{corollary}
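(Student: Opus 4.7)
The plan is essentially a one-line computation: substitute the factorized ansatz $\varphi(t,y,p) = f(t,y)e^{g(t,p)}$ into formula \eqref{eqn:opt_inv} and observe that the logarithmic partial derivative $\frac{\partial_p \varphi}{\varphi}$ collapses to $\frac{\partial g}{\partial p}$, which is independent of $y$.

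First I would differentiate $\varphi$ with respect to $p$. Since $f(t,y)$ does not depend on $p$, the chain rule gives
\[
\frac{\partial \varphi}{\partial p}(t,y,p) = f(t,y) e^{g(t,p)} \frac{\partial g}{\partial p}(t,p) = \varphi(t,y,p) \frac{\partial g}{\partial p}(t,p).
\]
The hypothesis $f(t,y) > 0$ ensures $\varphi(t,y,p) = f(t,y)e^{g(t,p)} > 0$, so the division by $\varphi$ appearing in \eqref{eqn:opt_inv} is well defined and we obtain
\[
\frac{\frac{\partial \varphi}{\partial p}(t,y,p)}{\varphi(t,y,p)} = \frac{\partial g}{\partial p}(t,p).
\]

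Plugging this identity into the expression \eqref{eqn:opt_inv} for the optimal investment strategy immediately yields the claimed formula \eqref{eqn:opt_inv2}. Notice that the right-hand side no longer depends on $y$, which justifies writing $w^*(t,p)$ in place of $w^*(t,y,p)$; this is the qualitative content of the corollary, namely that a multiplicative separation of the stochastic-factor dependence and the price-variable dependence in $\varphi$ decouples $y$ from the optimal portfolio. No substantive obstacle is involved: the argument is a direct substitution, and the result will be useful later when the ansatz $\varphi = f(t,y)e^{g(t,p)}$ is justified via the PDE analysis of Section~\ref{section:PDE}.
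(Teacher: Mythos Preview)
Your proof is correct and is exactly the intended argument: the paper states this corollary without proof, since it follows immediately from substituting the factorization into \eqref{eqn:opt_inv} and cancelling $\varphi$, precisely as you wrote.
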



\section{Verification Theorem}
\label{section:verification_theorem}

Now we conjecture a solution to equation~\eqref{eqn:hjb_prop} of the form $\varphi(t,y,p) = f(t,y)e^{g(t,p)}$, with $f(t,y)>0$.
Using Lemma~\ref{lemma:optimal_inv_strategy}, replacing all the derivatives and performing some calculations, the equation~\eqref{eqn:hjb_prop} reads as follows
\begin{equation}
\label{eqn:final_HJB}
\left\{
\begin{aligned}
&-\frac{\partial{f}}{\partial{t}}(t,y) - b(t,y) \frac{\partial{f}}{\partial{y}}(t,y)- \frac{1}{2}\gamma(t,y)^2\frac{\partial^2{f}}{\partial{y^2}}(t,y) + \biggl[\eta e^{R(T-t)}c(t,y)+\max_{u(t,y)\in[0,1]}{\Psi^u(t,y)}\biggr]f(t,y) \\
&+f(t,y)\biggl[-\frac{\partial{g}}{\partial{t}}(t,p) - pR\frac{\partial{g}}{\partial{p}}(t,p) - \frac{1}{2}p^2\sigma(t,p)^2\frac{\partial^2{g}}{\partial{p^2}}(t,p) + \frac{1}{2}\frac{\bigl(\mu(t,p)-R\bigr)^2}{\sigma(t,p)^2}\biggr] = 0\\
& f(T,y)e^{g(T,p)}=1 \qquad \forall(y,p)\in\mathbb{R}\times(0,+\infty)
\end{aligned}
\right.
\end{equation}
It is easy to show that if $f,g$ are two solutions of the following Cauchy problems
\begin{equation}
\label{eqn:PDEf}
\left\{
\begin{aligned}
&-\frac{\partial{f}}{\partial{t}}(t,y) - b(t,y) \frac{\partial{f}}{\partial{y}}(t,y)- \frac{1}{2}\gamma(t,y)^2\frac{\partial^2{f}}{\partial{y^2}}(t,y) \\
&+ \biggl[\eta e^{R(T-t)}c(t,y)+\max_{u(t,y)\in[0,1]}{\Psi^u(t,y)}\biggr]f(t,y) = 0 \\
&f(T,y)=1
\end{aligned}
\right.
\end{equation}
\begin{equation}
\label{eqn:PDEg}
\left\{
\begin{aligned}
&-\frac{\partial{g}}{\partial{t}}(t,p) - pR\frac{\partial{g}}{\partial{p}}(t,p) - \frac{1}{2}p^2\sigma(t,p)^2\frac{\partial^2{g}}{\partial{p^2}}(t,p) + \frac{1}{2}\frac{\bigl(\mu(t,p)-R\bigr)^2}{\sigma(t,p)^2} = 0\\
&g(T,p)=0
\end{aligned}
\right.
\end{equation}
then they solve the Cauchy problem~\eqref{eqn:final_HJB} and $v(t,x,y,p) = -e^{-\eta xe^{R(T-t)}} f(t,y)e^{g(t,p)}$ solves the original HJB equation given in~\eqref{eqn:HJBformulation_prop}.\\
Before we prove a verification theorem, we must show that our proposed optimal controls are admissible strategies.
\begin{lemma}
\label{lemma:controls_admissible}
Suppose that~\eqref{eqn:PDEf} and~\eqref{eqn:PDEg} admit classical solutions with $\frac{\partial{g}}{\partial{p}}$ satisfying the following growth condition:
\begin{equation}
\label{eqn:dg_growth}
\abs*{\frac{\partial{g}}{\partial{p}}(t,p)} \le C(1+\abs{p}^\beta) \qquad \forall (t,p)\in[0,T]\times(0,+\infty)
\end{equation}
for some constants $\beta>0$ and $C>0$.
Moreover, assume that
\begin{equation}
\label{eqn:P_additional_conditions}
\mathbb{E}\biggl[\int_0^T\abs{\mu(t,P_t)}P_t^{\beta+1}\,dt+\int_0^T \sigma(t,P_t)^2P_t^{2\beta+2}\,dt\biggr]<\infty.
\end{equation}
Let be $u^*(t,y)$ as given in Proposition~\ref{prop:optimalreins} and $w^*(t,p)$ in Lemma~\ref{lemma:optimal_inv_strategy}.
Let us define the processes $u^*_t\doteq u^*(t,Y_t)$ and $w^*_t\doteq w^*(t,P_t)$; then the pair $(u^*_t,w^*_t)$ is an admissible strategy, i.e. $(u^*_t,w^*_t)\in\mathcal{U}$.
\end{lemma}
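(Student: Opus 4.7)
The plan is to verify, in order, the three requirements of Definition 2.3: predictability, $[0,1]\times\mathbb{R}$-valuedness, and the two moment bounds on $w^*$. The first two are short: $u^*(t,y)\in[0,1]$ by construction in Proposition~\ref{prop:optimalreins}, and $w^*(t,p)\in\mathbb{R}$ trivially. Predictability of $u^*_t=u^*(t,Y_t)$ and $w^*_t=w^*(t,P_t)$ follows because $u^*$ and $w^*$ are measurable, while $Y$ and $P$ have continuous paths (so the composed processes coincide a.s. with their left-limits and are left-continuous adapted processes, hence predictable).

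The technical core is the pair of integrability conditions for $w^*$. The strategy is to split the expression~\eqref{eqn:opt_inv2} into its two summands and bound each separately. Using the growth assumption~\eqref{eqn:dg_growth}, I would write
\[
\abs{w^*_t}\le \frac{\abs{\mu(t,P_t)-R}}{\eta\sigma(t,P_t)^2 e^{R(T-t)}}+\frac{C}{\eta e^{R(T-t)}}\bigl(P_t+P_t^{\beta+1}\bigr),
\]
so that $\abs{w^*_t}\abs{\mu(t,P_t)-R}$ is controlled by the three terms $\frac{(\mu(t,P_t)-R)^2}{\sigma(t,P_t)^2}$, $P_t\abs{\mu(t,P_t)-R}$, and $P_t^{\beta+1}\abs{\mu(t,P_t)-R}$. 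The first is integrable thanks to the Novikov condition~\eqref{eqn:novikov}; the second by bounding $\abs{\mu(t,P_t)-R}\le\abs{\mu(t,P_t)}+R$ and invoking~\eqref{eqn:solutionP} together with $\sup_{t}\mathbb{E}[P_t^2]<\infty$ from~\eqref{eqn:solutionP2}; the third by hypothesis~\eqref{eqn:P_additional_conditions}.

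For the second integrability condition I would similarly expand
\[
(w^*_t)^2\sigma(t,P_t)^2\le \frac{2(\mu(t,P_t)-R)^2}{\eta^2\sigma(t,P_t)^2 e^{2R(T-t)}}+\frac{2C^2}{\eta^2 e^{2R(T-t)}}\bigl(P_t^2+2P_t^{\beta+2}+P_t^{2\beta+2}\bigr)\sigma(t,P_t)^2.
\]
The first summand is again handled by the Novikov condition. The $P_t^2\sigma(t,P_t)^2$ piece is finite by~\eqref{eqn:solutionP}; the $P_t^{2\beta+2}\sigma(t,P_t)^2$ piece by~\eqref{eqn:P_additional_conditions}. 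The residual mixed term $P_t^{\beta+2}\sigma(t,P_t)^2$ is the one that forces some care: I would use the elementary inequality $2P_t^{\beta+2}\le P_t^2+P_t^{2\beta+2}$, reducing it to the two terms already treated.

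The main obstacle is not any single estimate but the bookkeeping required to avoid circularity: one must make sure the chosen decompositions involve only moments of $P_t$ weighted by $\mu$ or $\sigma^2$ that are covered by the explicit hypotheses~\eqref{eqn:solutionP}--\eqref{eqn:solutionP2}, \eqref{eqn:novikov}, and~\eqref{eqn:P_additional_conditions}. Once the growth bound~\eqref{eqn:dg_growth} is combined with Young's/AM-GM-type splits as above, every remaining integral is exactly of the form assumed finite, and admissibility follows.
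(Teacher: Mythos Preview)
Your proposal is correct and follows essentially the same route as the paper: verify predictability and range by inspection, then split $w^*$ via~\eqref{eqn:opt_inv2} into its two summands and control each moment using the Novikov condition~\eqref{eqn:novikov}, the standing assumptions~\eqref{eqn:solutionP}--\eqref{eqn:solutionP2}, and the extra hypothesis~\eqref{eqn:P_additional_conditions}. The only cosmetic difference is that the paper expands $(a+b)^2$ directly and carries the cross term, whereas you use $(a+b)^2\le 2a^2+2b^2$ and then the AM--GM split $2P_t^{\beta+2}\le P_t^2+P_t^{2\beta+2}$; the latter is in fact slightly cleaner, since it makes explicit how the intermediate exponent $\beta+2$ is absorbed into the two endpoints already covered by~\eqref{eqn:solutionP} and~\eqref{eqn:P_additional_conditions}.
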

\begin{proof}
First let us observe that both $u^*_t,w^*_t$ are $\{\mathcal{F}_t\}$-predictable processes since $u^*(t,u)$ and $ w^*(t,p)$ are measurable functions of their arguments and $Y$ is adapted.
Moreover, they take values, respectively, in $[0,1]$ and in $\mathbb{R}$. Furthermore, using the expression~\eqref{eqn:opt_inv2} we have that
\begin{align*}
\mathbb{E}\biggl[\int_0^T{\abs{w^*_t}\abs{\mu(t,P_t)-R}\,dt}\biggr]
&\le \mathbb{E}\biggl[\int_0^T{\frac{(\mu(t,P_t)-R)^2}{\eta\sigma(t,P_t)^2 e^{R(T-t)}}\,dt}\biggr]\\
&+\mathbb{E}\biggl[\int_0^T{\frac{\abs{\mu(t,P_t)-R}P_t}{\eta e^{R(T-t)}}\abs*{\frac{\partial{g}}{\partial{p}}(t,P_t)}\,dt}\biggr] \\
&\le \mathbb{E}\biggl[\int_0^T{\frac{(\mu(t,P_t)-R)^2}{\eta\sigma(t,P_t)^2 e^{R(T-t)}}\,dt}\biggr]\\
&+C\,\mathbb{E}\biggl[\int_0^T{\abs{\mu(t,P_t)}(1+P_t^\beta)P_t\,dt}\biggr]<\infty
\end{align*}
and
\begin{align*}
\mathbb{E}\biggl[\int_0^T{(w^*_t\sigma(t,P_t))^2\,dt}\biggr]
&= \mathbb{E}\biggl[\int_0^T{\frac{(\mu(t,p)-R)^2}{\eta^2 \sigma(t,p)^2 e^{2R(T-t)}}\,dt}\biggr] \\
&+\mathbb{E}\biggl[\int_0^T{\frac{\sigma(t,P_t)^2 P_t^2}{\eta^2 e^{2R(T-t)}}\biggl(\frac{\partial{g}}{\partial{p}}(t,P_t)\biggr)^2\,dt}\biggr] \\
&+ 2\mathbb{E}\biggl[\int_0^T{\frac{(\mu(t,p)-R)P_t}{\eta^2 e^{2R(T-t)}}\frac{\partial{g}}{\partial{p}}(t,P_t)\,dt}\biggr]\\
&\le \mathbb{E}\biggl[\int_0^T{\frac{(\mu(t,p)-R)^2}{\eta^2 \sigma(t,p)^2 e^{2R(T-t)}}\,dt}\biggr] \\
&+C\,\mathbb{E}\biggl[\int_0^T{\frac{\sigma(t,P_t)^2 P_t^2}{\eta^2 e^{2R(T-t)}}\biggl(1+P_t^\beta\biggr)^2\,dt}\biggr] \\
&+ C\,\mathbb{E}\biggl[\int_0^T{\frac{\abs{\mu(t,p)-R}P_t}{\eta^2 e^{2R(T-t)}}(1+P_t^\beta)\,dt}\biggr]<\infty
\end{align*}
where $C$ denotes any positive constant and the expectations are finite because of the Novikov condition~\eqref{eqn:novikov} together with~\eqref{eqn:dg_growth} and~\eqref{eqn:P_additional_conditions}.
\end{proof}

Now we are ready for the verification argument.
\begin{theorem}[Verification Theorem]
\label{theorem:verification}
Suppose that~\eqref{eqn:PDEf} and~\eqref{eqn:PDEg} admit bounded classical solutions, respectively $f\in\mathcal{C}^{1,2}((0,T)\times\mathbb{R}))\cap\mathcal{C}([0,T]\times\mathbb{R}))$ and $g\in\mathcal{C}^{1,2}((0,T)\times(0,+\infty))\cap\mathcal{C}([0,T]\times(0,+\infty))$.\\
Let us assume that the conditions~\eqref{eqn:dg_growth} and~\eqref{eqn:P_additional_conditions} hold and suppose that
\begin{equation}
\label{eqn:df_growth}
\abs*{\frac{\partial{f}}{\partial{y}}(t,y)} \le \tilde{C}(1+\abs{y}^\beta) \qquad \forall (t,y)\in[0,T]\times\mathbb{R}
\end{equation}
for some constants $\beta>0$ and $\tilde{C}>0$.
As an alternative, the conditions~\eqref{eqn:dg_growth},~\eqref{eqn:P_additional_conditions} and~\eqref{eqn:df_growth} may be replaced by the boundedness of $\frac{\partial{g}}{\partial{p}}$ and $\frac{\partial{f}}{\partial{y}}$.\\
Then the function $v:V\to\mathbb{R}$ defined by the following
\begin{equation}
\label{eqn:valuefun_prop}
v(t,x,y,p) = -e^{-\eta xe^{R(T-t)}} f(t,y)e^{g(t,p)}
\end{equation}
is the value function of the reinsurance-investment problem and
\[
\alpha^*(t,Y_t,P_t)=(u^*(t,Y_t),w^*(t,P_t))
\]
with $u^*(t,y)$ given in Proposition~\ref{prop:optimalreins} and $w^*(t,p)$ in~\eqref{eqn:opt_inv2} is an optimal control.
\end{theorem}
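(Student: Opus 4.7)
The plan is to carry out the classical verification argument: apply the Itô/Dynkin formula to $v$ along the controlled trajectories, exploit $\mathcal{L}^\alpha v\le 0$ from the HJB equation to obtain a supermartingale inequality, and then attain equality along $\alpha^*$. First I would check that the candidate~\eqref{eqn:valuefun_prop} satisfies the terminal condition: from $f(T,y)=1$ and $g(T,p)=0$ one gets $v(T,x,y,p)=-e^{-\eta x}$. Next, by direct substitution (the calculation already sketched to pass from~\eqref{eqn:hjb_prop} to~\eqref{eqn:PDEf}--\eqref{eqn:PDEg}), if $f,g$ solve the two decoupled PDEs then $v$ solves the HJB equation~\eqref{eqn:HJBformulation_prop}, with the supremum in $(u,w)$ attained precisely at $(u^*(t,y),w^*(t,p))$ provided by Proposition~\ref{prop:optimalreins} and Lemma~\ref{lemma:optimal_inv_strategy}.

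Fix $(t,x,y,p)\in V$ and any $\alpha\in\mathcal{U}_t$. I would apply the Itô formula for jump-diffusions to $v(s,X^{\alpha}_{t,x}(s),Y_s,P_s)$ on $[t,T\wedge\tau_n]$, where $\tau_n$ is a localizing sequence that reduces the continuous and the compensated-jump local martingale parts. Using~\eqref{eqn:generator} and the fact that $v$ factorizes as $-e^{-\eta xe^{R(T-s)}}f(s,y)e^{g(s,p)}$, one checks that
\begin{equation*}
v(T\wedge\tau_n,X^{\alpha}(T\wedge\tau_n),Y_{T\wedge\tau_n},P_{T\wedge\tau_n})
= v(t,x,y,p) + \int_t^{T\wedge\tau_n}\mathcal{L}^{\alpha_s}v(s,\cdot)\,ds + M^n_{T\wedge\tau_n},
\end{equation*}
where $M^n$ collects the $dW^{(Y)}$, $dW^{(P)}$ and compensated $m(ds,dz)-dF_Z(z)\lambda_s\,ds$ terms and is a true martingale by the localization. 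Since the HJB equation gives $\mathcal{L}^{\alpha_s}v\le 0$ pointwise, taking $\mathbb{E}[\,\cdot\mid\mathcal{F}_t]$ yields
\begin{equation*}
v(t,x,y,p)\ge \mathbb{E}\bigl[v(T\wedge\tau_n,X^{\alpha}(T\wedge\tau_n),Y_{T\wedge\tau_n},P_{T\wedge\tau_n})\mid\mathcal{F}_t\bigr].
\end{equation*}
To send $n\to\infty$ I would show that the family $\{v(T\wedge\tau_n,\dots)\}_n$ is uniformly integrable: since $f$ and $g$ are bounded, this reduces to uniform integrability of $e^{-\eta X^{\alpha}_{t,x}(s)e^{R(T-s)}}$, which can be obtained along the lines of the proof of Proposition~\ref{nullcontrol_admissible} together with the admissibility bounds in Definition~\ref{def_U}, the Novikov condition~\eqref{eqn:novikov}, and Assumption~\ref{assumption:EZ_finite}. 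Passing to the limit then gives
\begin{equation*}
v(t,x,y,p)\ge \mathbb{E}\bigl[-e^{-\eta X^{\alpha}_{t,x}(T)}\mid\mathcal{F}_t\bigr]\qquad\forall\alpha\in\mathcal{U}_t.
\end{equation*}

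For the reverse inequality I would replace $\alpha$ by $\alpha^*_s=(u^*(s,Y_s),w^*(s,P_s))$, which is admissible by Lemma~\ref{lemma:controls_admissible}. Then $\mathcal{L}^{\alpha^*_s}v\equiv 0$ by construction, so the inequality above becomes an equality at every $n$; taking $n\to\infty$ as before yields $v(t,x,y,p)=\mathbb{E}[-e^{-\eta X^{\alpha^*}_{t,x}(T)}\mid\mathcal{F}_t]$, which together with the previous bound identifies $v$ with the value function and shows optimality of $\alpha^*$. The main technical obstacle is the martingale property of $M^n$ and the uniform-integrability step that replaces it as $n\to\infty$: the Brownian stochastic integrals contain $\frac{\partial f}{\partial y}\gamma(t,y)$ and $\frac{\partial g}{\partial p}p\sigma(t,p)$ weighted by the exponential factor, which is exactly where the growth conditions~\eqref{eqn:dg_growth}--\eqref{eqn:P_additional_conditions} and~\eqref{eqn:df_growth} (or, alternatively, boundedness of the derivatives) enter, and where one combines~\eqref{eqn:solutionY2},~\eqref{eqn:solutionP2} with the integrability estimate used in Proposition~\ref{nullcontrol_admissible} to dominate the exponential wealth factor. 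The compensated-jump part is handled similarly: its compensator involves $\int_0^D[v(\cdot,x-(1-u)z,\cdot)-v(\cdot,x,\cdot)]\lambda(t,y)\,dF_Z(z)$, whose absolute value is controlled by $\lambda(t,y)\mathbb{E}[e^{\eta Ze^{RT}}]$ times the same exponential factor, so that Assumption~\ref{assumption:EZ_finite} and~\eqref{lambda_int} close the argument.
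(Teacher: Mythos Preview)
Your overall architecture matches the paper's: verify that $v$ in~\eqref{eqn:valuefun_prop} solves~\eqref{eqn:HJBformulation_prop}, apply It\^o's formula up to a localizing time, use $\mathcal{L}^\alpha v\le 0$, pass to the limit, and attain equality at $\alpha^*$. The genuine gap is in the uniform-integrability step. You assert that UI of $\{v(T\wedge\tau_n,\dots)\}_n$ reduces to UI of $e^{-\eta X^\alpha_{t,x}(s)e^{R(T-s)}}$ and that this ``can be obtained along the lines of Proposition~\ref{nullcontrol_admissible} together with the admissibility bounds.'' But Proposition~\ref{nullcontrol_admissible} treats only the null control $(0,0)$, and the admissibility requirements in Definition~\ref{def_U} --- finiteness of $\mathbb{E}[\int|w_r||\mu-R|\,dr]$ and $\mathbb{E}[\int w_r^2\sigma^2\,dr]$ --- do \emph{not} imply exponential moment bounds on $X^\alpha$: for a generic $\alpha\in\mathcal{U}_t$ the factor $e^{-\eta\int_t^T e^{R(T-r)}w_r\sigma(r,P_r)\,dW^{(P)}_r}$ need not even be integrable. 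The Novikov condition~\eqref{eqn:novikov} concerns $(\mu-R)/\sigma$, not $w\sigma$, so it does not rescue the argument either. Since $v\le 0$, Fatou's lemma also points the wrong way, so UI is genuinely needed and your route does not supply it.

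The paper handles this by fixing the localizing sequence \emph{explicitly} as
\[
\tau_n=\inf\{s\in[t,T]:X^\alpha_s<-n\ \text{or}\ |Y_s|>n\}.
\]
This single choice does double duty. First, on $[t,\tau_n)$ one has $X^\alpha\ge -n$ and $|Y|\le n$, so the exponential factor $e^{-\eta X^\alpha_s e^{R(T-s)}}$ is bounded by a constant $C_n$; combined with the growth bounds~\eqref{eqn:df_growth},~\eqref{eqn:dg_growth} and~\eqref{eqn:P_additional_conditions}, this makes each Brownian integrand in the local martingale square-integrable (this is precisely where those conditions are used, as you correctly anticipated). Second, the same lower bound on $X^\alpha$ gives a direct $L^2$ bound on $v(T\wedge\tau_n,\cdot)$, from which the paper deduces uniform integrability of the family and passes to the limit. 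Your generic ``a localizing sequence that reduces the local martingale parts'' does not deliver the wealth lower bound, and without it the limit in $n$ for an arbitrary admissible $\alpha$ is unjustified.
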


\begin{proof}
Let $f(t,y):[0,T]\times\mathbb{R}\to(0,+\infty)$ and $g(t,p):[0,T]\times(0,+\infty)\to\mathbb{R}$ be functions satisfying the assumptions required by Theorem~\ref{theorem:verification} and suppose that they are solutions of the Cauchy problems~\eqref{eqn:PDEf} and~\eqref{eqn:PDEg}. Now consider the function $\varphi(t,y,p) = f(t,y)e^{g(t,p)}$. As already observed, it satisfies equation~\eqref{eqn:hjb_prop}, i.e. it is a solution of the problem 
\begin{equation}
\left\{
\begin{aligned}
&\sup_{(u,w)\in [0,1]\times\mathbb{R}}{\mathcal{H}^\alpha \varphi(t,y,p)} = 0\\
& \varphi(t,y,p) = 1 \qquad \forall(y,p)\in\mathbb{R}\times(0,+\infty).
\end{aligned}
\right.
\end{equation}
Now, taking $v(t,x,y,p) = -e^{-\eta xe^{R(T-t)}}\varphi(t,y,p)$, we have that $v$ is a solution of the Cauchy problem~\eqref{eqn:HJBformulation_prop}. This implies that, for any $(t,x,y,p) \in [0,T]\times\mathbb{R}\times\mathbb{R}\times(0,+\infty)$
\[
\mathcal{L}^\alpha v(s,X_{t,x}^\alpha(s),Y_{t,y}(s),P_{t,p}(s)) \le 0 \quad \forall s\in[t,T]
\]
for all $\alpha\in\mathcal{U}_t$, where $\{Y_{t,y}(s)\}_{s\in[t,T]}$ denotes the solution to equation~\eqref{eqn:stochasticfactor} with initial condition $Y_t=y$ and, similarly, $\{P_{t,p}(s)\}_{s\in[t,T]}$ denotes the solution to equation~\eqref{eqn:priceprocess} with initial condition $P_t=p$.\\
Now, from It\^o's formula we have that
\begin{equation}
\label{eqn:itoformula_veriftheor}
v(T,X_{t,x}^\alpha(T),Y_{t,y}(T),P_{t,p}(T))-v(t,x,y,p) = \int_t^T{\mathcal{L}^\alpha v(s,X_{t,x}^\alpha(s),Y_{t,y}(s),P_{t,p}(s))\,ds}+M_T
\end{equation}
where $\{M_r\}_{r\in[t,T]}$ is the following stochastic process:
\begin{multline}
\label{eqn:mt_prop}
M_r=\int_t^r{w_s\sigma(s,P_s)\frac{\partial{v}}{\partial{x}}(s,X_s^\alpha,Y_s,P_s)\,dW^{(P)}_s}\\
+\int_t^r{P_s\sigma(s,P_s)\frac{\partial{v}}{\partial{p}}(s,X_s^\alpha,Y_s,P_s)\,dW^{(P)}_s}
+\int_t^r{\gamma(s,Y_s)\frac{\partial{v}}{\partial{y}}(s,X_s^\alpha,Y_s,P_s)\,dW^{(Y)}_s} \\
+ \int_0^D\int_t^r{\biggl[v(s,X_s^\alpha-(1-u)z,Y_s,P_s)-v(s,X_s^\alpha,Y_s,P_s)\biggr]\biggl(m(ds,dz)-\lambda(s,Y_s)\,dF_Z(z)\biggr)}.
\end{multline}
Now we prove that $\{M_r\}_{r\in[t,T]}$ is an $\{\mathcal{F}_r\}$-local martingale. Since the jump term is a real martingale because $v$ is bounded, we only need to show that
\begin{align*}
&\mathbb{E}\biggl[\int_t^{T\land\tau_n}{\biggl(w_s\sigma(s,P_s)\frac{\partial{v}}{\partial{x}}(s,X_s^\alpha,Y_s,P_s)\biggr)^2\,ds}\biggr] <\infty \\
&\mathbb{E}\biggl[\int_t^{T\land\tau_n}{\biggl(P_s\sigma(s,P_s)\frac{\partial{v}}{\partial{p}}(s,X_s^\alpha,Y_s,P_s)\biggr)^2\,ds}\biggr] <\infty \\
&\mathbb{E}\biggl[\int_t^{T\land\tau_n}{\biggl(\gamma(s,Y_s)\frac{\partial{v}}{\partial{y}}(s,X_s^\alpha,Y_s,P_s)\biggr)^2\,ds}\biggr] <\infty
\end{align*}
for a suitable non-decreasing sequence of stopping times $\{\tau_n\}_{n=1,\dots}$ such that $\lim_{n\to+\infty}{\tau_n}=+\infty$.
Taking into account the expression~\eqref{eqn:valuefun_prop}, we note that
\begin{align*}
\frac{\partial{v}}{\partial{x}}(t,x,y,p) &= \eta e^{R(T-t)}e^{-\eta xe^{R(T-t)}} f(t,y)e^{g(t,p)} \\
\frac{\partial{v}}{\partial{y}}(t,x,y,p) &= -e^{-\eta xe^{R(T-t)}} e^{g(t,p)}\frac{\partial{f}}{\partial{y}}(t,y)\\
\frac{\partial{v}}{\partial{p}}(t,x,y,p) &= -e^{-\eta xe^{R(T-t)}} f(t,y)e^{g(t,p)}\frac{\partial{g}}{\partial{p}}(t,p).
\end{align*}
Let us define a sequence of random times $\{\tau_n\}_{n=1,\dots}$ as follows:
\[
\tau_n\doteq\inf\{s\in[t,T]\mid X_s^\alpha<-n \lor \abs{Y_s}>n\} \qquad n=1,\dots
\]
In the sequel of the proof we denote with $C_n$ any constant depending on $n=1,\dots$.\\
Then we have that
\begin{align*}
&\mathbb{E}\biggl[\int_0^{T\land\tau_n}{\biggl(w_s\sigma(s,P_s)\frac{\partial{v}}{\partial{x}}(s,X_s^\alpha,Y_s,P_s)\biggr)^2\,ds}\biggr] \\
&=\mathbb{E}\biggl[\int_0^{T\land\tau_n}{\biggl(w_s\sigma(s,P_s)\eta e^{R(T-s)}e^{-\eta X_s^\alpha e^{R(T-s)}} f(s,Y_s)e^{g(s,P_s)}\biggr)^2\,ds}\biggr] \\
&\le C_n\,\mathbb{E}\biggl[\int_0^T{\biggl(w_s\sigma(s,P_s)\biggr)^2\,ds}\biggr] <\infty \qquad \forall n=1,\dots
\end{align*}
because $w_t$ is admissible and $f$ and $g$ are bounded by hypothesis. Moreover,we have that
\begin{align*}
&\mathbb{E}\biggl[\int_0^{T\land\tau_n}{\biggl(\gamma(s,Y_s)\frac{\partial{v}}{\partial{y}}(s,X_s^\alpha,Y_s,P_s)\biggr)^2\,ds}\biggr] \\
&=\mathbb{E}\biggl[\int_0^{T\land\tau_n}{\biggl(\gamma(s,Y_s)e^{-\eta X_s^\alpha e^{R(T-s)}} e^{g(s,P_s)}\frac{\partial{f}}{\partial{y}}(s,Y_s)\biggr)^2\,ds}\biggr] \\
&\le\tilde{C}\mathbb{E}\biggl[\int_0^{T\land\tau_n}{\biggl(\gamma(s,Y_s)e^{-\eta X_s^\alpha e^{R(T-s)}} e^{g(s,P_s)}\biggr)^2(1+\abs{Y_s}^\beta)^2\,ds}\biggr] \\
&\le C_n\,\mathbb{E}\biggl[\int_0^T{\gamma(s,Y_s)^2\,ds}\biggr] <\infty \qquad \forall n=1,\dots
\end{align*}
because $g$ is bounded and using the assumptions~\eqref{eqn:solutionY} and~\eqref{eqn:df_growth}.
Finally, we obtain that
\begin{align*}
&\mathbb{E}\biggl[\int_0^{T\land\tau_n}{\biggl(P_s\sigma(s,P_s)\frac{\partial{v}}{\partial{p}}(s,X_s^\alpha,Y_s,P_s)\biggr)^2\,ds}\biggr] \\
&=\mathbb{E}\biggl[\int_0^{T\land\tau_n}{P_s^2\sigma(s,P_s)^2\biggl(e^{-\eta X_s^\alpha e^{R(T-s)}} f(s,Y_s)e^{g(s,P_s)}\frac{\partial{g}}{\partial{p}}(s,P_s)\biggr)^2\,ds}\biggr]\\
&\le C\mathbb{E}\biggl[\int_0^{T\land\tau_n}{P_s^2\sigma(s,P_s)^2\biggl(e^{-\eta X_s^\alpha e^{R(T-s)}} f(s,Y_s)e^{g(s,P_s)}\biggr)^2(1+\abs{P_s}^\beta)^2\,ds}\biggr]\\
&\le C_n\mathbb{E}\biggl[\int_0^T{\sigma(s,P_s)^2(P_s^2+P_s^{2\beta+2})\,ds}\biggr] <\infty \qquad \forall n=1,\dots
\end{align*}
because $f$ and $g$ are bounded by hypothesis and using conditions~\eqref{eqn:solutionP},~\eqref{eqn:dg_growth} and~\eqref{eqn:P_additional_conditions}.
Thus $\{M_r\}_{r\in[t,T]}$ is an $\{\mathcal{F}_r\}$-local martingale and $\{\tau_n\}_{n=1,\dots}$ is a localizing sequence for $\{M_r\}_{r\in[t,T]}$.\\
Taking the expected value of both sides of~\eqref{eqn:itoformula_veriftheor} with $T$ replaced by $T\land \tau_n$, we obtain that
\[
\mathbb{E}[v(T\land\tau_n,X_{t,x}^\alpha(T\land\tau_n),Y_{t,y}(T\land\tau_n),P_{t,p}(T\land\tau_n))\mid\mathcal{F}_t]
\le v(t,x,y,p) \qquad 
\]
for any $\alpha\in\mathcal{U}_t, t\in[0,T\land\tau_n], n\ge1$.
Now notice that
\begin{align*}
&\mathbb{E}[v(T\land\tau_n,X_{t,x}^\alpha(T\land\tau_n),Y_{t,y}(T\land\tau_n),P_{t,p}(T\land\tau_n))^2]\\
&=\mathbb{E}[e^{-2\eta X_{t,x}^\alpha(T\land\tau_n)e^{R(T\land\tau_n-t)}} f(T\land\tau_n,Y_{T\land\tau_n})^2e^{2g(T\land\tau_n,P_{T\land\tau_n})}]\\
&\le C\,e^{-2\eta ne^{R(T\land\tau_n)}}\le C
\end{align*}
thus $\{v(T\land\tau_n,X_{t,x}^\alpha(T\land\tau_n),Y_{t,y}(T\land\tau_n),P_{t,p}(T\land\tau_n))\}_{n=1,\dots}$ is a family of uniformly integrable random variables. Hence it converges almost surely. Observing that $\{\tau_n\}_{n=1,\dots}$ is a bounded and non-decreasing sequence, since $\mathbb{P}[\abs{X_t^\alpha}<+\infty]=1$ (see~\eqref{eqn:wealth_sol}) and using~\eqref{eqn:solutionY2} and~\eqref{eqn:solutionP2}, taking the limit for $n\to+\infty$, we conclude that
\begin{align}
&\mathbb{E}[v(T,X_{t,x}^\alpha(T),Y_{t,y}(T),P_{t,p}(T))\mid\mathcal{F}_t]\notag\\
&= \lim_{n\to+\infty}{\mathbb{E}[v(T\land\tau_n,X_{t,x}^\alpha(T\land\tau_n),Y_{t,y}(T\land\tau_n),P_{t,p}(T\land\tau_n))\mid\mathcal{F}_t]}\notag\\
&\le v(t,x,y,p) \qquad \forall \alpha\in\mathcal{U}_t, t\in[0,T]. \label{eqn:verification_inequality}
\end{align}
To be precise, we have that
\[
\lim_{n\to+\infty}{X_{t,x}^\alpha(T\land\tau_n)} = X_{t,x}^\alpha(T-) = X_{t,x}^\alpha(T) \qquad \mathbb{P}\text{-a.s.}
\]
since the jump of $\{N_t\}_{t\in[0,T]}$ occurs at time $T$ with probability zero.
Using the final condition of the HJB equation~\eqref{eqn:HJBformulation_prop}, from~\eqref{eqn:verification_inequality} we get
\[
\mathbb{E}[U(X_{t,x}^\alpha(T))]\le v(t,x,y,p) \qquad \forall \alpha\in\mathcal{U}_t, t\in[0,T].
\]
Now note that $\alpha^*(t,y,p)$ was calculated in order to obtain $\mathcal{L}^{\alpha^*} v(t,x,y,p)=0$; replicating the calculations above, replacing $\mathcal{L}^\alpha$ with $\mathcal{L}^{\alpha^*}$, we find the equality:
\[
\sup_{\alpha\in\mathcal{U}_t} \mathbb{E}[U(X_{t,x}^\alpha(T))\mid Y_t=y,P_t=p]=v(t,x,y,p) 
\]
thus $\alpha^*(t,Y_t,P_t)$ is an optimal control.
\end{proof}

After the characterization of the value function, we provide a probabilistic representation by means of the Feynman-Kac formula. In preparation for this result, let us introduce a new probability measure $\mathbb{Q}\ll\mathbb{P}$. Novikov condition~\eqref{eqn:novikov} implies that the process $\{L_t\}_{t\in[0,T]}$ defined by
\[
L_t = e^{-\bigl(\frac{1}{2}\int_0^t \abs{\frac{\mu(s,P_s)-R}{\sigma(s,P_s)}}^2\,ds+\int_0^t \frac{\mu(s,P_s)-R}{\sigma(s,P_s)}\,dW^{(P)}_s\bigr)}
\]
is an $\{\mathcal{F}_t\}$-martingale and we can introduce the following probability measure $\mathbb{Q}$:
\begin{equation}
\label{eqn:measureQ}
\frac{d\mathbb{Q}}{d\mathbb{P}}\bigg|_{\mathcal{F}_t}=L_t \qquad t\in[0,T].
\end{equation}
By Girsanov theorem we know that $\tilde{W}^{(P)}_t=W^{(P)}_t+\int_0^t \frac{\mu(s,P_s)-R}{\sigma(s,P_s)}\,ds$ is a $\mathbb{Q}$-Brownian motion and we can rewrite the risky asset dynamic as
\begin{equation}
\label{eqn:riskyassetQ}
d\tilde{P}_t=P_t\biggl[R\,dt + \sigma(t,P_t)\,d\tilde{W}^{(P)}_t\biggr].
\end{equation}
Since the discounted price $\{\tilde{P}_t=P_t e^{-Rt}\}_{t\in[0,T]}$ turns out to be an $\{\mathcal{F}_t\}$-martingale, then $\mathbb{Q}$ is a martingale or risk-neutral measure for $\{P_t\}$%
\footnote{Let us observe that under $\mathbb{Q}$ the dynamics of $\{Y_t\}$ and $\{R_t\}$ do not change.}.
We will denote by $\mathbb{E}^\mathbb{Q}$ the conditional expectation with respect to the martingale measure $\mathbb{Q}$.

\begin{prop}
Suppose that~\eqref{eqn:PDEf} and~\eqref{eqn:PDEg} admit classical solutions $f\in\mathcal{C}^{1,2}((0,T)\times\mathbb{R}))\cap\mathcal{C}([0,T]\times\mathbb{R}))$ and $g\in\mathcal{C}^{1,2}((0,T)\times(0,+\infty))\cap\mathcal{C}([0,T]\times(0,+\infty))$, respectively, both bounded with $\frac{\partial{f}}{\partial{y}}$ and $\frac{\partial{g}}{\partial{p}}$ satisfying the growth conditions~\eqref{eqn:df_growth} and~\eqref{eqn:dg_growth}. Then $f$ and $g$ admit the following Feynman-Kac representations:
\begin{equation}
\label{eqn:f_feynmankac}
f(t,y) = \mathbb{E}\biggl[e^{-\int_t^T{\bigl(\eta e^{R(T-s)}c(s,Y_s)+\Psi^{u^*}(s,Y_s)\bigr) \,ds}}\mid Y_t=y \biggr]
\end{equation}
\begin{equation}
\label{eqn:g_feynmankac}
g(t,p) = -\mathbb{E}^\mathbb{Q}\biggl[\int_t^T{\frac{1}{2}\frac{\bigl(\mu(s,P_s)-R\bigr)^2}{\sigma(s,P_s)^2}\,ds}\mid P_t=p \biggr]
\end{equation}
where $\Psi^{u^*}(t,y)$ is the function defined by~\eqref{eqn:Psi_u}, replacing $u$ with $u^*(t,y)$, and $\mathbb{Q}$ is the probability measure introduced in~\eqref{eqn:measureQ}.
\end{prop}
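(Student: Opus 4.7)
The plan is to apply Itô's formula to $f(s,Y_s)$ and $g(s,P_s)$ (the latter under the risk-neutral measure $\mathbb{Q}$), exploit the PDEs \eqref{eqn:PDEf} and \eqref{eqn:PDEg} to cancel the drifts, and recover the claimed representations by a localization argument combined with the stated growth and integrability conditions.

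For the representation of $f$, I would proceed as follows. Set
\[
h(t,y)\doteq \eta e^{R(T-t)}c(t,y)+\Psi^{u^*(t,y)}(t,y)
\]
and consider the process $M^f_s \doteq f(s,Y_s)\,e^{-\int_t^s h(r,Y_r)\,dr}$ for $s\in[t,T]$. Applying Itô's formula and using that $f$ solves~\eqref{eqn:PDEf}, namely $f_t+\mathcal{L}^Y f = h\,f$, the finite-variation part vanishes, so
\[
M^f_s - f(t,y) = \int_t^s e^{-\int_t^r h(u,Y_u)\,du}\,\gamma(r,Y_r)\,\frac{\partial f}{\partial y}(r,Y_r)\,dW^{(Y)}_r.
\]
Introducing the localizing sequence $\tau_n\doteq\inf\{s\ge t:|Y_s|>n\}\wedge T$, the stopped process $\{M^f_{s\wedge\tau_n}\}$ is a martingale, so $\mathbb{E}[M^f_{T\wedge\tau_n}\mid Y_t=y]=f(t,y)$. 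Since $f$ is bounded and $h$ is bounded below on $[0,T]\times[-n,n]$, one may pass to the limit $n\to\infty$ by dominated convergence, noting that~\eqref{eqn:solutionY2} gives $\tau_n\uparrow T$ almost surely, that $|f|$ is bounded, that $e^{-\int_t^{T}h\,dr}$ is integrable by~\eqref{eqn:cpremium_int} together with the boundedness of $\Psi^{u^*}$ on the range of admissible values (since $\Psi^{u^*}\le 0$ or is dominated by finite quantities using Assumption~\ref{assumption:EZ_finite}), to obtain $\mathbb{E}[M^f_T\mid Y_t=y]=f(t,y)$. The terminal condition $f(T,y)=1$ then gives \eqref{eqn:f_feynmankac}. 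The growth condition~\eqref{eqn:df_growth} combined with~\eqref{eqn:solutionY2} ensures that the Itô integral is actually a true martingale, making the passage to the limit rigorous.

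For $g$, I would work under $\mathbb{Q}$. Under this measure, $P$ satisfies~\eqref{eqn:riskyassetQ} so that its generator is $\mathcal{L}^{P,\mathbb{Q}} \doteq pR\,\partial_p + \tfrac{1}{2}p^2\sigma(t,p)^2\partial_p^2$. The PDE~\eqref{eqn:PDEg} rewrites as
\[
\frac{\partial g}{\partial t}(t,p)+\mathcal{L}^{P,\mathbb{Q}}g(t,p)=\frac{1}{2}\frac{(\mu(t,p)-R)^2}{\sigma(t,p)^2}.
\]
Applying Itô's formula under $\mathbb{Q}$ to $g(s,P_s)$ yields
\[
g(T,P_T)-g(t,p)=\int_t^T \frac{1}{2}\frac{(\mu(s,P_s)-R)^2}{\sigma(s,P_s)^2}\,ds+\int_t^T P_s\sigma(s,P_s)\frac{\partial g}{\partial p}(s,P_s)\,d\tilde W^{(P)}_s.
\]
Using $g(T,\cdot)=0$, if the stochastic integral is a $\mathbb{Q}$-martingale (after a localization $\sigma_n\doteq\inf\{s:P_s>n\}\wedge T$ and limit), taking $\mathbb{E}^{\mathbb{Q}}[\,\cdot\mid P_t=p]$ yields~\eqref{eqn:g_feynmankac}. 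The growth bound~\eqref{eqn:dg_growth} together with~\eqref{eqn:P_additional_conditions} and~\eqref{eqn:solutionP2} provides the $\mathbb{Q}$-integrability of $\int_t^T P_s^2\sigma(s,P_s)^2(\partial_p g)^2\,ds$, so the stochastic integral is a true $\mathbb{Q}$-martingale; the boundedness of $g$ allows the limit to be taken inside the expectation.

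The main obstacle is that the integrability conditions in the hypotheses are stated under $\mathbb{P}$, whereas for the $g$-representation we need them under $\mathbb{Q}$. This is handled by the Novikov condition~\eqref{eqn:novikov}, which guarantees that the Radon-Nikodym derivative $L_T$ is integrable, and by carrying out the dominated-convergence argument on the localized expectations $\mathbb{E}^{\mathbb{Q}}[g(T\wedge\sigma_n,P_{T\wedge\sigma_n})\mid P_t=p]$, using $|g|\le C$ to dominate the martingale part. A secondary subtlety is the justification that the drift term $\frac{1}{2}(\mu-R)^2/\sigma^2$ is $\mathbb{Q}$-integrable over $[t,T]$; this follows from~\eqref{eqn:novikov} by a standard argument (the quantity inside the Novikov expectation already controls this integrand).
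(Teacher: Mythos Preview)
Your proposal is correct and is precisely an unfolding of the Feynman--Kac argument; the paper's own proof is a single sentence (``The result is a simple consequence of the Feynman--Kac theorem''), so you are taking the same route, just supplying the standard It\^o-plus-localization details that the paper leaves implicit. Your handling of the dominating bound for $e^{-\int_t^T h\,dr}$ is slightly loose but can be made rigorous via $-h\le -\Psi^{u^*}\le -\Psi^0=\lambda(t,y)\bigl(\mathbb{E}[e^{\eta Ze^{R(T-t)}}]-1\bigr)$ together with Assumption~\ref{assumption:EZ_finite}, and the $\mathbb{P}$-versus-$\mathbb{Q}$ issue for $g$ is harmless here because $g$ is bounded, so dominated convergence on the localized sequence already suffices.
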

\begin{proof}
The result is a simple consequence of the Feynman-Kac theorem.
\end{proof}
In Section~\ref{section:PDE} we will provide sufficient conditions which ensure that the functions $f$ and $g$ given in~\eqref{eqn:f_feynmankac} and~\eqref{eqn:g_feynmankac} are, respectively, $\mathcal{C}^{1,2}((0,T)\times\mathbb{R})$ and $\mathcal{C}^{1,2}((0,T)\times(0,+\infty))$ solutions to the Cauchy problems~\eqref{eqn:PDEf} and~\eqref{eqn:PDEg}.


\section{Simulations and numerical results}
\label{section:simulation}

Here we illustrate some numerical results based on the theoretical framework developed in the previous sections. In particular, we perform sensitivity analysis of the optimal reinsurance-investment strategy in order to study the effect of the model parameters on the insurer's decision.


\subsection{Reinsurance strategy}

First, we compare the optimal reinsurance strategy under the \textit{expected value principle} (see Lemma~\ref{lemma:expected_vp_prop}) and the \textit{intensity-adjusted variance premium principle} (see Lemma~\ref{lemma:variance_modified_prop}). In this subsection the first one will be shortly referred as EVP, while the second one as IAVP. The main difference is that under EVP we loose the dependence on the stochastic factor, while under IAVP we keep this dependence; moreover, IAVP also depends on the second moment of the r.v. $Z$ introduced in Section~\ref{section:formulation}.\\
In what follows we assume that $\{Z_i\}_{i=1,\dots}$ is a sequence of i.i.d. positive random variables Pareto distributed with shape parameter $1.8182$ and scale parameter $0.0545$. The stochastic factor is described by the SDE~\eqref{eqn:stochasticfactor} with constant parameters $b=0.3,\gamma=0.3$ and initial condition $Y_0=1$. For the sake of simplicity, we assume that $\lambda(t,y)=\lambda_0 e^{\frac{1}{2}y}$, that is $\{\lambda_t=\lambda(t,Y_t)\}_{t\in[0,T]}$ solves
\[
d\lambda_t=\lambda_t\,\frac{1}{2}dY_t \qquad \lambda_0=0.1,
\]
which guarantees that the intensity is positive. Finally, we consider the model parameters in Table~\ref{tab:parameters}, using the notation introduced in Section~\ref{section:formulation}.
\begin{table}[htp]
\caption{Simulation parameters}
\label{tab:parameters}
\centering
\begin{tabular}{ll}
\toprule
\textbf{Parameter} & \textbf{Value}\\
\midrule
$T$ & $5$ Y\\
$\eta$ & $0.5$\\
$\theta_r$ & $0.1$\\
$R$ & $5\%$\\
\bottomrule
\end{tabular}
\end{table}

From Figure~\ref{img:reins_eta_sens} we observe that the optimal reinsurance strategy is positively correlated to the risk-aversion parameter; moreover, the strategy under EVP seems to be more sensitive to any variation of the risk-aversion.\\

\begin{figure}[htp]
\centering
\includegraphics[width=\textwidth]{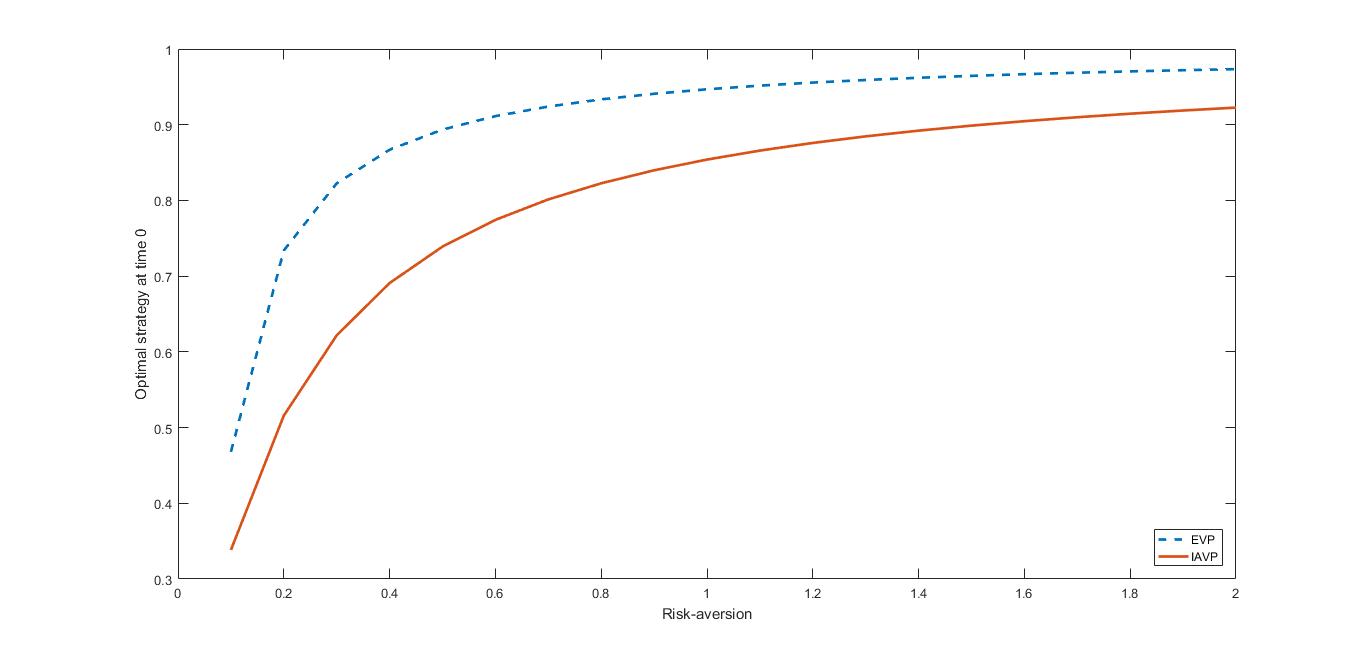}
\caption{The effect of the risk-aversion parameter $\eta$ on the optimal initial strategy}
\label{img:reins_eta_sens}
\end{figure}

In Figure~\ref{img:reins_theta_sens} we notice that any increase in the reinsurance safety loading leads to a decrease of the reinsured risks. It is a simple consequence of the well-known law of demand: the higher the price, the lower the quantity demanded. It is worth noting that under our assumptions the strategy under IAVP is more sensitive than under EVP.

\begin{figure}[htp]
\centering
\includegraphics[width=\textwidth]{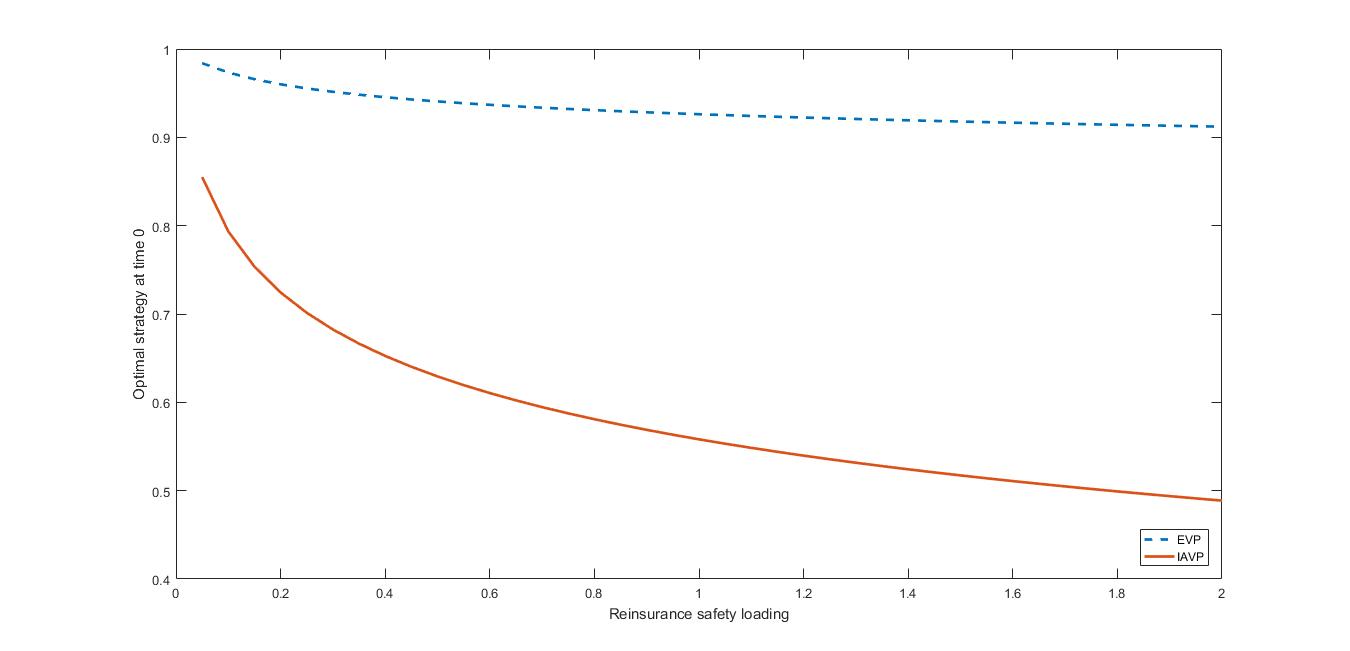}
\caption{The effect of the reinsurance safety loading $\theta_r$ on the optimal initial strategy}
\label{img:reins_theta_sens}
\end{figure}

Finally, in Figure~\ref{img:reins_T_sens} we can see that the insurer increases the protection when the time horizon is higher. Again, the strategy under EVP turns out to be more sensitive to any change of the time horizon. It is interesting that over 15 years EVP leads to more conservative strategies.

\begin{figure}[htp]
\centering
\includegraphics[width=\textwidth]{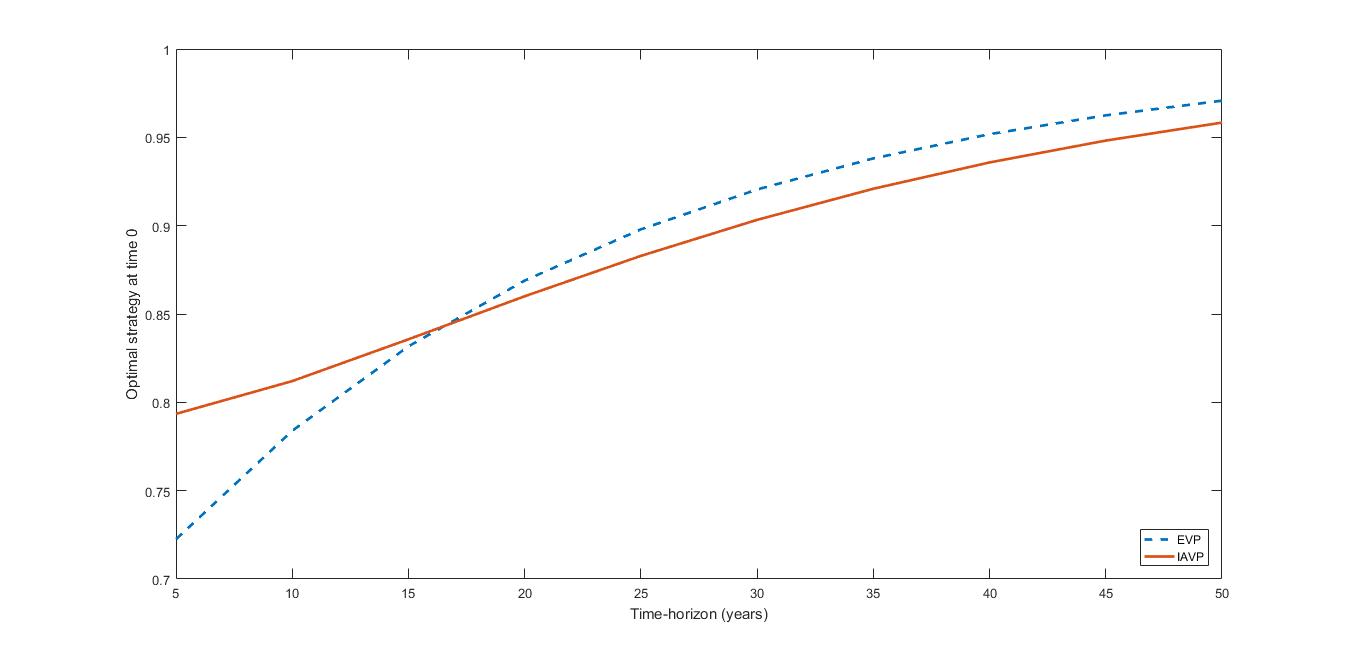}
\caption{The effect of the time horizon $T$ on the optimal initial strategy}
\label{img:reins_T_sens}
\end{figure}

We conclude this subsection investigating the dynamical properties of the reinsurance strategies under EVP and IAVP%
\footnote{Under a practical point of view, we simulated the stochastic processes using the classical Euler's approximation method, with $dt=\frac{T}{500}$.}.
Figure~\ref{img:reins_dynamic} shows that the mean behavior of the optimal reinsurance strategy is decreasing over the time interval; nevertheless, under IAVP the strategy crucially depends on the stochastic factor, hence the insurer will react to any movement of the claims intensity, while under EVP she will follow a deterministic strategy.

\begin{figure}[htp]
\centering
\includegraphics[width=\textwidth]{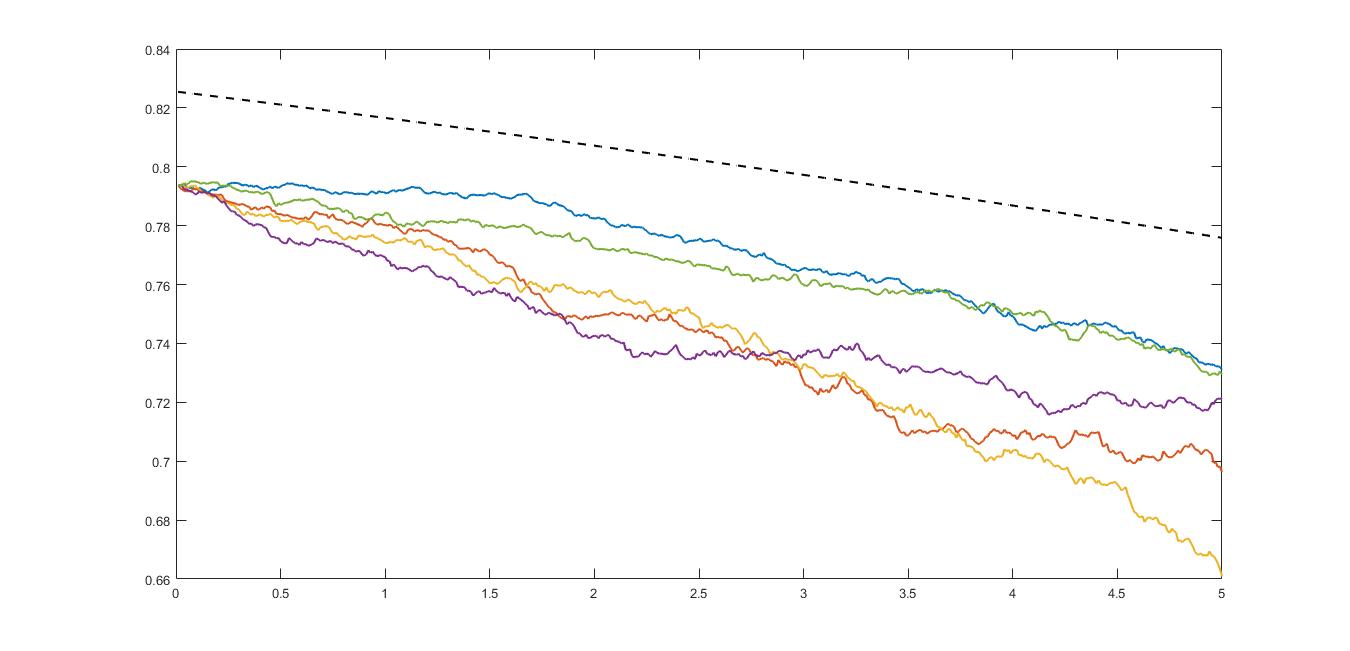}
\caption{Dynamical reinsurance strategies. The dashed line represents the optimal (deterministic) strategy under EVP.}
\label{img:reins_dynamic}
\end{figure}

Summarizing the main results of our numerical simulations, we can conclude that any variation of the model parameters has the same effect on the optimal strategy under EVP and IAVP, at least from a qualitative point of view. It is important noting that any quantitative comparison is affected by the parameters initial choice. Nevertheless, we can state that using our model parameters under EVP the strategy is more sensitive with respect to the model parameters, except for the safety loading, but it is dynamically more stable during the time interval $[0,T]$, because it does not take into account any variation of the claims intensity.


\subsection{Investment strategy}

Now we illustrate a sensitivity analysis for the investment strategy based on the Corollary~\ref{lemma:optimal_inv_strategy}. In our simulations we assumed that the risky asset follows a CEV model, that is
\[
dP_t = P_t \biggl[\mu\,dt + \sigma P_t^\beta\,dW^{(P)}_t\biggr] \qquad P_0=1
\]
with $\mu=0.1,\sigma=0.1,\beta=0.5$, while the risk-free interest rate is $R=5\%$ as in the previous subsection. Let us observe that this model corresponds to~\eqref{eqn:priceprocess} assuming that $\mu(t,p)=\mu$ and $\sigma(t,p)=\sigma p^\beta$, with constant $\mu,\sigma>0$.
The numerical computation of the function $g(t,p)$ and its partial derivative $\frac{\partial{g}}{\partial{p}}(t,p)$ is required by the equation~\eqref{eqn:opt_inv2}; for this purpose we used the Feynman-Kac representation given in~\eqref{eqn:g_feynmankac} evaluated through the standard Monte Carlo method.\\

In figure~\ref{img:eta_inv} we show that the higher is the insurer's risk aversion, the lower is the total amount invested in the risky asset.

\begin{figure}[htp]
\centering
\includegraphics[width=\textwidth]{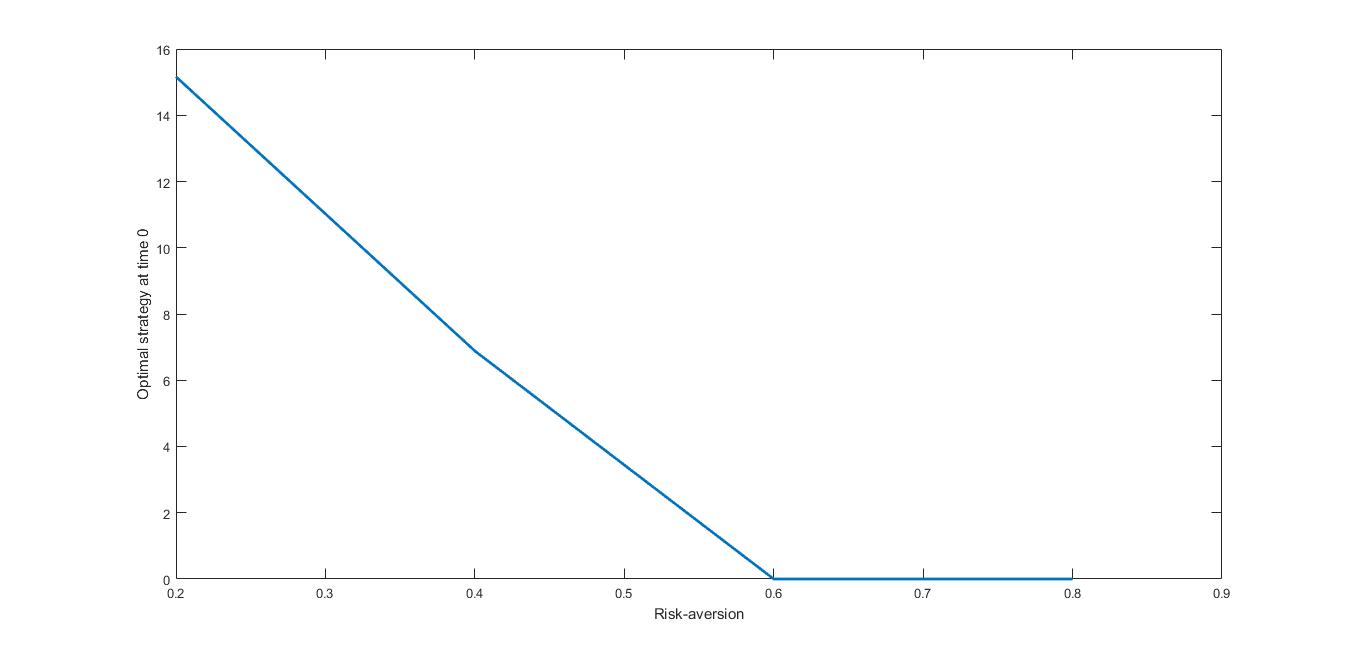}
\caption{The effect of the risk-aversion parameter $\eta$ on the optimal initial strategy}
\label{img:eta_inv}
\end{figure}

Figure~\ref{img:sigma_inv} illustrates that if the volatility increases, then an increasing portion of the insurer's wealth is invested in the risk-free asset.

\begin{figure}[htp]
\centering
\includegraphics[width=\textwidth]{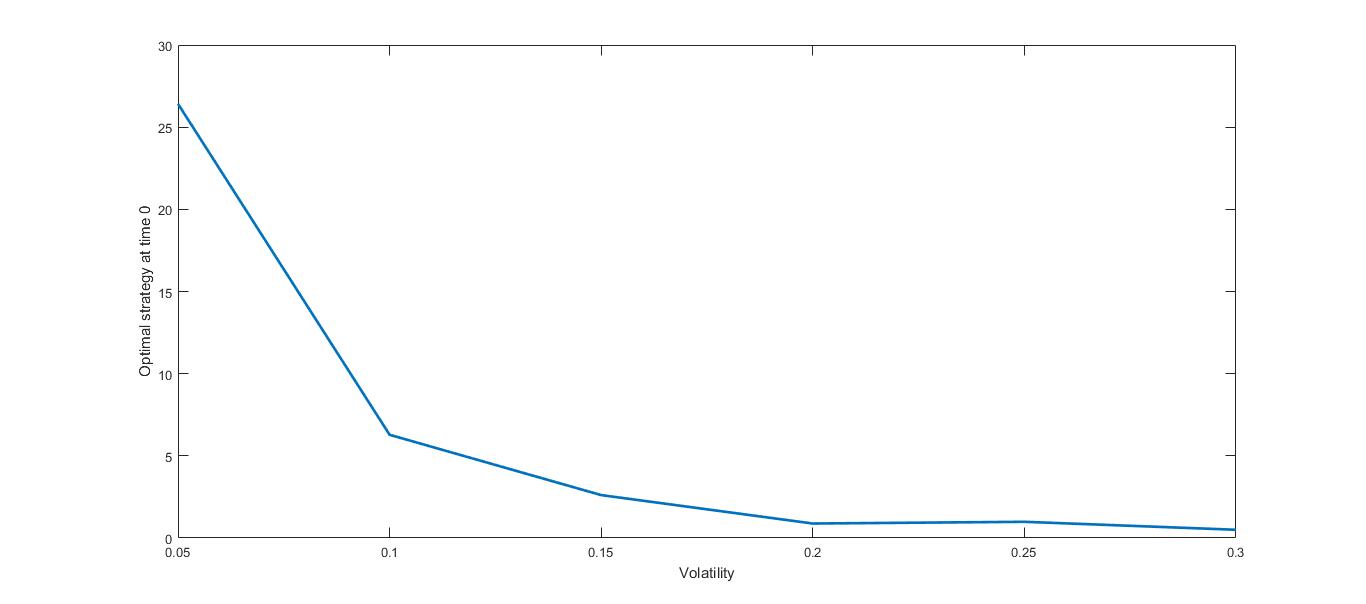}
\caption{The effect of the volatility parameter $\sigma$ on the optimal initial strategy}
\label{img:sigma_inv}
\end{figure}

Finally, if the risk-free interest rate grows up, then the insurer will find it more convenient to invest its surplus in the risk-free asset, as shown in figure~\ref{img:R_inv}.

\begin{figure}[htp]
\centering
\includegraphics[width=\textwidth]{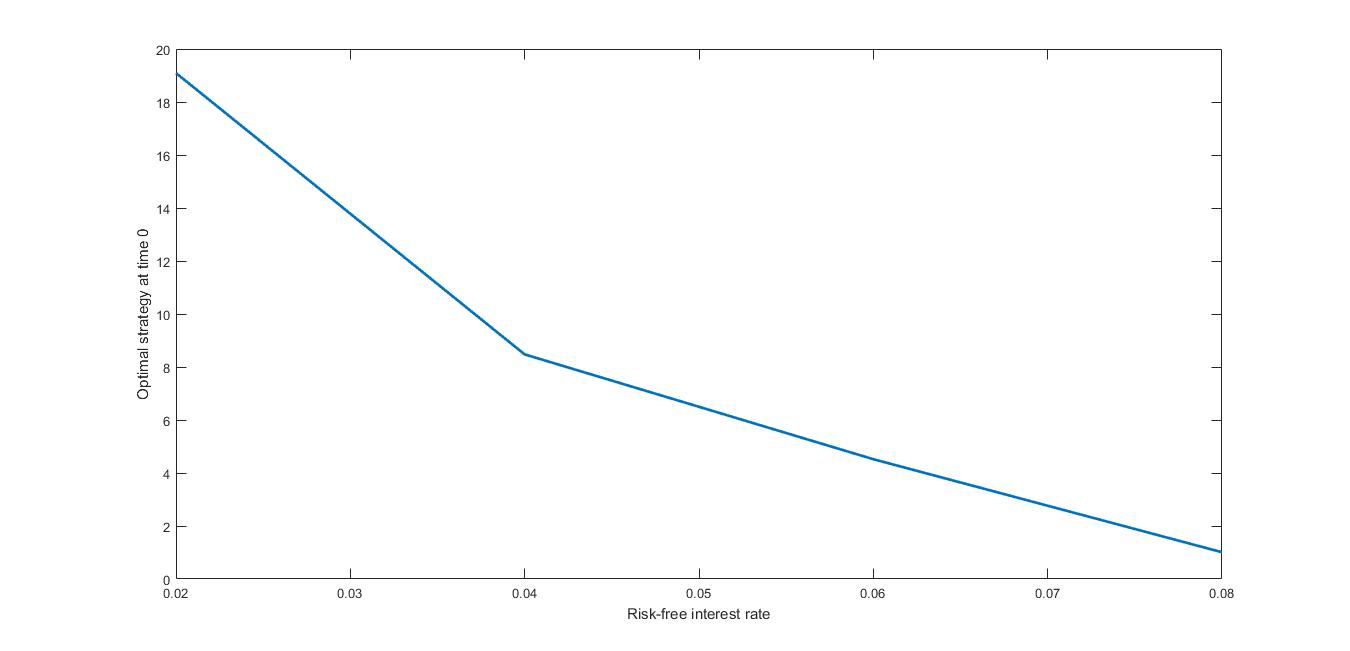}
\caption{The effect of the risk-free interest rate $R$ on the optimal initial strategy}
\label{img:R_inv}
\end{figure}

Similar results can be found in~\cite{shengrongzhao:optreins}. In particular, figure~\ref{img:sigma_inv} confirms the result obtained in Figure 3a of that paper; in addition, figures~\ref{img:eta_inv} and~\ref{img:R_inv} completes the sensitivity analyses performed there.

\pagebreak
\section{Existence and uniqueness of classical solutions}
\label{section:PDE}

In this section we are interested in providing sufficient conditions for existence and uniqueness of the solutions to the PDEs involved in the reinsurance-investment problem, see the Cauchy problems~\eqref{eqn:PDEf} and~\eqref{eqn:PDEg} and as a consequence of a classical solution to HJB equation associated with our problem.\\
First, let us consider~\eqref{eqn:PDEg}. The following Lemma prepares the main result.
\begin{lemma}
\label{lemma:k_holder_forg}
Let us define the set $D_n\doteq(\frac{1}{n},n)$ for $n=1,\dots$ and assume that the functions $\mu(t,p),\sigma(t,p)$ are Lipschitz-continuous in $p\in\overline{D_n}$, uniformly in $t\in[0,T]$. Moreover, assume that $\sigma(t,p)$ is bounded from below, i.e. there exists a constant $\delta_\sigma>0$ such that $\sigma(t,p)\ge\delta_\sigma$ for all $(t,p)\in[0,T]\times(0,+\infty)$.\\
Then for each $n=1,\dots$ the function $k:[0,T]\times(0,+\infty)\to\mathbb{R}$ defined by
\begin{equation}
\label{eqn:function_k_forg}
k(t,p) = \frac{\bigl(\mu(t,p)-R\bigr)^2}{\sigma(t,p)^2}
\end{equation}
is uniformly Lipschitz-continuous on $[0,T]\times\overline{D_n}$.
\end{lemma}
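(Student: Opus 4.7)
The plan is to reduce the Lipschitz estimate for $k$ to bounds on differences of $\mu$ and $\sigma$ via an algebraic decomposition, then control the remaining factors using the uniform lower bound on $\sigma$ together with uniform boundedness of $\mu$ and $\sigma$ on the compact rectangle $[0,T]\times\overline{D_n}$.

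First I would note that uniform Lipschitz continuity in $p\in\overline{D_n}$ together with the continuity of $\mu(t,\cdot)$ and $\sigma(t,\cdot)$ assumed throughout the paper, combined with the compactness of $\overline{D_n}=[1/n,n]$, yields constants $M_\mu^{(n)}$ and $M_\sigma^{(n)}$ such that $|\mu(t,p)|\le M_\mu^{(n)}$ and $|\sigma(t,p)|\le M_\sigma^{(n)}$ for all $(t,p)\in[0,T]\times\overline{D_n}$; indeed, fixing any $p_0\in\overline{D_n}$, we have $|\mu(t,p)|\le|\mu(t,p_0)|+L_\mu^{(n)}(n-1/n)$, and similarly for $\sigma$, and continuity of the coefficients in $t$ (which is part of the standing regularity of the SDE~\eqref{eqn:priceprocess}) bounds the $t$-dependence.

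Next I would write, for fixed $t\in[0,T]$ and $p_1,p_2\in\overline{D_n}$, the telescoping identity
\begin{equation*}
k(t,p_1)-k(t,p_2)=\frac{(\mu(t,p_1)-R)^2-(\mu(t,p_2)-R)^2}{\sigma(t,p_1)^2}+(\mu(t,p_2)-R)^2\left[\frac{1}{\sigma(t,p_1)^2}-\frac{1}{\sigma(t,p_2)^2}\right].
\end{equation*}
The first term is bounded using $|a^2-b^2|=|a+b|\,|a-b|$ with $a=\mu(t,p_1)-R$, $b=\mu(t,p_2)-R$: it is at most $(2M_\mu^{(n)}+2R)L_\mu^{(n)}|p_1-p_2|/\delta_\sigma^2$. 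The second term is treated similarly by writing $\frac{1}{\sigma(t,p_1)^2}-\frac{1}{\sigma(t,p_2)^2}=\frac{(\sigma(t,p_2)-\sigma(t,p_1))(\sigma(t,p_1)+\sigma(t,p_2))}{\sigma(t,p_1)^2\sigma(t,p_2)^2}$ and using $\sigma\ge\delta_\sigma$ in the denominator and $\sigma\le M_\sigma^{(n)}$ and Lipschitz continuity of $\sigma$ in the numerator, yielding a bound of the form $(M_\mu^{(n)}+R)^2\cdot 2M_\sigma^{(n)}L_\sigma^{(n)}|p_1-p_2|/\delta_\sigma^4$. Summing the two estimates produces a single Lipschitz constant $L_k^{(n)}$ depending only on $n$, $R$, $\delta_\sigma$, and the bounds/Lipschitz constants of $\mu$ and $\sigma$ on $[0,T]\times\overline{D_n}$, uniformly in $t\in[0,T]$.

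The main obstacle is the extraction of uniform upper bounds on $\mu$ and $\sigma$ on the closed strip, since pointwise Lipschitz continuity in $p$ uniformly in $t$ does not by itself preclude blow-up in $t$; this is resolved by the continuity assumptions on the SDE coefficients, which together with compactness of $[0,T]\times\overline{D_n}$ give the required boundedness. Once this is in place, the algebraic manipulation is routine and the lower bound $\sigma\ge\delta_\sigma$ keeps all denominators safely away from zero.
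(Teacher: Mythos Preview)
Your argument is correct and follows the same overall strategy as the paper: reduce the increment of $k$ to increments of $\mu$ and $\sigma$ by an algebraic decomposition, then invoke the lower bound $\sigma\ge\delta_\sigma$ together with local boundedness of $\mu,\sigma$. The decompositions differ only cosmetically: the paper first uses local Lipschitz continuity of $x\mapsto x^2$ to pass from $k=\bigl((\mu-R)/\sigma\bigr)^2$ to the ratio $(\mu-R)/\sigma$, puts the difference of ratios over a common denominator, and then telescopes; you instead telescope $k$ directly and factor $a^2-b^2$ afterwards. Both routes are equivalent.

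One substantive difference is worth flagging. You keep $t$ fixed and prove Lipschitz continuity in $p$ uniformly in $t$, whereas the paper's computation is written for two points $(t,p)$ and $(t',p')$, i.e.\ it aims at joint Lipschitz continuity on $[0,T]\times\overline{D_n}$. From the hypotheses as stated (Lipschitz in $p$ only, uniformly in $t$), your formulation is the one that is actually supported; the paper's joint version tacitly requires additional regularity of $\mu,\sigma$ in $t$. You are also more careful than the paper in justifying the boundedness of $\mu$ on the strip $[0,T]\times\overline{D_n}$: the paper simply asserts that ``any Lipschitz-continuous function on a bounded domain is also bounded,'' while you correctly observe that Lipschitz in $p$ alone does not preclude blow-up in $t$ and that continuity in $t$ (coming from the standing assumptions on the coefficients of the SDE for $P$) is needed to close this step.
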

\begin{proof}
Firstly, using the Lipschitz-continuity of the parabolic function on the bounded domain $\overline{D_n}$ we have that
\begin{align*}
\abs*{k(t,p)-k(t',p')} &= \abs*{\biggl(\frac{\mu(t,p)-R}{\sigma(t,p)}\biggr)^2-\biggl(\frac{\mu(t',p')-R}{\sigma(t',p')}\biggr)^2}\\
&\le K_n\abs*{\frac{\mu(t,p)-R}{\sigma(t,p)}-\frac{\mu(t',p')-R}{\sigma(t',p')}}\\
&= K_n\abs*{\frac{\sigma(t',p')[\mu(t,p)-R]-\sigma(t,p)[\mu(t',p')-R]}{\sigma(t,p)\sigma(t',p')}}
\end{align*}
for a positive constant $K_n>0$ which depends on $n$. Now, being $\sigma(t,p)$ bounded from below, setting $\tilde{K_n}=\frac{K_n}{\delta_\sigma^2}$ we have that
\begin{align*}
\abs*{k(t,p)-k(t',p')} &\le\tilde{K_n}\abs*{\sigma(t',p')[\mu(t,p)-R]-\sigma(t,p)[\mu(t',p')-R]}\\
&\le\tilde{K_n}R\abs*{\sigma(t,p)-\sigma(t',p')}+\tilde{K_n}\abs*{\sigma(t',p')\mu(t,p)-\sigma(t,p)\mu(t',p')}\\
&\le\tilde{K_n}R\abs*{\sigma(t,p)-\sigma(t',p')}+\tilde{K_n}\abs*{\sigma(t',p')\mu(t,p)-\sigma(t',p')\mu(t',p')}\\
&\,+\tilde{K_n}\mu(t',p')\abs*{\sigma(t',p')-\sigma(t,p)}
\end{align*}
and, observing that any Lipschitz-continuous function on a bounded domain is also bounded, the result is a consequence of our hypotheses.
\end{proof}

\begin{theorem}
\label{theorem:g_pde}
Suppose that the following conditions are satisfied:
\begin{enumerate}
\item $\mu(t,p)$ and $\sigma(t,p)$ are locally Lipschitz-continuous in $p$, uniformly in $t\in[0,T]$, i.e. for each $n=1,\dots$ there exists a positive constant $K_n$ such that
\[
\abs{\mu(t,p)-\mu(t,p')}+\abs{\sigma(t,p)-\sigma(t,p')}\le K_n\abs{p-p'} \qquad \forall p,p'\in \biggl[\frac{1}{n},n\biggr],t\in[0,T];
\]
\item for all couple $(t,p)\in[0,T]\times(0,+\infty)$ the solution $\{P_{t,p}(s)\}_{s\in[t,T]}$ does not explode, i.e.
\[
\mathbb{P}[\sup_{s\in[t,T]}{P_{t,p}(s)}<\infty]=1;
\]
for instance, it is true if we assume the sub-linear growth for $\sigma(t,p)$:
\[
\abs{\sigma(t,p)} \le K_\sigma(1+p) \qquad \forall p\in(0,+\infty),t\in[0,T]
\]
together with the other hypotheses of this theorem;%
\footnote{See~\cite{pascucci:PDEMartingale}, Theorem 9.11, p. 281.}
\item $\sigma(t,p)$ is bounded from below, i.e. there exists a constant $\delta_\sigma>0$ such that $\sigma(t,p)\ge\delta_\sigma$ for all $(t,p)\in[0,T]\times(0,+\infty)$;
\item $\mu(t,p)$ is bounded, i.e. there exists a constant $\delta_\mu>0$ such that $\abs{\mu(t,p)}\le\delta_\mu$ for all $(t,p)\in[0,T]\times(0,+\infty)$.
\end{enumerate}
Then the function $g(t,p)$ given in~\eqref{eqn:g_feynmankac} satisfies the Cauchy problem~\eqref{eqn:PDEg} and there exists a unique classical solution to~\eqref{eqn:PDEg}. Moreover, we have that $g\in\mathcal{C}^{1,2}((0,T)\times(0,+\infty))$.
\end{theorem}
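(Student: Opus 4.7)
The plan is to establish that $g$ defined by the Feynman--Kac representation~\eqref{eqn:g_feynmankac} lies in $\mathcal{C}^{1,2}((0,T)\times(0,+\infty))$ and classically solves~\eqref{eqn:PDEg}, via a standard localization of the Feynman--Kac identification. As a preliminary step, I would observe that under hypotheses 3--4 the source
\[
k(t,p) \doteq \tfrac{1}{2}\frac{(\mu(t,p)-R)^2}{\sigma(t,p)^2}
\]
is uniformly bounded on $[0,T]\times(0,+\infty)$ by $(\delta_\mu+R)^2/(2\delta_\sigma^2)$, so $g$ is bounded by $T(\delta_\mu+R)^2/(2\delta_\sigma^2)$. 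By Lemma~\ref{lemma:k_holder_forg} $k$ is Lipschitz on every strip $[0,T]\times\overline{D_n}$ with $D_n=(1/n,n)$. Hypothesis~1 provides well-posedness and continuous dependence on the initial data for~\eqref{eqn:riskyassetQ} on each bounded strip, and hypothesis~2 rules out explosion, so dominated convergence applied to~\eqref{eqn:g_feynmankac} yields continuity of $g$ on $[0,T]\times(0,+\infty)$.

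Next I would localize the PDE. For each fixed $n$ I would consider the Cauchy--Dirichlet problem on the bounded cylinder $(0,T)\times D_n$
\[
\mathcal{A}h_n = k \text{ in } (0,T)\times D_n,\quad h_n(T,p)=0,\quad h_n(t,1/n)=g(t,1/n),\quad h_n(t,n)=g(t,n),
\]
where $\mathcal{A}\doteq\partial_t+Rp\,\partial_p+\tfrac{1}{2}p^2\sigma(t,p)^2\partial_p^2$ is the $\mathbb{Q}$-generator associated with~\eqref{eqn:riskyassetQ}. On $[0,T]\times\overline{D_n}$ the operator $\mathcal{A}$ is uniformly parabolic (since $p^2\sigma(t,p)^2\ge\delta_\sigma^2/n^2$), its coefficients $Rp$ and $p\sigma(t,p)$ are Lipschitz in $p$ uniformly in $t$ by hypothesis~1, the source $k$ is Lipschitz by Lemma~\ref{lemma:k_holder_forg}, and the parabolic-boundary data is continuous by the preceding step. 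Classical existence and uniqueness theory for linear parabolic Cauchy--Dirichlet problems therefore delivers a unique $h_n \in \mathcal{C}^{1,2}((0,T)\times D_n)\cap\mathcal{C}([0,T]\times\overline{D_n})$.

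To identify $h_n$ with $g$ on $[0,T]\times\overline{D_n}$, I would apply It\^o's formula to $h_n(s,P^{t,p}_s)$ on the stochastic interval $[t,\tau_n]$, where $\tau_n\doteq\inf\{s\ge t:P^{t,p}_s\notin D_n\}\wedge T$. Since $\mathcal{A}h_n=k$ and $\partial_p h_n$ is bounded on the compact set $[0,T]\times\overline{D_n}$, the stochastic integral is a true $\mathbb{Q}$-martingale and taking expectations gives
\[
h_n(t,p)=-\mathbb{E}^\mathbb{Q}\biggl[\int_t^{\tau_n}k(s,P^{t,p}_s)\,ds\biggr]+\mathbb{E}^\mathbb{Q}\bigl[h_n(\tau_n,P^{t,p}_{\tau_n})\bigr].
\]
By construction of the boundary data and because $h_n(T,\cdot)=0=g(T,\cdot)$, the second term equals $\mathbb{E}^\mathbb{Q}[g(\tau_n,P^{t,p}_{\tau_n})]$; the strong Markov property applied to~\eqref{eqn:g_feynmankac} shows that $g$ satisfies the very same identity, so $h_n\equiv g$ on $[0,T]\times\overline{D_n}$. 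Since $D_n\uparrow(0,+\infty)$, $g$ inherits the $\mathcal{C}^{1,2}$ regularity on $(0,T)\times(0,+\infty)$ and solves~\eqref{eqn:PDEg} pointwise there. Uniqueness within the class of bounded classical solutions follows by running the same It\^o argument with an arbitrary such solution in place of $h_n$ and invoking the no-explosion hypothesis~2 to pass to the limit $n\to\infty$.

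The delicate point I expect is the lateral-boundary-regularity step: one must make sure $g(\cdot,1/n)$ and $g(\cdot,n)$ are continuous and compatible with the terminal condition so that the classical Cauchy--Dirichlet existence theorem applies. This reduces to the continuous dependence of $\{P^{t,p}_s\}$ on $(t,p)$, which rests on the local Lipschitz hypothesis~1, while no-explosion (hypothesis~2) is essential both for the continuity of $g$ at every $p>0$ and for the final passage to the limit that extends the regularity to the full quadrant.
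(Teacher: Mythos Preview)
Your proposal is correct and follows essentially the same approach as the paper: both localize on the domains $D_n=(1/n,n)$, invoke Lemma~\ref{lemma:k_holder_forg} for the local Lipschitz regularity of $k$, and use the boundedness of $k$ coming from hypotheses~3--4. The only difference is one of presentation: the paper packages the entire localization/Cauchy--Dirichlet/It\^o identification argument into a direct citation of Theorem~1 and Lemma~2 of Heath--Schweizer~\cite{heath:pde}, whereas you unpack that machinery explicitly (solving the Cauchy--Dirichlet problem on each $D_n$, matching via the strong Markov property, and passing to the limit using no-explosion).
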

\begin{proof}
The proof is a consequence of Theorem 1 and Lemma 2 in~\cite{heath:pde}, toghether with Lemma~\ref{lemma:k_holder_forg}. We highlight that in order to use those results, we take $D_n\doteq(\frac{1}{n},n)$ for $n=1,\dots$ as bounded domains such that $(0,+\infty)=\bigcup_{n=1}^\infty{D_n}$. Moreover, we observe that the function $k$ defined in~\eqref{eqn:function_k_forg} is bounded, as requested in Lemma 2 of~\cite{heath:pde}, because $\mu(t,p)$ is bounded and $\sigma(t,p)$ is bounded from below. 
\end{proof}

\begin{remark}
In~\cite{shengrongzhao:optreins} the authors found an explicit solution of the Cauchy problem~\eqref{eqn:PDEg} in the particular case of the CEV model, i.e. when $\mu(t,p)=\mu$ and $\sigma(t,p)=kp^\beta$.
\end{remark}


Now we turn the attention to the second PDE involved in the reinsurance-investment problem, see the Cauchy problem~\eqref{eqn:PDEf}.
Before proving the existence theorem, let us state some preliminary results.
\begin{lemma}
\label{lemma:H(t,y,u)_holder}
Given a compact set $K\in\mathbb{R}$ let us assume that $H(t,y,u):[0,T]\times\mathbb{R}\times K$ is H\"{o}lder-continuous in $(t,y)\in[0,T]\times\mathbb{R}$ uniformly in $u\in K$ with exponent $0<\xi\le1$. Then $\max_{u\in K}{H(t,y,u)}$ is H\"{o}lder-continuous in $(t,y)\in[0,T]\times\mathbb{R}$ with exponent $0<\xi\le1$.
\end{lemma}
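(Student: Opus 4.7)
The plan is to exploit the uniform-in-$u$ Hölder continuity of $H$ to transfer the bound directly to the maximum, via a standard $\varepsilon$-maximizer argument. Concretely, by hypothesis there exists a constant $C>0$ such that
\[
\abs{H(t,y,u)-H(t',y',u)}\le C\bigl(\abs{t-t'}^\xi+\abs{y-y'}^\xi\bigr)
\]
for all $(t,y),(t',y')\in[0,T]\times\mathbb{R}$ and all $u\in K$, where the constant $C$ does not depend on $u$. Denoting $M(t,y)\doteq\max_{u\in K}H(t,y,u)$, the aim is to establish the same bound for $M$.

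Fix two points $(t,y),(t',y')\in[0,T]\times\mathbb{R}$ and an arbitrary $\varepsilon>0$. First I would select $u_\varepsilon\in K$ such that $H(t,y,u_\varepsilon)>M(t,y)-\varepsilon$ (such a $u_\varepsilon$ exists by definition of the supremum; if one wishes to avoid invoking continuity in $u$, this $\varepsilon$-maximizer argument suffices). Then
\[
M(t,y)-M(t',y')<H(t,y,u_\varepsilon)+\varepsilon-H(t',y',u_\varepsilon)\le C\bigl(\abs{t-t'}^\xi+\abs{y-y'}^\xi\bigr)+\varepsilon,
\]
where I used that $M(t',y')\ge H(t',y',u_\varepsilon)$ together with the uniform Hölder bound. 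Swapping the roles of $(t,y)$ and $(t',y')$ and then letting $\varepsilon\downarrow 0$ in both inequalities gives
\[
\abs{M(t,y)-M(t',y')}\le C\bigl(\abs{t-t'}^\xi+\abs{y-y'}^\xi\bigr),
\]
which is exactly the Hölder-continuity claim.

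There is no real obstacle in this argument: the only subtle point is that one should not invoke the existence of an exact maximizer, since the hypothesis does not supply continuity of $H$ in $u$, but this is bypassed cleanly by the $\varepsilon$-maximizer trick above. The proof is short and should be presented in essentially this form.
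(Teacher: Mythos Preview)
Your proof is correct and follows essentially the same route as the paper: both arguments bound the difference of the maxima by the (supremum of the) pointwise differences and then invoke the uniform-in-$u$ H\"older estimate. The paper packages this as the auxiliary inequality $\abs{\max_{u\in K}h_1(u)-\max_{u\in K}h_2(u)}\le\max_{u\in K}\abs{h_1(u)-h_2(u)}$ and proves it by the same ``pick a maximizer'' trick you use; your $\varepsilon$-maximizer variant is a minor but welcome refinement, since it avoids tacitly assuming that the maximum over $K$ is attained.
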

\begin{proof}
Given $t,t'\in[0,T]$ and $y,y'\in\mathbb{R}$, let us define
\[
h_1(u)=H(t,y,u) \qquad h_2(u)=H(t',y',u).
\]
Then we have that
\begin{equation}
\label{eqn:max_holder}
\abs{\max_{u\in K}{h_1(u)}-\max_{u\in K}{h_2(u)}}\le \max_{u\in K}{\abs{h_1(u)-h_2(u)}}.
\end{equation}
In fact, observing that
\[ \abs{\max_{u\in K}{h_1(u)}-\max_{u\in K}{h_2(u)}} =
\begin{cases}
      \max_{u\in K}{h_1(u)}-\max_{u\in K}{h_2(u)} & \text{if} \quad \max_{u\in K}{h_1(u)}\ge\max_{u\in K}{h_2(u)} \\
      \max_{u\in K}{h_2(u)}-\max_{u\in K}{h_1(u)} & \text{if} \quad \max_{u\in K}{h_1(u)}<\max_{u\in K}{h_2(u)} \\
   \end{cases}
\]
we notice that in the first case
\begin{align*}
\max_{u\in K}{h_1(u)}-\max_{u\in K}{h_2(u)} &= \max_{u\in K}{[h_1(u)-h_2(u)+h_2(u)]}-\max_{u\in K}{h_2(u)}\\
&\le \max_{u\in K}{[h_1(u)-h_2(u)]}\\
&\le \max_{u\in K}{\abs{h_1(u)-h_2(u)}},
\end{align*}
and in the second case we have that
\begin{align*}
\max_{u\in K}{h_2(u)}-\max_{u\in K}{h_1(u)}&\le \max_{u\in K}{[h_2(u)-h_1(u)]}\\
&\le \max_{u\in K}{\abs{h_1(u)-h_2(u)}}.
\end{align*}
Now, using inequality~\eqref{eqn:max_holder}, we have that
\begin{align*}
\abs{\max_{u\in K}{H(t,y,u)}-\max_{u\in K}{H(t',y',u)}} &\le \max_{u\in K}{\abs{H(t,y,u)-H(t',y',u)}}\\
&\le L(\abs{t-t'}^\xi+\abs{y-y'}^\xi)
\end{align*}
and this completes the proof.
\end{proof}

\begin{corollary}
\label{corollary:Psi_holder}
Let us assume that the following hypotheses hold:
\begin{itemize}
\item $q(t,y,u)$ is bounded and H\"{o}lder-continuous in $(t,y)\in[0,T]\times\mathbb{R}$ uniformly in $u\in[0,1]$ with exponent $0<\xi\le1$;
\item $\lambda(t,y)$ is bounded and H\"{o}lder-continuous in $(t,y)\in[0,T]\times\mathbb{R}$ with exponent $0<\xi\le1$.
\end{itemize}
Then $\max_{u(t,y)\in[0,1]}{\Psi^u(t,y)}$ is H\"{o}lder-continuous in $(t,y)\in[0,T]\times\mathbb{R}$ with exponent $0<\xi\le1$.
\end{corollary}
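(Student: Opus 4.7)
The plan is to verify the hypotheses of Lemma~\ref{lemma:H(t,y,u)_holder} with $H(t,y,u)=\Psi^u(t,y)$ and $K=[0,1]$; that is, to prove that $\Psi^u(t,y)$ is H\"older-continuous in $(t,y)\in[0,T]\times\mathbb{R}$ with exponent $\xi$ \emph{uniformly} in $u\in[0,1]$. The conclusion then follows immediately from Lemma~\ref{lemma:H(t,y,u)_holder}.

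To do this I would split $\Psi^u(t,y)$ into
\[
A(t,y,u) = -\eta e^{R(T-t)}q(t,y,u),\qquad B(t,y,u) = \lambda(t,y)\,I(t,u),
\]
where
\[
I(t,u) = \int_0^D\bigl[1-e^{\eta(1-u)ze^{R(T-t)}}\bigr]\,dF_Z(z),
\]
and treat each term separately using the elementary fact that if $f$ and $g$ are bounded and H\"older-continuous with exponent $\xi$ then so is $fg$, with a H\"older constant controlled by $\|f\|_\infty$, $\|g\|_\infty$ and the individual H\"older constants.

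For the term $A$: the factor $e^{R(T-t)}$ is Lipschitz in $t\in[0,T]$ (hence $\xi$-H\"older for any $\xi\le 1$) and bounded by $e^{RT}$; the factor $q(t,y,u)$ is bounded and $\xi$-H\"older in $(t,y)$ uniformly in $u$ by hypothesis. Combining these gives that $A$ is $\xi$-H\"older in $(t,y)$ uniformly in $u\in[0,1]$.

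For the term $B$, the main step is to show that $I(t,u)$ is bounded and Lipschitz in $t$ uniformly in $u\in[0,1]$. Boundedness follows from the elementary estimate $|1-e^{\eta(1-u)ze^{R(T-t)}}|\le 1+e^{\eta z e^{RT}}$ together with the integrability condition $\mathbb{E}[e^{\eta Z e^{RT}}]<\infty$ in Assumption~\ref{assumption:EZ_finite}. For Lipschitz continuity in $t$, differentiation under the integral sign gives
\[
\frac{\partial}{\partial t}\Bigl(1-e^{\eta(1-u)ze^{R(T-t)}}\Bigr) = R\eta(1-u)z e^{R(T-t)}\,e^{\eta(1-u)ze^{R(T-t)}},
\]
whose absolute value is bounded by $R\eta e^{RT}\, z\, e^{\eta z e^{RT}}$, and the latter is $F_Z$-integrable by the second moment-type bound in Assumption~\ref{assumption:EZ_finite}. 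This gives a uniform (in $u$) Lipschitz, hence $\xi$-H\"older, estimate for $I(\cdot,u)$; combining with the bounded H\"older continuity of $\lambda(t,y)$ yields that $B$ is $\xi$-H\"older in $(t,y)$ uniformly in $u\in[0,1]$.

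The only delicate point I foresee is the justification of the uniform (in $u\in[0,1]$) integrable dominating function used above; once Assumption~\ref{assumption:EZ_finite} is invoked, the estimate is clean, so no real difficulty is expected beyond careful bookkeeping of the H\"older constants.
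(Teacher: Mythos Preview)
Your proposal is correct and follows essentially the same route as the paper's proof: reduce to Lemma~\ref{lemma:H(t,y,u)_holder} by showing $\Psi^u(t,y)$ is uniformly (in $u$) H\"older in $(t,y)$, split $\Psi^u$ into the $q$-term and the $\lambda\cdot I(t,u)$ term, and handle the integral term via the bound $R\eta e^{RT}\mathbb{E}[Ze^{\eta Z e^{RT}}]$ from Assumption~\ref{assumption:EZ_finite}. The only cosmetic difference is that the paper invokes Lagrange's mean value theorem for the integrand whereas you differentiate under the integral sign, which gives the same Lipschitz constant.
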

\begin{proof}
In view of Lemma~\ref{lemma:H(t,y,u)_holder}, it is sufficient to show that $\Psi^u(t,y)$ is H\"{o}lder-continuous in $(t,y)\in[0,T]\times\mathbb{R}$ uniformly in $u\in[0,1]$ with exponent $0<\xi\le1$. Let us recall equation~\eqref{eqn:Psi_u}:
\[
\Psi^u(t,y)= - \eta e^{R(T-t)}q(t,y,u) + \lambda(t,y)\int_0^D{\biggl[1-e^{\eta (1-u)ze^{R(T-t)}}\biggr]\,dF_Z(z)}
\]
Since $e^{R(T-t)}$ is differentiable and bounded on $t\in[0,T]$, our first hypothesis ensures that the first term $\eta e^{R(T-t)}q(t,y,u)$ is H\"{o}lder-continuous in $(t,y)\in[0,T]\times\mathbb{R}$ uniformly in $u\in[0,1]$ with exponent $0<\xi\le1$. For the second term we notice that it is a product of two bounded and H\"{o}lder-continuous functions, in fact
\begin{align*}
&\abs*{\int_0^D{e^{\eta (1-u)ze^{R(T-t)}}\,dF_Z(z)}-\int_0^D{e^{\eta (1-u)ze^{R(T-t')}}\,dF_Z(z)}}\\
&\le \mathbb{E}\biggl[\abs*{e^{\eta (1-u)Ze^{R(T-t)}}-e^{\eta (1-u)Ze^{R(T-t')}}}\biggr].
\end{align*}
Using Lagrange's theorem, there exists $\bar{t}\in[0,T]$ such that
\begin{align*}
&\mathbb{E}\biggl[\abs*{e^{\eta (1-u)Ze^{R(T-t)}}-e^{\eta (1-u)Ze^{R(T-t')}}}\biggr]\\
&\le \mathbb{E}\biggl[\abs*{R\eta (1-u)Ze^{R(T-\bar{t})}e^{\eta (1-u)Ze^{R(T-\bar{t})}}}\biggr]\abs{t-t'}\\
&\le R\eta e^{RT}\mathbb{E}\biggl[Ze^{\eta Ze^{RT}}\biggr]\abs{t-t'}
\end{align*}
and the proof is complete.
\end{proof}

The following theorem is based on the main result of~\cite{heath:pde}.

\begin{theorem}
\label{theorem:f_pde}
Suppose that the following conditions are satisfied:
\begin{enumerate}
\item $b(t,y)$ and $\gamma(t,y)$ are locally Lipschitz-continuous in $y$, uniformly in $t\in[0,T]$, i.e. for each $n=1,\dots$ there exists a positive constant $K_n$ such that
\[
\abs{b(t,y)-b(t,y')}+\abs{\gamma(t,y)-\gamma(t,y')}\le K_n\abs{y-y'} \qquad \forall y,y'\in[-n,n],t\in[0,T];
\]
\item for all couple $(t,y)\in[0,T]\times\mathbb{R}$ the solution $\{Y_{t,y}(s)\}_{s\in[t,T]}$ does not explode, i.e.
\[
\mathbb{P}[\sup_{s\in[t,T]}{Y_{t,y}(s)}<\infty]=1;
\]
for instance, it is true when we assume that for some positive constant $K_2$
\[
\abs{b(t,y)}+\abs{\gamma(t,y)} \le K_2(1+\abs{y}) \qquad \forall t\in[0,T],y\in\mathbb{R}
\]
together with the previous assumption;%
\footnote{See~\cite{pascucci:PDEMartingale}, Theorem 9.11, p. 281.}
\item there exists a constant $\delta_\gamma>0$ such that $\gamma(t,y)^2\ge\delta_\gamma$;
\item the intensity function $\lambda(t,y)$ is bounded and H\"{o}lder-continuous in $(t,y)\in[0,T]\times\mathbb{R}$ with exponent $0<\xi\le1$;
\item the reinsurance premium $q(t,y,u)$ is bounded and H\"{o}lder-continuous in $(t,y)\in[0,T]\times\mathbb{R}$ uniformly in $u\in[0,1]$ with exponent $0<\xi\le1$.
\end{enumerate}
Then the function $f(t,y)$ defined in~\eqref{eqn:f_feynmankac} satisfies the Cauchy problem~\eqref{eqn:PDEf} and there exists a unique classical solution to~\eqref{eqn:PDEf}. Moreover, we have that $f\in\mathcal{C}^{1,2}((0,T)\times\mathbb{R})$.
\end{theorem}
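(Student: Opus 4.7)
The plan is to mimic the proof of Theorem~\ref{theorem:g_pde}, i.e.\ to appeal to Theorem~1 and Lemma~2 of~\cite{heath:pde} on an exhausting sequence of bounded subdomains, and then to show that the candidate~\eqref{eqn:f_feynmankac} is the global classical solution. First I would take $D_n \doteq (-n,n)$ for $n=1,2,\dots$, so that $\mathbb{R}=\bigcup_n D_n$. On each cylinder $[0,T]\times\overline{D_n}$, hypothesis~1 gives the (locally) Lipschitz continuity of $b,\gamma$; hypothesis~3 gives the uniform ellipticity $\gamma(t,y)^2\ge\delta_\gamma$; this provides the parabolic framework of~\cite{heath:pde} on each $D_n$.

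The key analytic input is the regularity and boundedness of the potential
\[
V(t,y) \doteq \eta e^{R(T-t)}c(t,y)+\max_{u\in[0,1]}{\Psi^u(t,y)}.
\]
Hypotheses~4--5, together with Corollary~\ref{corollary:Psi_holder}, yield that $\max_{u\in[0,1]}{\Psi^u(t,y)}$ is bounded and H\"older-continuous in $(t,y)\in[0,T]\times\mathbb{R}$ with exponent $\xi$. Combining this with the H\"older regularity and boundedness of $c$ (which can be read from the proximity condition~\eqref{eqn:premia_bounded_prop} applied at $u=0,1$, since $q(t,y,u)$ is assumed bounded and H\"older-continuous by hypothesis~5), I conclude that $V\in C^{\xi,\xi/2}_{\mathrm{loc}}([0,T]\times\mathbb{R})$ and is globally bounded.

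Next, for each $n$, Theorem~1 of~\cite{heath:pde} provides a classical solution $f_n\in\mathcal{C}^{1,2}((0,T)\times D_n)\cap\mathcal{C}([0,T]\times\overline{D_n})$ of the corresponding boundary value problem, and Lemma~2 of~\cite{heath:pde} identifies it with the Feynman--Kac representation stopped at the exit time $\tau_n=\inf\{s\ge t:Y_{t,y}(s)\notin D_n\}$:
\[
f_n(t,y)=\mathbb{E}\bigl[e^{-\int_t^{T\wedge\tau_n}V(s,Y_s)\,ds}\,\phi_n(T\wedge\tau_n,Y_{T\wedge\tau_n})\bigm|Y_t=y\bigr],
\]
with $\phi_n$ the appropriate terminal-plus-lateral data. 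Using hypothesis~2 (non-explosion of $Y$), we get $\tau_n\to+\infty$ $\mathbb{P}$-a.s., and since $V$ is globally bounded the dominated convergence theorem gives
\[
\lim_{n\to\infty}f_n(t,y)=\mathbb{E}\bigl[e^{-\int_t^T V(s,Y_s)\,ds}\bigm|Y_t=y\bigr]=f(t,y),
\]
the right-hand side coinciding with~\eqref{eqn:f_feynmankac} because the optimizer $u^*$ makes $\Psi^{u^*}=\max_u\Psi^u$. Interior Schauder estimates for linear parabolic equations with H\"older coefficients and bounded H\"older potential transfer the local $\mathcal{C}^{1,2}$ regularity from $f_n$ to the limit, so that $f\in\mathcal{C}^{1,2}((0,T)\times\mathbb{R})$ and solves~\eqref{eqn:PDEf}. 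Uniqueness inside the class of bounded classical solutions is obtained by a standard Feynman--Kac/maximum-principle argument: if $\tilde f$ is another bounded classical solution, then It\^o's formula applied to $\tilde f(s,Y_s)e^{-\int_t^s V(r,Y_r)\,dr}$ yields a local martingale, which the boundedness of $\tilde f$ and $V$ make a true martingale, and taking expectations between $t$ and $T$ forces $\tilde f=f$.

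The main obstacle is the H\"older regularity of the potential $V$, because $\max_{u\in[0,1]}\Psi^u(t,y)$ arises from a non-smooth envelope of a family of functions indexed by $u$; this is exactly what Lemma~\ref{lemma:H(t,y,u)_holder} and Corollary~\ref{corollary:Psi_holder} are designed to handle, transferring the uniform-in-$u$ H\"older continuity of $\Psi^u$ to that of the maximum. A secondary subtlety is the passage from bounded-domain solvability (Theorem~1 of~\cite{heath:pde}) to a classical solution on the full strip $[0,T]\times\mathbb{R}$: this step relies crucially on the non-explosion of $Y$ and on the uniform bound on $V$ in order to apply dominated convergence and justify the interior Schauder-type estimates for the limit.
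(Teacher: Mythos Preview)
Your approach is essentially identical to the paper's: both invoke Theorem~1 and Lemma~2 of~\cite{heath:pde} together with Corollary~\ref{corollary:Psi_holder}, and your choice $D_n=(-n,n)$, the use of the non-explosion hypothesis, and the dominated-convergence passage to the global Feynman--Kac representation simply spell out what the paper compresses into a single sentence. One minor caveat: your claim that the H\"older regularity of $c(t,y)$ ``can be read from'' the proximity condition~\eqref{eqn:premia_bounded_prop} is not valid---that inequality gives boundedness of $c$ (set $u=0$, so $q=0$ and $|c|\le K$) but a uniform bound on $|q-c|$ does not transfer H\"older continuity from $q$ to $c$; the paper's own proof is silent on the $c$-term in the potential as well, so this is a shared loose end rather than a divergence from the paper's argument.
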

\begin{proof}
The proof is a consequence of Theorem 1 and Lemma 2 in~\cite{heath:pde} together with Corollary~\ref{corollary:Psi_holder}, observing that under our assumptions $\max_{u(t,y)\in[0,1]}{\Psi^u(t,y)}$ is continuous and bounded from above.
\end{proof}


\appendix

\section{Appendix}
\label{section:appendix}

\begin{proof}[Proof of Lemma~\ref{prop:random_measure}]
First, let us start considering all the $[0,D]$-indexed processes $\{H(t,z)\}_{t\in[0,T]}$ of this type:
\[
H(t,z) = \tilde{H}_t\mathbbm{1}_A(z) \qquad t\in[0,T],A\in[0,D]
\]
where $\{\tilde{H}_t\}_{t\in[0,T]}$ is a nonnegative and $\{\mathcal{F}_t\}$-predictable process. Using the independence between $\{N_t\}_{t\in[0,T]}$ and $\{Z_n\}_{n\ge1}$ we have that
\begin{align*}
\mathbb{E}\biggl[\int_0^T\int_0^DH(t,z)\,m(dt,dz)\biggr] &= \mathbb{E}\biggl[\sum_{n\ge1}{\tilde{H}_{T_n}\mathbbm{1}_A(Z_n)\mathbbm{1}_{\{T_n\le T\}}}\biggr]\\
&= \sum_{n\ge1}{\mathbb{P}[Z_n\in A]\mathbb{E}\biggl[\tilde{H}_{T_n}\mathbbm{1}_{\{T_n\le T\}}\biggr]}\\
&= \mathbb{P}[Z\in A]\mathbb{E}\biggl[\sum_{n\ge1}{\tilde{H}_{T_n}\mathbbm{1}_{\{T_n\le T\}}}\biggr]\\
&= \mathbb{P}[Z\in A]\mathbb{E}\biggl[\int_0^T{\tilde{H}_t\lambda_t\,dt}\biggr]\\
&= \mathbb{E}\biggl[\int_0^D\int_0^T{H(t,z)\,dF_Z(z)\lambda_t\,dt}\biggr]
\end{align*}
Using~\cite[App. A1, T4 Theorem, p.263]{bremaud:pointproc} this result can be extended to all nonnegative, $\{\mathcal{F}_t\}$-predictable and $[0,D]$-indexed process $\{H(t,z)\}_{t\in[0,T]}$ and this completes the proof.
\end{proof}

\begin{proof}[Proof of Proposition~\ref{nullcontrol_admissible}]
For any constant strategy $\alpha_t=(u,w)$ with $u\in[0,1]$ and $w\in\mathbb{R}$ we have that
\begin{align}
\label{eqn:valuefun_constantstrategy}
&\mathbb{E}[e^{-\eta X_{t,x}^\alpha(T)}\mid\mathcal{F}_t]=\notag\\
&= e^{-\eta xe^{R(T-t)}}\mathbb{E}[e^{-\eta \int_t^T e^{R(T-s)}[c(s,Y_s)-q(s,Y_s,u)]\,ds}e^{\eta \int_t^T\int_0^D e^{R(T-r)}(1-u)z\,m(dr,dz)}\mid\mathcal{F}_t]\times\notag\\
&\times\mathbb{E}[e^{-\eta \int_t^T e^{R(T-s)}w[\mu(s,P_s)-R]\,ds}e^{-\eta \int_t^T e^{R(T-s)}w\sigma(s,P_s)\,dW^{(P)}_s}\mid\mathcal{F}_t]
\end{align}
because of the independence between the financial and the insurance markets. In particular, for the null strategy $\alpha_t=(0,0)$, using the inequality~\eqref{eqn:premia_bounded_prop}, we have that
\begin{align*}
&\mathbb{E}[e^{-\eta X_{t,x}^{(0,0)}(T)}\mid\mathcal{F}_t]\le\\
&\le e^{-\eta xe^{R(T-t)}}e^{\eta \frac{K}{R}(e^{R(T-t)}-1)}\mathbb{E}[e^{\eta \int_t^T\int_0^D e^{R(T-r)}z\,m(dr,dz)}\mid\mathcal{F}_t].
\end{align*}
Now let us notice that
\begin{align*}
\mathbb{E}[e^{\eta e^{RT}\int_t^T\int_0^D z\,m(dr,dz)}\mid\mathcal{F}_t] &= \mathbb{E}[e^{\eta e^{RT}\sum_{i=N_t}^{N_T}{Z_i}}\mid\mathcal{F}_t]\\
&= \sum_{n\ge N_t}{\mathbb{E}[e^{\eta e^{RT}\sum_{i=N_t}^{n}{Z_i}}\mid\mathcal{F}_t]\,\mathbb{P}[N_T=n\mid\mathcal{F}_t]}\\
&= \sum_{n\ge N_t}{\mathbb{E}\biggl[\prod_{i=N_t}^{n}{e^{\eta e^{RT}Z_i}}\mid\mathcal{F}_t\biggr]\,\mathbb{P}[N_T=n\mid\mathcal{F}_t]}\\
&= \sum_{n\ge N_t}{\mathbb{E}[e^{\eta e^{RT}Z}\mid\mathcal{F}_t]^{(n-N_t)}\,\mathbb{P}[N_T=n\mid\mathcal{F}_t]}\\
&= \sum_{n\ge 0}{\mathbb{E}[e^{\eta e^{RT}Z}]^n\,\mathbb{P}[N_T-N_t=n\mid\mathcal{F}_t]}\\
&= \sum_{n\ge 0}{\mathbb{E}[e^{\eta e^{RT}Z}]^n\,\mathbb{E}\biggl[\frac{\bigl(\int_t^T \lambda_s\,ds\bigr)^n}{n!}e^{-\int_t^T \lambda_s\,ds}\mid\mathcal{F}_t\biggr]}\\
&= \mathbb{E}\bigl[e^{(\mathbb{E}[e^{\eta e^{RT}Z}]-1)\int_t^T \lambda_s\,ds}\mid\mathcal{F}_t\bigr]<\infty
\qquad\Braket{\mathbb{P}=1}
\end{align*}
because of the Assumption~\ref{assumption:EZ_finite}.
\end{proof}

\begin{proof}[Proof of Lemma~\ref{lemma:constant_strategies_prop}]
Assume that there exists a positive constant $K'$ such that
\[
\abs{\mu(t,p)}+\sigma(t,p)\le K' \qquad\forall(t,p)\in[0,T]\times(0,+\infty).
\]
From the proof of the Proposition~\ref{nullcontrol_admissible} (see above), we know that for any constant strategy $\alpha_t=(u,w)$ with $u\in[0,1]$ and $w\in\mathbb{R}$ the equation~\eqref{eqn:valuefun_constantstrategy} holds. Now, using the inequality~\eqref{eqn:premia_bounded_prop}, we have that
\begin{align*}
&\mathbb{E}[e^{-\eta X_{t,x}^\alpha(T)}\mid\mathcal{F}_t]\le\\
&\le e^{-\eta xe^{R(T-t)}}e^{\eta \frac{K}{R}(e^{R(T-t)}-1)}\mathbb{E}[e^{\eta \int_t^T\int_0^D e^{R(T-r)}(1-u)z\,m(dr,dz)}\mid\mathcal{F}_t]\times\\
&\times \mathbb{E}[e^{-\eta \int_t^T e^{R(T-s)}w[\mu(s,P_s)-R]\,ds}e^{-\eta \int_t^T e^{R(T-s)}w\sigma(s,P_s)\,dW^{(P)}_s}\mid\mathcal{F}_t]\\
&\le C\,e^{-\eta x}e^{\eta \frac{K}{R}(e^{R(T-t)}-1)}\mathbb{E}[e^{\eta e^{RT}\int_t^T\int_0^D z\,m(dr,dz)}\mid\mathcal{F}_t]\,
\mathbb{E}[e^{-\eta we^{RT} \int_t^T \sigma(s,P_s)\,dW^{(P)}_s}\mid\mathcal{F}_t]
\end{align*}
where $C$ is a positive constant and the first expectation is finite because of the proof of the Proposition~\ref{nullcontrol_admissible}.\\
Now let us define the stochastic process $\{h_t\}_{t\in[0,T]}$ as
\[
h_t = \eta w e^{RT}\sigma(t,P_t)
\]
and set
\[
L_t = e^{-\int_0^t h_s\,dW^{(P)}_s-\frac{1}{2}\int_0^t h_s^2\,ds}
\]
Since $h_t$ is bounded, the Novikov condition is satisfied:
\[
\mathbb{E}[e^{\frac{1}{2}\int_0^T h_s^2\,ds}]<\infty.
\]
This allows us to introduce a new probability measure $\hat{\mathbb{P}}$ using the change of measure given by
\[
L_t = \frac{d\hat{\mathbb{P}}}{d\mathbb{P}}\bigg|_{\mathcal{F}_t}.
\]
Using the Kallianpur-Striebel formula, we obtain that
\begin{align*}
\mathbb{E}[e^{-\eta we^{RT} \int_t^T \sigma(s,P_s)\,dW^{(P)}_s}\mid\mathcal{F}_t]
&=\mathbb{E}[e^{-\int_t^T h_s\,dW^{(P)}_s}\mid\mathcal{F}_t]\\
&=\frac{\mathbb{E}[L_Te^{\frac{1}{2}\int_t^T h_s^2\,ds}\mid\mathcal{F}_t]}{L_t}\\
&\le \mathbb{E}^{\hat{\mathbb{P}}}[e^{\frac{1}{2}\int_t^T h_s^2\,ds}\mid\mathcal{F}_t]<\infty
\end{align*}
and the proof is complete.
\end{proof}

\begin{proof}[Proof of Lemma~\ref{lemmagenerator}]
Looking at~\eqref{eqn:wealth_proc}, we apply It\^o's formula to the stochastic process $f(t,X_t^\alpha,Y_t,P_t)$:
\[
f(t,X_t^\alpha,Y_t,P_t) = f(0,X_0^\alpha,Y_0,P_0) + \int_0^t{\mathcal{L}^\alpha f(s,X_s^\alpha,Y_s,P_s)\,ds}+m_t
\]
where
\begin{multline}
\label{eqn:mt_prop}
m_t=\int_0^t{w_s\sigma(s,P_s)\frac{\partial{f}}{\partial{x}}(s,X_s^\alpha,Y_s,P_s)\,dW^{(P)}_s}\\
+\int_0^t{P_s\sigma(s,P_s)\frac{\partial{f}}{\partial{p}}(s,X_s^\alpha,Y_s,P_s)\,dW^{(P)}_s}
+\int_0^t{\gamma(s,Y_s)\frac{\partial{f}}{\partial{y}}(s,X_s^\alpha,Y_s,P_s)\,dW^{(Y)}_s} \\
+ \int_0^D\int_0^t{\biggl[f(s,X_s^\alpha-(1-u)z,Y_s,P_s)-f(s,X_s^\alpha,Y_s,P_s)\biggr]\biggl(m(ds,dz)-\lambda(s,Y_s)\,dF_Z(z)\biggr)}.
\end{multline}
We only need to prove that this is an $\{\mathcal{F}_t\}$-martingale.\\
Let us observe that
\begin{align*}
&\mathbb{E}\biggl[\int_0^T{\biggl(w_s\sigma(s,P_s)\frac{\partial{f}}{\partial{x}}(s,X_s^\alpha,Y_s,P_s)\biggr)^2\,ds}\biggr] <\infty \\
&\mathbb{E}\biggl[\int_0^T{\biggl(P_s\sigma(s,P_s)\frac{\partial{f}}{\partial{p}}(s,X_s^\alpha,Y_s,P_s)\biggr)^2\,ds}\biggr] <\infty \\
&\mathbb{E}\biggl[\int_0^T{\biggl(\gamma(s,Y_s)\frac{\partial{f}}{\partial{y}}(s,X_s^\alpha,Y_s,P_s)\biggr)^2\,ds}\biggr] <\infty
\end{align*}
because all the partial derivatives are bounded and using, respectively, the definition of the set $\mathcal{U}$,~\eqref{eqn:solutionP} and~\eqref{eqn:solutionY}.\\
Thus the first three integrals in~\eqref{eqn:mt_prop} are well defined and, according to the It\^o integral theory, they are martingales. Finally, the jump term in~\eqref{eqn:mt_prop} is a martingale too, being the function $f$ bounded.
\end{proof}

\section*{Acknowledgements}
The authors would like to thank Prof. Cristina Caroli Costantini for helpul technical suggestions.

\newpage
\bibliographystyle{apalike}
\bibliography{bib/biblio}

\end{document}